\newcolumntype{L}[1]{>{\raggedright\let\newline\\\arraybackslash\hspace{0pt}}m{#1}}
\newcolumntype{C}[1]{>{\centering\let\newline\\\arraybackslash\hspace{0pt}}m{#1}}
\newcolumntype{R}[1]{>{\raggedleft\let\newline\\\arraybackslash\hspace{0pt}}m{#1}}
\begin{document}\sloppy
	
	\title{Tracking Mobile Intruders in an Art Gallery: Guard Deployment Strategies, Fundamental Limitations, and Performance Guarantees
	}
	
	
	\author{Guillermo J. Laguna \and Sourabh Bhattacharya   
	}
	
	
	\institute{Guillermo J. Laguna \at
		Department of Mechanical Engineering \\
		Iowa State University\\
		Ames, IA USA
		Tel.: +521-462-1905156\\
		\email{gjlaguna@iastate.edu}           
		\and
		Sourabh Bhattacharya \at
		Department of Mechanical Engineering \\
		Iowa State University\\
		Ames, IA USA
		\email{sbhattac@iastate.edu}
	}

	\date{Received: date / Accepted: date}

	\maketitle
	
	\begin{abstract}
		This paper addresses the problem of tracking mobile intruders in a polygonal environment. We assume that a team of diagonal guards is deployed inside the polygon to provide mobile coverage. First, we formulate the problem of tracking a mobile intruder inside a polygonal environment as a multi-robot task allocation (MRTA) problem. Leveraging on guard deployment strategies in art gallery problems for mobile coverage, we show that the problem of finding the minimum speed of guards to persistently track a single mobile intruder is NP-hard. Next, for a given maximum speed of the intruder and the guards, we propose a technique to partition a polygon, and compute a feasible allocation of guards to the partitions. We prove the correctness of the proposed algorithm, and show its completeness for a specific class of inputs. We classify the guards based on the structural properties of the partitions allocated to them. Based on the classification, we propose motion strategy for the guards to track the mobile intruder when it is located in the partition allocated to the guard. Finally, we extend the proposed technique to address guard deployment and allocation strategies for non-simple polygons and multiple intruders.
	\end{abstract}
	\begin{keywords} {Target tracking, Mobile coverage, Art gallery, Multi-robot task allocation, Sliding cameras}
	\end{keywords}
\section{Introduction}
\label{sec:intro}
Security is an important concern in infrastructure systems. For decades, autonomous mobile robots have been utilized as surveillance \cite{Mittal:2016}\cite{Witwicki:2017} and crime-fighting agents for barrier assessments \cite{Theodoridis:2012}\cite{Ciccimaro:1999}, intruder detection \cite{Tuna:2012}\cite{Bazydlo:2017}, building virtual terrains or maps \cite{Oriolo:1998}\cite{Montemerlo:2006}, neutralizing explosives \cite{Majerus:2010}\cite{Nicoud:1995}, and recognizing abnormal human behaviors \cite{Rezazadegan:2017}\cite{Popoola:2012}. Such robots have been designed with the ability to counter threats, limit risks to personnel, and reduce manpower requirements in hazardous environments \cite{Lee:2003}\cite{Yamamoto:1992}\cite{Theodoridis:2012}. Deployment of these surveillance platforms in teams has given rise to challenging problems in collaborative sensing and decision-making. Motivated from recent surge in interest in autonomous surveillance, we address an asset protection problem in which a team of mobile sensors collaborate to track suspicious mobile entities to secure an environment. 

Although advanced electronic and biometric techniques can be used to secure facilities, vision-based monitoring is widely used for persistent surveillance. The idea is to visually cover the environment in order to obtain sufficient information so that appropriate measures can be taken to secure the area in case of any suspicious/malicious activity. The general formulation of a tracking problem consists of a team of autonomous sensing platforms, called {\it observers}, that visually track mobile entities, called {\it targets}. In this work, we consider the aforementioned formulation in the presence of features in the environment that can occlude the targets from the observers, for example, presence of reflex vertices on the boundary of the environment and obstacles. Precocious planning and coordination between observers can prevent the intruders from escaping the visual footprint of the observers around such occlusions.

Deploying a network of autonomous sensing platforms has been an active area of research in multi-robot systems (MRS). Multi-robot systems have emerged as an important area of research in robotics due to  their potential applications in several areas, for example, autonomous sensor networks \cite{Ghosh:2017}, building surveillance \cite{Dipaola:2010}, transportation of large objects \cite{Sakuyama:2012}, air and underwater pollution monitoring \cite{Dunbabin:2012}, forest fire detection \cite{Afzaal:2016}, transportation systems \cite{Matsuhira:2010}, or search and rescue after large-scale disasters \cite{Ko:2009}. Even problems that can be handled by a single multi-skilled robot may benefit from the alternative usage of a robot team, since robustness and reliability can often be increased by combining several robots which are individually less robust and reliable. In case of target tracking, multiple points of view from multiple robots add extra information on the target resulting in a better estimate of its position. However, the uncertainty in the future actions of the target, and the tight coupling between sensing, coordination and control within the team of mobile observers gives rise to challenging problems in multi-robot motion planning.

A simple solution to the tracking problem in bounded environments is to cover the environment with sufficient number of observers. This leads to the {\it art gallery problem}, a well-studied topic in computational geometry.  In the classical art gallery problem, the goal is to determine the number of guards needed to visually cover a bounded polygon \cite{Fisk:1978}. A guard can only see the portion of the polygon unobstructed by the boundary of the polygon or by internal obstacles. Over the years several variants of the problem have been studied based on the shape of the polygon (orthogonal \cite{Kahn:1983,Durocher:2017,Durocher:2014,Durocher:2015}, monotone \cite{Deberg:2017}, etc), the type of guards employed (static \cite{O'Rourke:1987,Kahn:1983,Hoffmann:1990,Czyzowicz:1994,Urrutia:2000}, mobile \cite{O'Rourke:1983,Toussaint:1982,Colley:1995,Kay:1970}), and the notion of visibility ($k$-transmitters \cite{Mahdavi:2014}, multi-guarding \cite{Fazli:2010,Fazli:2010a} etc). Stationary guards can either be point guards (placed anywhere inside the polygon) or vertex guards (restrict them only to the vertices of the polygon). It has been shown that the problem of computing the minimum number of stationary guards required to cover a simply connected polygon is NP-hard \cite{Lee:1986}.  In \cite{O'Rourke:1987}, it is shown that $\lfloor n/3 \rfloor$ static guards with omni-directional field-of-view is sufficient, and sometimes necessary to cover the entire polygon, where $n$ is the number of vertices of the polygon representing the environment. Efforts to obtain a bound on the minimum number of guards required to cover a polygon for special cases include approximation \cite{Ghosh:2010} and heuristic techniques \cite{Amit:2010}. 

The notion of mobile guards was first introduced by Toussaint \cite{Toussaint:1982} where each guard can travel back and forth along a segment inside the polygon, and every point in the polygon must be seen by at least one guard at some point of time along its path. Since the guards do not visually cover the entire polygon at all times, this notion of coverage is called {\it mobile coverage}. In \cite{O'Rourke:1983} it is proved that $\lfloor n/4 \rfloor$ diagonal guards are sufficient to provide mobile coverage. Therefore, if the guards have sufficient speed, at most $\lfloor n/4 \rfloor$ of them are required to track a mobile intruder in a polygonal environment. This naturally leads to the following question: What is the minimum speed required for the diagonal guards to track a mobile intruder in the environment with known maximum speed? The answer depends on the coordination and control strategy used by the guards.

The need for coordination arises in multi-robot systems (MRS) when individual agents work together to complete a common task \cite{Barnes:1991}. Depending on the communication scheme adopted by the MRS, coordination can be explicit or implicit. As the name suggests, explicit communication requires the presence of an on-board communication module on each robot of the MRS. On the other hand, implicit communication is generally contextual, based on the sensor data available to a robot regarding its teammates. Due to significant improvements in mobile communication hardware, the coordination scheme prevalent in current MRSs is predominantly explicit in nature.  In this work, we propose an explicit coordination scheme for the team of observers. Our scheme explicitly relies on the location of the observers relative to each other thereby building a connection between the geometric aspects of the team formation, and communication topology required for persistent tracking.

In this work, we formulate the tracking problem as a multi-robot task allocation problem (MTAP) \cite{Gerkey:2004a}. The allocation tries to balance the workload among the observers thereby minimizing the speed required to ensure tracking. The main contributions of this work are as follows:

\begin{enumerate}
	\item We investigate a variant of the multi-robot target tracking problem in which observers are constrained to move along diagonals of the polygonal environment. We leverage guard deployment strategies proposed for mobile coverage \cite{O'Rourke:1983} to design deployment and tracking strategies for multiple observers. To the best of our knowledge, our work is the first to build a connection between the well known art gallery problem which deals with coverage and multi-robot tracking. 
	\item  We propose an algorithm that jointly partitions a polygon, and allocates observers to track a target inside the partition. The algorithm is a solution to a resource allocation problem in which the observers are resources for tracking that need to be assigned to partitions of the polygon. The joint partitioning and allocation algorithm is based on the triangulation of the polygon, and therefore, can be extended to polygon with holes. Under the restriction that a single observer can be allocated to each partition, we show that the problem of finding the minimum observer speed required to track an intruder is NP-hard. 
	\item We present a taxonomy of the observers based on their task allocation, and the level of cooperation needed from other members of the team to perform the tracking task. We introduce the notion of {\it critical curves} to construct activation strategies for the mobile observers.  
	
	\item We derive the maximum number of targets that can be tracked using the deployment and partitioning algorithm proposed in this work.
\end{enumerate}


The paper is organized as follows. Section \ref{sec:rel} presents a brief description of the related work. Section \ref{sec:prob} presents the problem formulation and a deployment strategy for the guards. Section \ref{sec:alloc} formulates the tracking problem as a multi-robot task allocation problem. Section \ref{sec:gen} presents a partitioning technique for the polygon, and an allocation algorithm to assign the partitions to the guards to track a single intruder in the environment. Section \ref{sec:holes} addresses the case of non-simple polygons. Section \ref{sec:multi} extends the proposed deployment and allocation algorithm to address the case of multiple intruders in the environment. Section \ref{sec:conclusion} presents the conclusions and future work. 

\section{Related Work}
\label{sec:rel}
The multi-robot target-tracking problem was originally introduced by Parker and Emmons in \cite{Parker:1997}. The authors proposed the framework of Cooperative Multi-robot Observation of Multiple Moving Targets (CMOMMT) for a team of observers that simultaneously maximize the number of targets under observation, and the duration of observation of each target. It is shown that the CMOMMT problem is NP-hard. Consequently, several variants of CMOMMT have been studied to propose implementable solutions, for example, Approximate CMOMMT \cite{Parker:2002}, personality-CMOMMT \cite{Ding:2006}, behavioral-CMOMMT \cite{Kolling:2006,Kolling:2007}, formation-CMOMMT \cite{Werger:2000}, to name a few. Alternate formulations of the CMOMMT based on particle filtering \cite{Spletzer:2003} and mixed non-linear integer programming formulation \cite{Xu:2013a} have also been proposed in the past. These frameworks lead to a numerical approach for generating the trajectory of the observers.

In the past, several variants of the original tracking problem posed in \cite{Parker:1997} have been addressed. In \cite{Isler:2005}, the authors analyze the {\it focus of attention} problem \cite{Goossens:2016} for specific formations of the observers, and provide approximation algorithms for optimal target allocation. Dames et al. address the problem of detecting, localizing and tracking a team of targets with unknown and time-varying cardinality \cite{Dames:2018}. In \cite{Hausman:2015}, the authors address the cooperative control of a team of UAVs that try to solve the joint problem of self localization and tracking with on-board sensors. In \cite{Jung:2002}, a region-based coarse approach is proposed to simultaneously observe several mobile targets. The observers use a local method to maximize the number of observed targets. The proposed technique does not require any communication among the robots. In \cite{Markov:2007}, CMOMMT is addressed for cooperative targets. The targets are able to communicate which allows active cooperation through sharing data. Thus the average observation time is minimized through a force field approach. In \cite{Tang:2005}, a team of winged UAVs tries to minimize the average time elapsed between two consecutive observations of each member of a group of targets. An optimal control scheme is used to obtain the motion strategy of the observers.

In this work, we formulate the problem of tracking as a multi-robot task allocation problem. Multi-robot task allocation (MRTA) can be considered as an instance of the well-known optimal assignment problem. In \cite{Nam:2014}, algorithms to solve the matching problem for weighted bipartite multi-graphs are used to solve MTAP. In \cite{Gerkey:2003} a domain-independent taxonomy of multi-robot task allocation problems is presented. They also analyzed and compared some iterated assignment architectures: ALLIANCE \cite{Parker:1999}, BLE \cite{Werger:2000}, and M+ \cite{Botelho:1999} and some on-line assignment architectures: MURDOCH (auction-based MRTA) \cite{Gerkey:2001}, first-price auctions (market-based MRTA) \cite{Stentz:1999} and dynamic role assignment \cite{Chaimowicz:2002}, for MRTA, respectively. In \cite{Lerman:2006}, the authors present a mathematical modeling and analysis of the collective behavior of dynamic task allocation. In \cite{Yan:2011}, a lightweight and robust multi-robot task allocation approach based on trade rules in market economy is presented. Other strategies are based on vacancy chains \cite{Dahl:2009}, auction-based mechanisms \cite{Hanna:2005} and distributed market-based coordination \cite{Michael:2008}.

In this work, we assume that the observers are omni-directional cameras that slide along the diagonals of the polygon. 
In \cite{Durocher:2017}, the authors introduce the problem of placing sliding cameras on the boundary of the polygon for mobile coverage. Unlike the standard pin-hole camera model, a sliding camera in \cite{Durocher:2017} can see a point $p_1$ inside the environment if there exists a point $p_2$ along its trajectory such that the segment $p_1p_2$ is perpendicular to the trajectory of the guard and $p_1p_2$ is fully contained in the environment \cite{Durocher:2013}. The objective in \cite{Durocher:2017} is to find the minimum number of sliding cameras that can provide mobile coverage, referred to as the minimum sliding camera (MSC) problem. This problem is a variant of the art gallery problem which is NP-hard even for orthogonal polygons \cite{Schuchardt:1995}. However, it has been proven to be APX-hard on simple polygons \cite{Eidenbenz:1998}. Polynomial time approximation algorithms have been developed for orthogonal polygons \cite{Durocher:2017,Durocher:2014,Durocher:2015}. In \cite{Durocher:2013}, it is shown that MSC is NP-hard for non-simple polygons. For monotone orthogonal polygons, a linear time solution to MSC is presented in \cite{Deberg:2017}. In \cite{Durocher:2013}, the minimum length sliding camera problem is introduced. The objective is to find a set of sliding cameras that minimizes the total length of the trajectory while providing mobile coverage \cite{Almahmud:2016}. A polynomial time solution for orthogonal polygons with holes is presented in \cite{Durocher:2013}. Some variants include the concept of cameras with $k$-transmitters, introduced in \cite{Mahdavi:2014}, which allows the cameras to see through $k$ boundary walls of the polygon. In \cite{Mahdavi:2014}, the problem is shown to be NP-complete, and the authors present a $2$-approximation algorithm.

\section{Problem Statement}
\label{sec:prob}

Let $P$ be a simple $n$-vertex polygon representing a closed polygonal environment. First, we consider the case of a single unpredictable target $I$, referred to as \textit{intruder}, inside $P$. In Section \ref{sec:multi}, we extend our analysis to address multiple intruders inside the polygon. Let $\bar{v}_e$ and $x_I$ denote the maximum speed of the intruder and its location respectively at time $t$. In order to track the mobile intruder, a team of mobile observers, referred to as {\it guards}, is deployed inside the polygon. Let $\mathbb{G}$ denote the set of all guards. Each guard is equipped with an omni-directional camera with infinite range. The objective of the guards is to ensure that $I$ is visible to at least one guard at all times. The guards have knowledge of $\bar{v}_e$. Each guard has a maximum speed $\bar{v}_g$ and the \textit{speed ratio} between the guards and the intruders is defined as $r= \bar{v}_g/\bar{v}_e$.


Next, we define some graph-theoretic concepts associated with a polygon and its partitioning. A {\it triangulation} of $P$ is defined as a partition of $P$ into a set of disjoint triangles, such that the vertices of the triangles are vertices of the polygon. In general, the triangulation of a polygon is non-unique. The edges of the triangles of the triangulation are called \textit{diagonals} \cite{O'Rourke:1983}, and they can be segments inside $P$ (internal diagonals) or edges of $P$. Thus, the triangulation of $P$ is trivially represented as a planar graph $G=G(P)$ called \textit{triangulation graph}. We assume that a triangulation of $P$ is given. Let $\mathbb{V}(G)$ denote the vertex set of $G$ (corresponds to the vertices of $P$), $\mathbb{E}(G)$ denote the edge set of $G$ (corresponds to the diagonals of the triangulation of $P$), and $\mathbb{T}(G)$ (triangle set of $P$) denote the faces of $G$. Clearly, there is a bijection between the set of vertices of $P$ and $\mathbb{V}(G)$. Also, there is a bijection between the set of diagonals of the triangulation of $P$ and $\mathbb{E}(G)$. Hence, we do not make any distinction between the following pairs: (i) vertices of $P$ and the vertices in $\mathbb{V}(G)$ (ii) diagonals of the triangulation of $P$ and the edges in $\mathbb{E}(G)$ (iii) triangles of the triangulation of $P$ and the faces in $\mathbb{T}(G)$. Let $G_D$ be the dual graph of $G$. Each vertex in $G_D$ corresponds to a face in $\mathbb{T}(G)$. An edge exists between a pair of vertices in $G_D$ if the triangles in $\mathbb{T}(G)$ that correspond to such pair of vertices share an edge. Since $P$ is simply connected, $G_D$ is a tree.

The objective is to keep the intruder in the line-of-sight of at least one guard at all times, which is ensured if the triangle in which the intruder is located is covered\footnote{In this work, we define a triangle to be \textit{covered} if and only if there is a guard located at the boundary of the triangle.} by at least one guard. Each $g_i \in \mathbb{G}$ is a {\it diagonal guard} i.e., it is constrained to move along a unique diagonal $h_i \in \mathbb{H}$, where $\mathbb{H} \subset \mathbb{E}(G)$. Let $l_i$ be the length of $h_i \in \mathbb{H}$. The endpoints of $h_i$ are denoted as $v_1(i)$ and $v_2(i)$. Guards $g_i$ and $g_k$ are \textit{neighboring guards} if $g_i$ and $g_k$ are incident\footnote{We say that a guard $g_i$ is incident to triangle $T$ if at least one of the endpoints of $h_i$ is a vertex of $T$.} to the same triangle. We define $\mathbb{G}(T_k) \subseteq \mathbb{G}$ as the set of guards incident to $T_k$. 


In this paper, the distance between two points $x,y \in P$ is defined as the length of the shortest path between $x$ and $y$ on the visibility graph\footnote{A visibility graph of a polygon is a graph whose nodes corresponds to the vertices of the polygon, and there is an edge between any two vertices if the segment joining them is contained inside the polygon.} constructed using the vertices of $P$ (and the vertices of internal obstacles), $x$ and $y$. It is denoted as $d(x,y)$. The distance between $x\in P$ and a set $R_1 \subset P$ is defined as $d(x,R_1)=\min \{d(x,p) : p \in R_1 \}$. The distance between two sets of points $R_1,R_2\subset P$ is defined as $d(R_1,R_2)= \min \{d(q,R_1) : q \in R_2 \}$.

\vspace{0.1in}
\noindent
{\bf Note: Appendix \ref{sec:appc} contains a list of important variables, and their definitions. }
\vspace{0.1in}

In the next section, we describe the deployment of the guards inside the polygon.

\subsection{Selection of Dominating Diagonals}
\label{subsec:dep}
In \cite{O'Rourke:1983}, it is shown that at most $ \lfloor n/4 \rfloor$ diagonals are sufficient to dominate\footnote{A triangulation graph $G$ is said to be \textit{dominated by a set of diagonals} $\mathbb{H} \subset \mathbb{E}(G)$ if at least one vertex of each triangle in $\mathbb{T}(G)$ is an endpoint of a diagonal in $\mathbb{H}$.} $G$. Hence, $\lfloor n/4 \rfloor$ diagonal guards are sufficient to provide mobile coverage. In this section, we describe the strategy proposed in \cite{O'Rourke:1987} to identify at most $\lfloor n/4 \rfloor$ \textit{dominating diagonals}  \footnote{A set of dominating diagonals is any set of diagonals that dominate a triangulation graph.} of a polygon's triangulation. In \cite{O'Rourke:1987}, it is shown that there exists a set of dominating diagonals of size $\lfloor n/4 \rfloor$ in every triangulation graph of a polygon with $n \geq 5$. The correctness of the strategy is based on the following results \cite{O'Rourke:1987}: (i) For any triangulation graph $G$ of a simple polygon $P$ with $n \geq 10$ edges, it is always possible to find a diagonal $d \in \mathbb{E}(G)$ that partitions $G$ into subgraphs $G_1$ and $G_2$ such that $G_1$ is the triangulation graph of a hexagon, heptagon, octagon or nonagon (we call these \textit{basic polygons}). (ii) Any triangulation graph of a heptagon (or any polygon with fewer sides) has one dominating diagonal, and any triangulation graph of an octagon or nonagon has two dominating diagonals.


Algorithm \ref{alg:guards} recursively partitions $G$ into triangulation subgraphs of basic polygons. At each iteration, the algorithm searches for a diagonal $d$ that separates a triangulation subgraph of a basic polygon (denoted as $G_p$) such that there is no other diagonal in $\mathbb{E}(G_p)$ (edge set of $G_p$) that can separate a subgraph of a smaller basic polygon. Additionally, for each subgraph $G_p$ obtained, a minimal set of diagonals $\mathbb{E}(G_p)$ is found such that the diagonals in the set can dominate all the triangles\footnote{A triangle is said to be \textit{dominated} if at least one of its vertices is an endpoint of a diagonal in $\mathbb{H}$.} in $\mathbb{T}(G_p)$ (set of faces of $G_p$) that are still not dominated. The process is repeated until the remaining non-partitioned subgraph has $9$ vertices or less. The first \textbf{while} cycle (Line $5$) finds a diagonal $d$ that partitions $G$ into a pair of triangulation subgraphs, such that one of those subgraphs ($G_p$) corresponds to a basic polygon and such that there is no other diagonal in $\mathbb{E}(G_p)$ (edge set of $G_p$) that can separate a subgraph of a smaller basic polygon. This can be completed in $O(n)$ time by traversing $G_D$ (dual graph of $G$).

$G_{pol}$ (Line $10$) is a tree such that each $v \in \mathbb{V}(G_{pol})$ is associated with each subgraph $G_p$ found in Line $6$, and one vertex corresponds to the remaining subgraph $G'$ after Line $9$. An edge $e \in \mathbb{E}(G_{pol})$ exists between vertices that correspond to a pair of triangulation subgraphs $G_p$ that share a common diagonal $d \in \mathbb{E}(G)$. In the second \textbf{while} loop the minimum set of diagonals that can cover the triangles in $\mathbb{T}(G_i)$ that are not dominated by other diagonals is found (set of \textit{appropriate} diagonals, Line $14$). The second \textbf{While} loop takes $O(n)$ time by using $G_{pol}$. Hence, Algorithm \ref{alg:guards} takes $O(n)$ time.

\begin{algorithm}
	\caption{Guard Deployment}
	\begin{algorithmic}[1]
		\State\textbf{Input}: $G$
		\State\textbf{Output}: $\mathbb{H}$
		\State $S_D \leftarrow \emptyset$ is the set of diagonals $d$
		\State $G' \leftarrow G$
		\While{$G'$ has $n \geq 10$ vertices}
		\State find $d$ that separates a triangulation subgraph $G_p$ from $G'$ such that there is no triangulation subgraph of $G_p$ that corresponds to a smaller basic polygon
		\State $G'$ becomes the subgraph obtained by removing $G_p$ excepting $d$ and its vertices
		\State add $d$ to $S_D$
		\EndWhile
		\State create $G_{pol}$ from $G$ using the diagonals in $S_D$ (all vertices in $\mathbb{V}(G_{pol})$ are unmarked)
		\While{there is an unmarked vertex in $G_{pol}$}
		\State $v_i \leftarrow$ unmarked vertex in $G_{pol}$ with at most one unmarked neighbor
		\State $G_i \leftarrow$ subgraph of $G$ that corresponds to $v_i$ 
		\State $H_{i} \leftarrow$ \textit{appropriate} diagonals of $\mathbb{E}(G_i)$
		\State add diagonals of $H_{i}$ to $\mathbb{H}$
		\State mark $v_i$
		\EndWhile
	\end{algorithmic}
	\label{alg:guards}
\end{algorithm}

We call $\mathbb{H}$ a \textit{deployment} of the guards, since each $g_i \in \mathbb{G}$ is assigned to a diagonal $h_i \in \mathbb{H}$ (it is deployed along the diagonal). In the next section, we address the problem of finding the minimum speed of the guards that can ensure tracking.


\section{A Multi-robot Task Allocation Problem}
\label{sec:alloc}

A sufficient condition to track the intruder is to cover at all times the triangle in which it lies. In the previous section, we described an algorithm to select at most $ \lfloor n/4 \rfloor$ diagonals of a polygon that can dominate the triangles of the triangulation of a polygon. Given sufficient speed, the guards assigned to the diagonals in $\mathbb{H}$ can move back and forth on their diagonals to cover the triangle in which the mobile intruder lies. This gives rise to a MRTA problem in which the task for each robot is to cover the triangles allocated to it whenever the intruder lies inside them. 

First, we consider the case in which a single guard is allocated to a triangle of the triangulation. An allocation $A:{\mathbb{G}}\rightarrow2^{\mathbb{T}(G)}$ maps each guard to a subset of triangles in the triangulation of the polygon. Let $\mathcal{A}$ denote the set of all possible allocations. We say that an allocation is \textit{complete} if there is no triangle with no guard allocated to it. Next, we classify the triangles of the triangulation of $P$ (the triangles in $\mathbb{T}(G)$) based on the number of incident guards and their position relative to the triangles. Refer to Figure \ref{fig:triangles}. The red segments in the figure represent the diagonals $h_i \in \mathbb{H}$ on which the guards move. In the subsequent figures, those diagonals are labeled as $g_i$ instead.

\begin{enumerate}
	\item Safe Triangle: A triangle $T_k \in \mathbb{T}(G)$ is called \textit{safe} if $g_i$ covers $T_k$ at all times regardless of the location of $x_I$. Clearly, if there is a guard $g_i \in \mathbb{G}$ deployed on one of the edges of $T_k$, the triangle is a safe one. We use $\mathbb{T}^{safe}(G)$ to denote the set of safe triangles. In Figure \ref{fig:triangles}, safe triangles are shaded in blue.
	\item Unsafe Triangle: A triangle $T_k \in \mathbb{T}(G) \backslash \mathbb{T}^{safe}(G)$ is called \textit{unsafe} if $|\mathbb{G}(T_k)|=1$, where $|\cdot|$ is the cardinality operator. In Figure \ref{fig:triangles}, unsafe triangles are shaded in orange.
	\item Regular Triangle: A triangle $T_k$ is \textit{regular} if it is neither safe nor unsafe. In Figure \ref{fig:triangles}, regular triangles are shaded in white.
\end{enumerate}
\begin{figure}[htb]
	\begin{center} 
		\includegraphics[width=0.62\linewidth,height=0.43\linewidth]{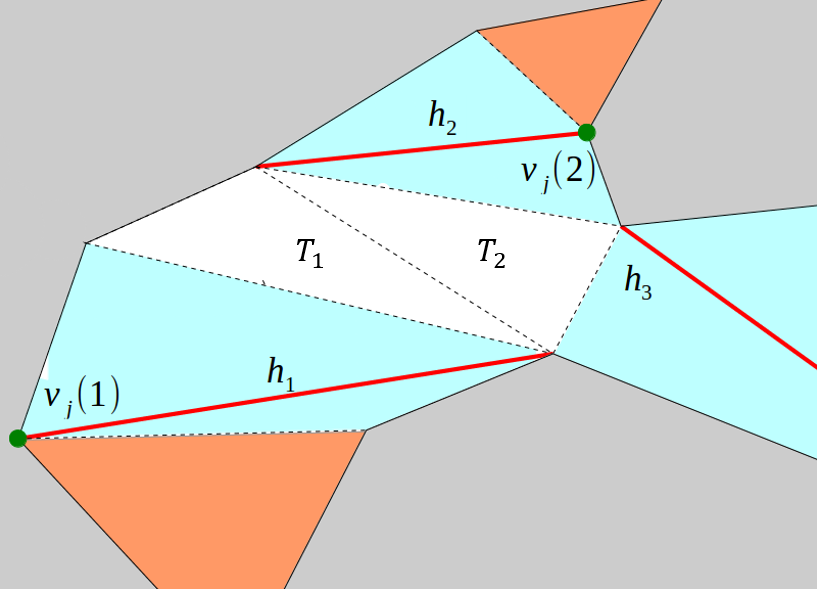}
	\end{center}
	\caption{Classification of a set of triangles in $\mathbb{T}(G)$. Notice that the regular triangles $T_1$ and $T_2$ can be covered by more than one guard.}
	\label{fig:triangles}
\end{figure}     

Let $\mathbb{T}_{\alpha}^{safe}(g_i)$ denote the set of triangles that are incident to the endpoint $v_{\alpha}(i)$ of $h_i$ ($\alpha \in \{ 1,2 \}$). Also, let $\overline{\mathbb{T}^{safe}}(G)$ $(=\mathbb{T}(G)\backslash\mathbb{T}^{safe}(G))$ denote the set of non-safe (Regular$+$Unsafe) triangles in $\mathbb{T}(G)$ and $\overline{\mathbb{T}_{\alpha}^{safe}}(g_i)$ denote the set of non-safe triangles that are incident to $v_{\alpha}(i)$. By definition, $\mathbb{T}_{\alpha}^{safe}(g_i)$ is always covered by $g_i$ for every allocation $A(\mathbb{G})\in\mathcal{A}$. We want to assign a $g_i \in \mathbb{G}$ to every $T \in \overline{\mathbb{T}^{safe}}(G)$ so that $g_i$ can cover $T$ when the intruder lies in it. Clearly, only the guards in $\mathbb{G}(T)$ can be assigned to $T$. The deployment strategy proposed in Section \ref{subsec:dep} guarantees that $|\mathbb{G}(T)| \geq 1$ for every $T \in \overline{\mathbb{T}^{safe}}(G)$. Therefore, an allocation that assigns a guard to each triangle in $\mathbb{T}(G)$ exists. Moreover, every unsafe triangle can only be assigned to the single guard incident to it.   


Given an allocation $A(\mathbb{G})$, there are two possibilities for each $g_i \in \mathbb{G}(T)$: 
\begin{enumerate}
	\item    All the non-safe triangles assigned to $g_i$ are in $\overline{\mathbb{T}_{\alpha}^{safe}}(g_i)$. In this case, $g_i$ is static, and can cover all the triangles assigned to it from $v_{\alpha}(i)$ regardless of the speed ratio $r$. 
	\item There are triangles assigned to $g_i$ in $\overline{\mathbb{T}_1^{safe}}(g_i)$ and  $\overline{\mathbb{T}_2^{safe}}(g_i)$. This implies that $g_i$ needs to be at $v_1(i)$ ($v_2(i)$) when the intruder $I$ lies in $\bigcup_{T \in \overline{\mathbb{T}_1^{safe}}(g_i)}{T}$ ($\bigcup_{T \in \overline{\mathbb{T}_2^{safe}}(g_i)}{T}$).  Therefore, $g_i$ has to move from one endpoint of $h_i$ to the other depending on $x_I$. Assume that $I$ is initially located in any triangle $T_j\in \overline{\mathbb{T}_1^{safe}}(g_i)$ assigned to $g_i$, and $I$ moves to a triangle $T_k\in \overline{\mathbb{T}_2^{safe}}(g_i)$ also assigned to $g_i$. As a result, the guard $g_i$, initially located at $v_1(i)$, moves to $v_2(i)$. Therefore, $g_{i}$ should reach $v_2(i)$ before the intruder can reach $T_k$. If the aforementioned condition is satisfied for every $T_j  \in \overline{\mathbb{T}_1^{safe}}(g_i)$ and every $T_k  \in \overline{\mathbb{T}_2^{safe}}(g_i)$ assigned to $g_i$, then $g_i$ can track $I$ when it is inside any triangle allocated to it.
\end{enumerate}

To summarize the above discussion, an allocation $A(\mathbb{G})$ should satisfy the following conditions:
\begin{enumerate}
	\item $A$ is complete.
	\item Unsafe triangles are allocated to the single guard incident to them.
	\item $\bar{v}_g\geq\max_i \bar{v}_e l_i/d(T_1^{alloc},T_2^{alloc})$, where $T_1^{alloc} \subset \overline{\mathbb{T}_1^{safe}}(g_i), T_2^{alloc} \subset  \overline{\mathbb{T}_2^{safe}}(g_i)$ are the sets of triangles allocated to $g_i$ that correspond to $v_1(i)$ and $v_2(i)$ respectively. 
\end{enumerate}
An allocation $A$ is called {\it feasible} if it satisfies the above conditions. Given the constraint that a single guard can be allocated to a non-safe triangle, we pose the following problem:



\vspace{0.1in}
\noindent
{\bf Problem 1:} Given $\bar{v}_e$, what is the minimum value of $\bar{v}_g$ and the corresponding allocation $A$ for which the intruder can be persistently tracked?

\vspace{0.1in}
$G_D$ encodes the adjacency between faces of $G$ based on common diagonals. Analogous to a dual graph, we define a {\it Guard Adjacency Graph} (GAG), denoted as $G^{\#}$, which encodes the adjacency between faces of the triangulation graph $G$ based on common guards. For a given triangulation graph $G$ of the polygon and deployment of the guards, $G^{\#}$ is constructed as follows. Each vertex of $G^{\#}$ corresponds to an non-safe triangle in $G$. An edge $e_{j,k}(g_i)$ exists between vertices $v_j$ and $v_k$ in $G^{\#}$ if there exists a guard $g_i$ such that $T_j \in \overline{\mathbb{T}_1^{safe}}(g_i)$ and $T_k \in \overline{\mathbb{T}_2^{safe}}(g_i)$. The weight of the edge $e_{j,k}(g_i)$ is given by $w_{j,k}(g_i)=l_{i}/d(T_j,T_k)$.


Next, we define terms related to allocation of guards to triangles. Given an allocation $A(\mathbb{G}) \in\mathcal{A}$, let $A(g_i)$ be the set of triangles assigned to $g_i$. Then we define the cost of the allocation of $g_i$ as $c(A(g_i))=\displaystyle\max_{T_j,T_k \in A(g_i)} \{w_{j,k}(g_i)\}$. The overall cost of the allocation for $G$ is defined as $c(A(\mathbb{G}))=\displaystyle\max_{g_i\in \mathbb{G}} \{c(A(g_i))\}$. Based on these definitions, Problem $1$ can be formulated as the following problem on $G$:\\\\
{\bf UNIALLOC}: Find a feasible allocation $A(\mathbb{G})\in\mathcal{A}$ for which $c(A(\mathbb{G}))$ is minimized.
\vspace{0.1in}

In order to prove the above theorem, first we define a set of representatives from $G^{\#}$. For each non-safe triangle $T_j \in \overline{\mathbb{T}^{safe}}(G)$, we define a set $S_g(T_j)=\{(j,g_i): g_i \in \mathbb{G}(T_j) \}$. $S_g(T_1),\ldots,S_g(T_{|\mathbb{V}(G_1)|-1}),S_g(T_{|\mathbb{V}(G^{\#})|})$ is a collection of disjoint sets. For any $x=(j,g_a) \in S_g(T_j)$ and $y=(k,g_b) \in S_g(T_k)$ with $j \neq k$ and guards $g_a,g_b$, define $c_2(x,y)$ as follows:

\begin{equation}
\label{eq:cost}
c_2(x,y)=\left \{ \begin{array}{cc}
w_{j,k}(g_a), &  a=b\\
0, & \mbox{ otherwise.}
\end{array} \right.
\end{equation}
$c_2(x,y)=0$ for $a \neq b$ models the fact that $g_a$ and $g_b$ can cover $T_j$ and $T_k$, respectively, by staying static at the corresponding endpoints. If $a=b$ and $e_{j,k}(g_a) \in \mathbb{E}(G^{\#})$, $T_j$ and $T_k$ are are located at opposite ends of $h_a$. In this case, $g_a$ should have a minimum speed of $c_2(x,y) =l_{i}/d(T_j,T_k)$ to cover triangles $T_j$ and $T_k$ if the intruder moves between them along the shortest path between them. Finally, $a=b$ and $e_{j,k}(g_a) \notin \mathbb{E}(G^{\#})$ implies that $T_j$ and $T_k$ are incident to the same endpoint of $h_a$. Since $g_a$ can cover both triangles from that endpoint, it can remain static. Therefore, $c_2(x,y) =0$ in that case.

We say that a set $S_{rep} \subset \displaystyle\bigcup_{j\in\{1,\ldots, |\mathbb{V}(G^{\#})|\}} S_g(T_j)$ is a \textit{set of representatives} if $|S_{rep} \cap S_g(T_j)|=1$ for all $j$. Based on the definition of a set of representatives, we define the following problem which is equivalent to UNIALLOC:

{\bf MAXREP}: Find the set $S_{rep}$ for which $\max \{ c_2(x,y): x,y \in S_{rep} \mbox{ and } x \neq y \}$ is minimized.

The following problem casts MAXREP as a decision problem:

{\bf GAMMAREP:} Given $\gamma \in \mathbb{R}_{\geq 0}$, does there exist a set of representatives $S_{rep}$ of size $|\mathbb{V}(G^{\#})|$ such that $c_2(x,y) \leq \gamma$ for all $x,y \in S_{rep}$ with $x \neq y$?

The next lemma proves that GAMMAREP is NP-hard which in turn implies that MAXREP is at least NP-hard. 


\begin{lemma}
	\label{lemma:gamma}
	GAMMAREP is NP-hard.
\end{lemma}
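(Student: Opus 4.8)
The plan is to prove NP-hardness of GAMMAREP by a polynomial-time reduction from a known NP-hard problem, and the natural candidate is 3-SAT or, more conveniently, a graph-coloring or constraint-satisfaction problem whose structure mirrors the ``pick one representative per set subject to pairwise constraints'' form of GAMMAREP. First I would observe what GAMMAREP really asks: with the threshold $\gamma$ fixed, the only pairs $(x,y)$ that can violate $c_2(x,y)\le\gamma$ are those with $x=(j,g_a)$, $y=(k,g_a)$ sharing the \emph{same} guard $g_a$ and with $T_j,T_k$ at opposite endpoints of $h_a$ and $w_{j,k}(g_a)=l_a/d(T_j,T_k)>\gamma$. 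Thus choosing a representative amounts to deciding, for each non-safe triangle, which incident guard ``claims'' it, and the feasibility constraint is that no single guard is forced to claim two triangles at opposite endpoints whose induced required speed exceeds $\gamma$. This is exactly a combinatorial selection problem with local pairwise conflicts, which is the hallmark of an NP-hard assignment/coloring instance.

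The key steps, in order, would be: (1) choose the source problem; I expect the cleanest route is to reduce from \textbf{3-SAT} by building a triangulated polygon (or, since GAMMAREP is stated purely in terms of $G^{\#}$ and the sets $S_g(T_j)$, directly building the abstract instance $\{S_g(T_j)\}$ together with weights $w_{j,k}$ and a value of $\gamma$). (2) For each Boolean variable, construct a guard/triangle ``gadget'' whose two admissible representative choices at a shared guard encode the truth values \emph{true}/\emph{false}, wiring the weights so that the two assignments of a variable are mutually exclusive exactly when they would force the shared guard to exceed speed $\gamma$. (3) For each clause, construct a gadget that is satisfiable (admits a consistent set of representatives with all $c_2\le\gamma$) if and only if at least one literal is set to satisfy it. (4) Argue that the total construction has size polynomial in the size of the formula, and that $\gamma$ and all the lengths/distances $l_i,d(T_j,T_k)$ can be realized (or at least consistently assigned as abstract edge weights) in polynomial time. (5) Prove the correctness equivalence in both directions: a satisfying assignment yields a conflict-free set of representatives of full size $|\mathbb{V}(G^{\#})|$, and conversely any such set of representatives decodes to a satisfying assignment.

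The main obstacle I anticipate is the \emph{geometric realizability} of the gadgets. Because GAMMAREP inherits its weights from an actual polygon triangulation --- the $l_i$ are diagonal lengths and the $d(T_j,T_k)$ are geodesic (visibility-graph) distances inside $P$ --- one cannot freely assign arbitrary weights; they must arise from a genuine simple polygon together with a valid triangulation and a dominating-diagonal deployment of the form produced in Section~\ref{subsec:dep}. Engineering variable- and clause-gadgets as polygonal sub-regions, connecting them so that the geodesic distances come out with the required inequalities relative to $\gamma$, and verifying that the deployment still assigns the intended guards to the intended triangles, is the delicate part. If a fully geometric reduction proves too cumbersome, the fallback is to prove hardness for the abstract version of GAMMAREP (arbitrary disjoint sets $S_g(T_j)$ with a symmetric cost $c_2$ of the stated ``same-guard conflict'' form) and then show separately that every such abstract instance is geometrically realizable; I would present whichever of these yields a cleaner argument.

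I would then close by noting that, since GAMMAREP is the decision version of MAXREP and MAXREP is equivalent to UNIALLOC (hence to Problem~1), the NP-hardness of GAMMAREP established by this reduction immediately gives the claimed hardness of the minimum-speed allocation problem.
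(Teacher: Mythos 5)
Your proposal is a plan for a reduction, not a reduction: steps (2) and (3) --- the variable and clause gadgets --- are exactly where all the content of a 3-SAT reduction lives, and you never specify them. Without concrete gadgets there is nothing to verify: you do not say how two triangles sharing a guard encode the two truth values of a variable, how a clause gadget forces at least one satisfying literal, or how the weights $l_a/d(T_j,T_k)$ are arranged relative to $\gamma$ so that the only conflicts are the intended ones. You correctly identify geometric realizability as the main obstacle, but you neither overcome it nor execute the fallback (hardness of the abstract version plus a realizability argument); you only announce that one of the two would be presented. As it stands the proposal establishes nothing.

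For comparison, the paper takes a much shorter and entirely combinatorial route: it reduces from $K$-clique. Given a graph $G_2$ with no self-loops and $|\mathbb{V}(G_2)|>|\mathbb{V}(G^{\#})|$, it builds sets $S_j=\{j\}\times\mathbb{V}(G_2)$ and a pairwise cost $c_3$ that is $0$ exactly when the two underlying vertices of $G_2$ are adjacent, and shows that a zero-cost set of representatives of size $|\mathbb{V}(G^{\#})|$ exists if and only if $G_2$ has a $|\mathbb{V}(G^{\#})|$-clique. This avoids gadget engineering entirely, at the price of working with the abstract ``sets plus pairwise cost'' formulation rather than costs induced by an actual polygon --- precisely the realizability issue you flag. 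So your instinct about what a fully geometric hardness proof would require is sound and in that respect more demanding than what the paper does, but the proof you were asked to supply is absent: you would need to either carry out the gadget construction and its two-directional correctness argument in full, or switch to the paper's clique-based encoding.
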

\begin{proof}
	We reduce the problem of finding a $K$-clique \textbf{\cite{Karp:1972}} in a graph (a NP-hard problem \cite{Chen:2005}) to GAMMAREP. Consider a graph without self-loops $G_2$ such that $|\mathbb{V}(G_2)| > |\mathbb{V}(G^{\#})|$. The problem of finding a $|\mathbb{V}(G^{\#})|$-clique in $G_2$ is NP-hard \cite{Garey:2002}. For each $v_j \in \mathbb{V}(G_2)$, we define a set $S_j=\{ j \} \times \mathbb{V}(G_2)$. For a pair of vertices $v,w \in \mathbb{V}(G_2)$, we define the following cost function:
	
	\begin{equation}
	\label{eq:cost2}
	c_3((j,v),(k,w))= \left\{ \begin{array}{cc} 0, & \mbox{ if an edge exists between $v$ and}\\& \mbox{$w$ in $G_2$} \\ 1, & \mbox{ otherwise} \end{array} \right.
	\end{equation}
	
	
	Given sets $S_j$ and the cost function $c_3$ stated above, we prove that a $|\mathbb{V}(G^{\#})|$-clique in $G_2$ exists if and only if a set of $|\mathbb{V}(G^{\#})|$ representatives $S_{rep} \subset \displaystyle\bigcup_{j\in\{1,\ldots, |\mathbb{V}(G^{\#})|\}} S_j$ such that $c_3((j,v),(k,w)) = 0 \text{ }\forall \text{ }(j,v),(k,w) \in S_{rep}$ can be found. 
	
	($\Rightarrow)$ First, assume that there exists such a $S_{rep}$ but there is no $|\mathbb{V}(G^{\#})|$-clique in $G_2$. By definition, for each pair $(j,v),(k,w) \in S_{rep}$, $j \neq k$, and $v \neq w$ ($v=w$ implies that $G_2$ has a self-loop which is not possible). Therefore, each $(j,v) \in S_{rep}$ is associated to a different $v \in \mathbb{V}(G_2)$. Since there is no $|\mathbb{V}(G^{\#})|$-clique in $G_2$, the subgraph induced by the vertices associated with $S_{rep}$ is not complete. Therefore, there is at least one pair $(j,v),(k,w) \in S_{rep}$ such that there is no edge shared between $v$ and $w$ in $G_2$ which implies $c_3((j,v),(k,w)) \neq 0$ (a contradiction). Therefore, if there exists $S_{rep}$ of size $|\mathbb{V}(G^{\#})|$ such that $c_3((j,v),(k,w)) = 0 $ for all $(j,v),(k,w) \in S_{rep}$ then a $|\mathbb{V}(G^{\#})|$-clique in $G_2$ exists. 
	
	($\Leftarrow$) Now assume that there is no $S_{rep}$ which meets the aforementioned constraints, but there is a $|\mathbb{V}(G^{\#})|$-clique in $G_2$. Since $G_2$ contains a $|\mathbb{V}(G^{\#})|$-clique, there is a subset of $|\mathbb{V}(G^{\#})|$ vertices such that its induced subgraph $\hat{G}_2$ is complete. Since $\hat{G}_2$ is complete, choosing $(j,v_j) \in S_j$ for each $v_j \in \mathbb{V}(\hat{G}_2)$ as an element of $S_{rep}$ implies that for any $(j,v_j),(k,v_k) \in S_{rep}$, an edge exists between $v_j$ and $v_k$ in $G_2$. Therefore, $c_3(v_j,v_k) =0$ for every $(j,v_j),(k,v_k) \in S_{rep}$ which contradicts the assumption that there is no $S_{rep}$ of size $|\mathbb{V}(G^{\#})|$ such that $c_3((j,v_j),(k,v_k)) = 0 $ for all $(j,v_j),(k,v_k) \in S_{rep}$. Therefore, if there exists a $|\mathbb{V}(G^{\#})|$-clique in $G_2$ then there is $S_{rep}$ of size $|\mathbb{V}(G^{\#})|$ such that $c_3((j,v),(k,w)) = 0 $ for all $(j,v),(k,w) \in S_{rep}$, and the reduction can be done in polynomial time. Thus, we have a polynomial-time reduction from $K$-clique to GAMMAREP.
\end{proof}

Based on the above discussion, we can state the following theorem.
\noindent
\begin{theorem}
	\label{theorem:unialloc}
	UNIALLOC is NP-hard.
\end{theorem}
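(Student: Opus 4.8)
The plan is to prove Theorem~\ref{theorem:unialloc} by chaining together the reductions that the paper has already set up. The theorem asserts that UNIALLOC is NP-hard, and the immediately preceding development has done almost all of the work: UNIALLOC was shown to be equivalent to MAXREP (both ask for a set of representatives minimizing the maximum pairwise cost $c_2$), and GAMMAREP is the natural decision version of MAXREP. Since Lemma~\ref{lemma:gamma} establishes that GAMMAREP is NP-hard via a reduction from the $K$-clique problem, the remaining task is to package these equivalences into a formal hardness argument for UNIALLOC.

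First I would make explicit the reduction chain $K\text{-clique} \leq_p \text{GAMMAREP} \leq_p \text{MAXREP} \equiv \text{UNIALLOC}$. The first link is exactly Lemma~\ref{lemma:gamma}. For the second link, I would invoke the standard relationship between an optimization problem and its threshold decision version: if one could solve MAXREP (which computes the minimum achievable value of $\max\{c_2(x,y) : x,y \in S_{rep},\, x\neq y\}$), then GAMMAREP is answered immediately by comparing that optimum against $\gamma$. Hence MAXREP is at least as hard as GAMMAREP, so MAXREP is NP-hard. For the final link I would point to the construction already given in the paper, where $S_g(T_j)$ encodes the guards incident to each non-safe triangle $T_j$, and $c_2$ is defined so that the cost of a set of representatives equals the cost $c(A(\mathbb{G}))$ of the corresponding feasible allocation. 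A choice of one element $(j,g_i)\in S_g(T_j)$ for each non-safe triangle is precisely an assignment of a guard to that triangle, and minimizing the maximum $c_2$ over representatives is exactly minimizing $c(A(\mathbb{G}))$ over feasible allocations; completeness and the unsafe-triangle constraint are automatically honored because $S_g(T_j)$ lists only incident guards and is nonempty by the deployment guarantee $|\mathbb{G}(T)|\geq 1$. This bijection gives the equivalence MAXREP $\equiv$ UNIALLOC, so UNIALLOC inherits NP-hardness.

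The main obstacle I anticipate is not any single hard inequality but rather making the equivalence between MAXREP and UNIALLOC airtight, since the two problems are phrased over different objects (representatives of disjoint sets versus allocations of guards to triangles). I would need to argue carefully in both directions that (i) every feasible allocation induces a valid set of representatives of size $|\mathbb{V}(G^{\#})|$ with equal cost, and (ii) every set of representatives induces a feasible allocation with equal cost, taking care that the $c_2$ cost function correctly captures the speed constraint $\bar{v}_g \geq \max_i \bar{v}_e l_i / d(T_1^{alloc}, T_2^{alloc})$ from the definition of a feasible allocation. The subtlety is that the optimal cost $c(A(\mathbb{G}))$ is a maximum over \emph{pairs} of triangles assigned to the same guard, which matches $c_2$ only because $c_2(x,y)=0$ whenever the two triangles are incident to the same endpoint of a diagonal or are covered by distinct guards; I would verify that this zero-cost bookkeeping exactly accounts for the static-guard cases enumerated earlier. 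Once this correspondence is confirmed, the NP-hardness of UNIALLOC follows directly, since a polynomial-time algorithm for UNIALLOC would yield one for MAXREP and hence for GAMMAREP, contradicting Lemma~\ref{lemma:gamma} unless $\mathrm{P}=\mathrm{NP}$.
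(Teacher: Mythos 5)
Your proposal follows essentially the same route as the paper: the paper's own proof is a one-line appeal to the chain $K$-clique $\leq_p$ GAMMAREP $\leq_p$ MAXREP $\equiv$ UNIALLOC, with Lemma~\ref{lemma:gamma} carrying the reduction from $K$-clique. You simply make explicit the decision-versus-optimization step and the representative-to-allocation correspondence that the paper leaves implicit, which is a faithful (and somewhat more careful) rendering of the intended argument.
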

\begin{proof}
	The proof follows from the fact UNIALLOC is equivalent to MAXREP which is at least NP-hard.
\end{proof}
\subsection{Suboptimal Algorithm for Computing Approximate minimum speed ratio}
\label{subsec:subop}
MAXCLIQUE refers to the problem of finding the maximum-sized clique in a graph. MAXCLIQUE is NP-hard \cite{Pardalos:1994}. An approximation algorithm for MAXCLIQUE can be used to obtain a suboptimal solution for Problem 1. The procedure is as follows. Let $G_3$ be a graph such that $\mathbb{V}(G_3)=\{v_j^i:  \exists S_g(T_j) \mbox{ such that } (j,g_i) \in S_g(T_j) \}$. Recall that $S_g(T_j)=\{(j,g_i): g_i \in \mathbb{G}(T_j) \}$ for all $1 \leq j \leq |\mathbb{V}(G^{\#})|$. Since $|\mathbb{V}(G^{\#})|=O(n)$ ($|\mathbb{V}(G^{\#})| \leq n-2$), and $ 1 \leq |S_g(T_j)| \leq |\mathbb{G}|=O(n)$, $|S_g(T_j)|=O(n)$. Therefore, $|\mathbb{V}(G_3)|=O(n^2)$. Additionally, let $\mathbb{E}(G_3)=\{e_{j,k}^{a,b}: v_j^a,v_k^b \in \mathbb{V}(G_3) \mbox{ with } j \neq k  \}$. Each edge $e_{j,k}^{a,b}$ has an associated weight defined as follows:

\begin{equation}
\label{eq:cost3}
w_{j,k}^{a,b}= \left\{ \begin{array}{cc} 0, & \mbox{ if } a \neq b\mbox{ or } \frac{l_a}{d(T_j,T_k)} \leq r\\ 1, & \mbox{ otherwise} \end{array} \right.
\end{equation}

We define $G_3'$ as a subgraph of $G_3$ such that $\mathbb{V}(G_3')=\mathbb{V}(G_3)$ and $\mathbb{E}(G_3') = \{ e_{j,k}^{a,b} \in \mathbb{E}(G_3): w_{j,k}^{a,b}=0  \}$. Therefore, $\mathbb{E}(G_3')$ consists of edges in $\mathbb{E}(G_3)$ with weights less than or equal to $r$. We can use any approximation algorithm for MAXCLIQUE to find a $|\mathbb{V}(G_1)|-$clique in $G_3'$ for a given $r$. Since $r_{min}=\min \{ r=\frac{l_i}{d(T_j,T_k)}: e_{j,k}(i) \in G^{\#}  \}$, we perform the aforementioned check for values of $r$ that correspond to the weights of the edges in $G^{\#}$. The suboptimal allocation corresponds to the minimum value of $w_{j,k}(g_i)$ for which a $|\mathbb{V}(G_1)|-$clique is found in the graph $G_3'$. If no such $w_{j,k}(g_i)$ is found, $r_{min}$ is equal to the minimum $w_{j,k}(g_i)$. Each vertex $v_j^i \in \mathbb{V}(G_3')$ in the $|\mathbb{V}(G_1)|-$clique corresponds to a distinct triangle in $T_j \in \overline{\mathbb{T}^{safe}}$, and each one of them corresponds to a $g_i \in \mathbb{G}$ which is the guard assigned to $T_j$. This gives an approximate optimal allocation $A(g_i)$ for each $g_i \in \mathbb{G}$. 


It has also been shown that approximating the MAXCLIQUE within a factor of $n^\epsilon$ for some $\epsilon>0$ is NP-hard \cite{Arora:1998,Hastad:1999}. In the past, probabilistic techniques \cite{Babel:1991,Balas:1996,Wood:1997,Pardalos:1994} have been proposed in the literature to find largest cliques without any guarantees on the optimality. \cite{Feige:2004} presents an approximation algorithm that finds a clique with an approximation ratio of $O(n (\log\log n)^2 /(\log n)^3)$ when the size of the maximum clique is between $n/ \log n$ and $n / (\log n)^3$. Since $\mathbb{V}(G_3)=O(n^2)$, where $n$ is the number of vertices of the polygon, the performance ratio in our case is $O(n^2 (\log (\log n))^2 /(\log n)^3)$.


Next, we consider the case in which more than one guard can be assigned to each non-safe triangle. In this case, we partition $T_j$ into disjoint regions, each covered by a unique guard (incident on $T_j$). The problem of computing the minimum guard speed ($\bar{v}_g$) when multiple guards can be assigned to a single triangle is as hard as UNIALLOC which itself is NP-hard. 

In the next section, we address the problem of allocating guards to triangles for a given maximum speed of the intruder and the guards.

\section{Computing a Feasible Allocation for Known Guard Speed}
\label{sec:gen}



In the previous section, we proved that the problem of finding the minimum speed of the guards that can ensure tracking is NP-hard. In this section, we propose a technique to search and compute a feasible allocation for given maximum speed of the intruder and guards. Therefore, the speed ratio $r$ is known. We address the general allocation problem in which multiple guards can be assigned to cover an unsafe triangle in the polygon. Specifically, we focus on techniques that partition the polygon, and activate guards to track an intruder located in its allocated partition. 

\vspace{0.1in}
Let $R_j^\alpha(i)$ denote the region inside triangle $T_j$ assigned to the guard $g_i \in \mathbb{G}(T_j)$ incident to vertex $v_{\alpha}(i)$ ($\alpha = \{1,2\}$). We define $S_R^{\alpha}(i)=\{ R_j^\alpha(i) | R_j^\alpha(i) \subseteq T_j\text{ }\in \overline{\mathbb{T}_\alpha^{safe}}(g_i) \}$ as the set of regions assigned to $g_i$ that can be covered from $v_{\alpha}(i)$. We define $\hat{U}_R^{\alpha}(i)= \displaystyle\bigcup_{R_j^{\alpha}(i) \in S_R^{\alpha}(i)} R_j^{\alpha}(i)$ as the union of the regions belonging to $S_R^{\alpha}(i)$. Let $\mathbb{R}^{alloc}(g_i)=\hat{U}_R^{1}(i) \cup \hat{U}_R^{2}(i)$ denote the region assigned to $g_i$. The cost associated with an assignment is defined as $c(\mathbb{R}^{alloc}(g_i))= l_{i}/d(\hat{U}_R^{1}(i),\hat{U}_R^{2}(i))$.

In this section, we address the following problem: 

\vspace{0.1in}
\noindent
{\bf Problem 2:} For the $\lfloor n/4 \rfloor$ deployment of guards inside a polygon described in Section \ref{subsec:dep} and a given $r$, determine $\mathbb{R}^{alloc}(g_i)$ for each $g_i \in \mathbb{G}$ such that $\max \{c(\mathbb{R}^{alloc}(g_i)): g_i \in \mathbb{G}  \} \leq r$, and the region in which the intruder lies can be covered by the guard allocated to it. 
\vspace{0.1in}

Since $c(\mathbb{R}^{alloc}(g_i))= l_{i}/d(\hat{U}_R^{1}(i)$,$\hat{U}_R^{2}(i)) \leq r$, $d(\hat{U}_R^{1}(i),\hat{U}_R^{2}(i)) \geq l_{i}/r$ for all $g_i \in \mathbb{G}$. Let $d_I^i=l_{i}/r$ denote the maximum distance traveled by the intruder during the time in which $g_i$ can travel from one endpoint of $h_i$ to the other.

In the next section, we present the procedure to find $\mathbb{R}^{alloc}(g_i)$.

\subsection{Sequential computation of partitions}
\label{subsec:cont}
The allocation algorithm proposed in the next section sequentially computes $\hat{U}_R^{\alpha}(i)$ for each guard $g_i$. The end point of $h_i$ at which $\hat{U}_R^{\alpha}(i)$ is computed first gets the label $\alpha=1$. Subsequently, $\hat{U}_R^{\alpha}(i)$ is computed at the other end point of $h_i$ which is assigned the label $\alpha=2$.

From the definition of $\hat{U}_R^{1}(i)= \displaystyle\bigcup_{ R_j^{1}(i) \in S_R^{1}(i)} R_j^{1}(i)$, it is clear that $R_j^{1}(i)$ must be defined for every $T_j \in \overline{\mathbb{T}_{1}^{safe}}(g_i)$ for computing $\hat{U}_R^{1}(i)$. Based on the classification of $T_j \in \overline{\mathbb{T}_{1}^{safe}}(g_i)$, we can have the following scenarios: 
\begin{enumerate}
	\item If every $T_j \in \overline{\mathbb{T}_{1}^{safe}}(g_i)$ is an unsafe triangle, $R_j^{1}(i)=T_j$ by definition since $T_j$ can only be covered by $g_i$. In this case, $\hat{U}_R^{1}(i)= \displaystyle\bigcup_{T_j \in \overline{\mathbb{T}_{1}^{safe}}(g_i)} T_j$. 
	
	\item If $T_j$ is a regular triangle, $R_j^{1}(i)$ can be computed only if the region allocated to other guards incident to $T_j$ is known. In this case, $R_j^{1}(i) =\displaystyle\bigcap_{g_k \in \mathbb{G}(T_j) \backslash \{ g_i \} }{\overline{R}_j^2(k)} \cap T_j$, where $\overline{R}_j^2(k)$ is the complement of $R_j^2(k)$.
	
\end{enumerate}

The following equation allows computing $\hat{U}_R^2(i)$ from $\hat{U}_R^{1}(i)$:

\begin{equation}
\label{eq:r_2}
R_j^2(i)= \left\{ \begin{array}{cc}
\displaystyle\bigcap_{p \in \hat{U}_R^1(i)} \beta^{c}_{d_I^i}(p) \cap T_j^{free}, & \hat{U}_R^1(i) \neq \emptyset\\
T_j^{free}, & \mbox{otherwise}
\end{array}
\right.
\end{equation}
where $\beta_{d_I^i}(p)$ is an open ball (using the metric defined in Section \ref{sec:intro}) of radius $d_I^i$ centered at $p$, and $\beta^{c}_{d_I^i}(p)$ is its complement. Also, $T_j^{free} \subseteq T_j$ is the region inside $T_j$ that has not yet been assigned to a guard. In the absence of obstacles between $R_j^2(i)$ and $\hat{U}_R^1(i)$:
\begin{equation}
\label{eq:r_minkowski}
R_j^2(i)= (\hat{U}_R^1(i) \bigoplus B_{d_I^i})^c \cap T_k^{free},
\end{equation}
where $B_{d_I^i}$ is an open ball in $\mathbb{R}^2$, $\bigoplus$ denotes the Minkowski sum, and $(\hat{U}_R^1(i) \bigoplus \beta_{d_I^i})^c$ is the complement of the Minkowski sum. The set $(\hat{U}_R^1(i) \bigoplus B_{d_I^i})$ is called the offset of $\hat{U}_R^1(i)$. The offset of a set bounded by a polyline curve (line-segments/arcs of circle) is a set bounded by polyline curve \cite{Liu:2007}. Appendix \ref{sec:appa} shows that $R_j^2(i)$ and $\hat{U}_R^2(i)$ are sets bounded by polyline curves even in the presence of obstacles. From $\hat{U}_R^{1}(i)$ and $\hat{U}_R^{2}(i)$, we can compute $\mathbb{R}^{alloc}(g_i)$ using the equation $\mathbb{R}^{alloc}(g_i)=\hat{U}_R^{1}(i) \cup \hat{U}_R^{2}(i)$. 

In Figure \ref{fig:cont}, there are three guards, $g_1$, $g_2$, and $g_3$. All the non-safe triangles incident to one endpoint of the diagonals of $g_1$ and $g_3$ are unsafe triangles. Therefore, $\hat{U}_R^{1}(1)$ and $\hat{U}_R^{1}(3)$ are as shown in the figure. However, that is not the case for $g_2$. In order to define $\hat{U}_R^{1}(2)$, either $g_1$ or $g_3$ needs to have a region assigned to it in the triangles shared between it and $g_2$. In Figure \ref{fig:cont}, $\hat{U}_R^{2}(1)$ (which is obtained from (\ref{eq:r_2})) is used to define $\hat{U}_R^{1}(2)$, which in turn is used to define $\hat{U}_R^{2}(2)$ using (\ref{eq:r_2}). Once that all the guards in $\mathbb{G}(T_j)$ have been assigned to a region $R_j^{\alpha}(i)$, it is possible to determine if no allocation was found for the given $r$. By definition, a region $R_j^{1}(i)$ inside $T_j$ ensures that every location within the triangle is assigned to a guard in $\mathbb{G}(T_j)$. However, that may not hold if all the regions allocated to the guards in $T_j$ are $R_j^{2}(i)$. Every $R_j^{2}(i)$ is constructed using (\ref{eq:r_2}), thereby, guaranteeing that $g_i$ is assigned to the largest possible region within $T_j$ such that $c(\mathbb{R}^{alloc}(g_i)) \geq r$. Otherwise, there is no guarantee to keep track of the intruder if it follows the shortest path between $\hat{U}_R^{1}(i)$ and $\hat{U}_R^{2}(i)$. Clearly, an allocation cannot be found by the above process if there is a region inside the triangle that is not assigned to any guard in $\mathbb{G}(T_j)$ after every $R_j^{2}(i)$ is defined for $T_j$. In Figure \ref{fig:cont}, there is a region $R_{\emptyset}$ inside the regular triangles shared by $g_2$ and $g_3$ that cannot be assigned to those guards. The existence of a region $R_{\emptyset} \neq \emptyset$ implies that an allocation that guarantees successful tracking was not found. The blue shaded triangles correspond to safe triangles, which by definition are covered all the time.

\begin{figure}[htb]
	\begin{center} 
		\includegraphics[width=0.62\linewidth,height=0.43\linewidth]{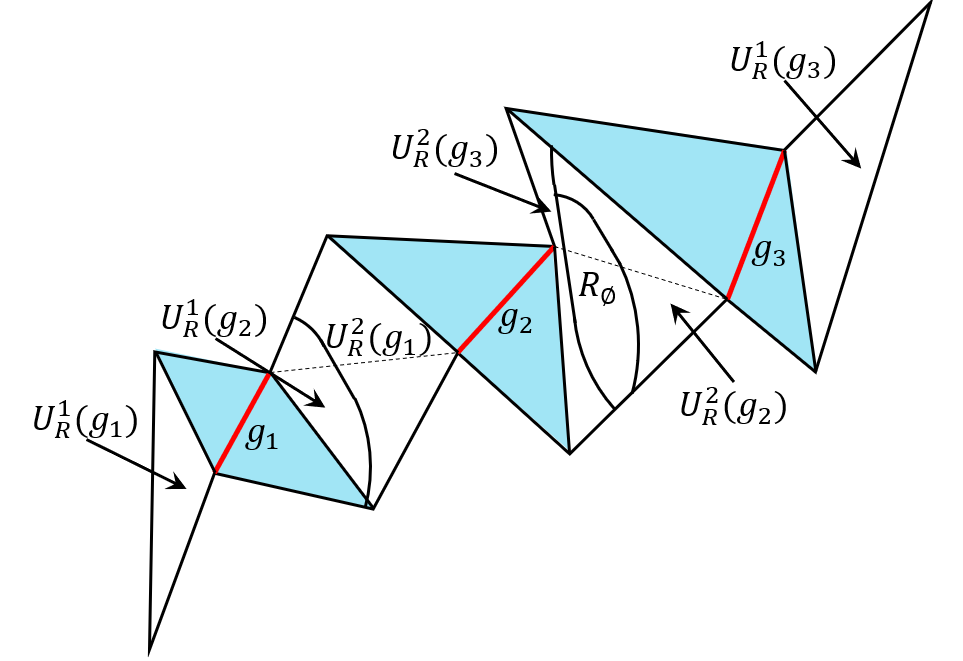}
	\end{center}
	\caption{Three guards deployed in a polygon.}
	\label{fig:cont}
\end{figure}

\subsection{Guard Allocation}
\label{subsec:disc}

In this section, we present an allocation algorithm that respects the constraints in Problem 2. Algorithm \ref{alg:allocation} presents the pseudocode of the allocation technique. Given a  triangulation graph $G$ and the speed ratio $r$ as the input, Algorithm \ref{alg:allocation} provides a partition of the polygon $P$, and an allocation of the guards to the partitions. The initialization, update and termination step of Algorithm \ref{alg:allocation} are as follows:

\begin{itemize}
	\item Initialization:
	\begin{itemize}
		\item Initialize two queues: $\mathbb{G}_{ready}=\mathbb{G}_{alloc}= \emptyset$.
		\item If there is a $g_i \in \mathbb{G}$ such that $h_i$ has an endpoint $v_{\alpha}(i)$ at which all the non-safe triangles incident to it are unsafe triangles, then $g_i$ is added to $\mathbb{G}_{ready}$. 
		\item If no guard satisfies the previous criteria, $\mathbb{G}_{ready}=\emptyset$. A guard $g_i$ is selected using Algorithm \ref{alg:arbitrary}, and all the triangles incident to one of its endpoints are assigned to it.
	\end{itemize}
	\item Update:
	\begin{itemize}	
		
		\item If there is a guard $g_i \notin \mathbb{G}_{ready} \cup \mathbb{G}_{alloc}$ such that for each $T_j \in \overline{\mathbb{T}_{\alpha}^{safe}}(g_i)$, the regions $\overline{R}_j^2(k)$ are already defined and allocated for all $g_k \in \mathbb{G}(T_j) \backslash \{ g_i \}$. $v_{\alpha}(i)$ is called the {\it preferential endpoint} of $g_i$, and is labeled as $v_{1}(i)$.
		\item A guard $g_i \in \mathbb{G}_{ready}$ is selected. Algorithm \ref{alg:orient} is used to compute $\hat{U}_R^{1}(i)$ and $\hat{U}_R^{2}(i)$.
		\item If $\mathbb{G}_{ready}= \emptyset$, a guard is arbitrarily selected using Algorithm \ref{alg:arbitrary}. Thus, Algorithm \ref{alg:allocation} can continue as in the previous step.
	\end{itemize}
	\item Termination: The algorithm can terminate in two ways described as follows.
	\begin{itemize}
		\item Algorithm \ref{alg:orient} finds a region $R_{\emptyset}(j) \neq \emptyset$ inside a non-safe triangle $T_j$. Since the region $R_{\emptyset}(j)$ cannot be assigned to any guard, the algorithm terminates. Therefore, no allocation is found.
		\item Each $T_j \in \overline{\mathbb{T}^{safe}}(G)$ is partitioned, and each partition is assigned to a guard. In this case a feasible allocation is found.
	\end{itemize}
\end{itemize}

Figure \ref{fig:ex_allocation1} (a) shows a polygonal environment with $21$ ($=n$) edges and $4$ ($\leq\lfloor n/4 \rfloor$) guards. There are $8$ safe and $11$ non-safe triangles. The corresponding graph $G^{\#}$ is shown in Figure \ref{fig:ex_allocation1} (c). The orientation of the edges of $G^{\#}$ illustrates a partial order in which the vertices of $G^{\#}$ are allocated by the algorithm. $\mathbb{V}(G^{\#})$ has $11$ vertices, each one associated with an non-safe triangle. Edges exist between vertices that correspond to triangles incident to opposite endpoints of the diagonal of a guard. For example, $T_9$ and $T_{11}$ are incident to one endpoint of $h_4$ and $T_{10}$ is incident to the other. Hence, edges in $G^{\#}$ that correspond to $g_4$ are $e_{10,9}(g_4)$ and $e_{10,11}(g_4)$. In Algorithm \ref{alg:allocation}, once a $g_i \in \mathbb{G}_{ready}$ is chosen (and $\mathbb{R}^{alloc}(g_i)$ is obtained by Algorithm \ref{alg:orient}), the endpoints of its diagonal can be labeled as $v_1(i)$ and $v_2(i)$.


In the example shown in Figure \ref{fig:ex_allocation1}, when the algorithm starts, $\mathbb{G}_{ready}= \{ g_1,g_4 \}$. $g_1$ meets the definition of the $\mathbb{G}_{ready}$ queue since all the non-safe triangles incident to one endpoint of $h_1$ are unsafe triangles ($T_1$ and $T_2$). The same is true for one endpoint of $h_4$ ($T_{10}$ is an unsafe triangle). Next, $g_1$ is selected and its corresponding preferential endpoint is named $v_1(1)$. Algorithm \ref{alg:orient} is called. It uses the procedure to compute $\mathbb{R}^{alloc}(g_i)$ from Section \ref{subsec:cont} for a given $g_i$. Thus, $R_1^1(1)=T_1$ and $R_2^1(1)=T_2$, $S_R^1(1)= \{ T_1,T_2 \}$ and $\hat{U}_R^1(1)=T_1 \cup T_2$. Edges $e_{1,3}(1)$, $e_{1,4}(1)$, $e_{1,5}(1)$ and $e_{1,8}(1)$ become outgoing edges of $v_1$, and edges $e_{2,3}(1)$, $e_{2,4}(1)$, $e_{2,5}(1)$ and $e_{2,8}(1)$ become outgoing edges of $v_2$. Next, the regions that correspond to the endpoint $v_2(1)$ are computed. They are $R_3^2(1)=T_3$, $R_5^2(1)=T_5$, $R_8^2(1)=T_8$ and $R_4^2(1)$ which is the region inside $T_4$ assigned to $g_1$ shown in Figure \ref{fig:ex_allocation1} (b). Finally, the algorithm searches for the existence of any region $R_{\emptyset}(j) \neq \emptyset$ with $j \in \{3,4,5,8\}$. Since it is not found, the process returns to Algorithm \ref{alg:allocation} where $\mathbb{G}_{alloc}$ (which contains all guards that have been allocated) and $\mathbb{G}_{ready}$ are updated to $\mathbb{G}_{alloc}=\{g_1\}$ and $\mathbb{G}_{ready}=\{g_4,g_2\}$. In the second iteration, $g_4$ is selected, and the same process is repeated. $R_{10}^1(4)=\hat{U}_R^1(4)=T_{10}$. Edges $e_{10,9}(4)$ and $e_{10,11}(4)$ become outgoing edges of edges of $v_{10}$. $R_{11}^2(4)=T_{11}$ and $R_9^2(4)$ (the region inside $T_9$ assigned to $g_4$ shown in Figure \ref{fig:ex_allocation1} (b)) are obtained. At the end of the iteration, $\mathbb{G}_{alloc}=\{g_1,g_4\}$ and $\mathbb{G}_{ready}=\{g_2,g_3\}$. The algorithm continues until the end of the fourth iteration at which point $\mathbb{G}_{alloc}=\{g_1,g_4,g_2,g_3\}$ and $\mathbb{G}_{ready}= \mathbb{G}_{\Omega} (=\mathbb{G} \backslash (\mathbb{G}_{ready} \cup \mathbb{G}_{alloc}))= \emptyset$. The allocation of all the regions of the environment is shown in Figure \ref{fig:ex_allocation1} (b), and the resulting directed graph $G^{\#}$ is illustrated in Figure \ref{fig:ex_allocation1} (c). Notice that Line $15$ of Algorithm \ref{alg:orient} is only reached when all the edges incident to a vertex $v_j$ in $G^{\#}$ are incoming edges. Thus, the regions $\hat{U}_R^1(g_i)$ and $\hat{U}_R^2(g_i)$ for each guard $g_i \in \mathbb{G}(T_j)$ are already defined. Therefore, the algorithm checks whether the triangle is completely covered by the regions allocated to the guards or there is a region inside the triangle that cannot be assigned to guards. In the example, the aforementioned check is performed for triangles $T_8$, $T_7$, $T_3$, $T_{11}$, $T_6$ and $T_5$. We can see that for all the other vertices, there are outgoing edges incident to them, and those edges are associated with a single guard. According to the definition of $\mathbb{G}_{ready}$, at each iteration Algorithm \ref{alg:allocation} selects a guard $g_i$ such that the regions of the other guards that can cover the triangles in $\overline{\mathbb{T}_{1}^{safe}}(g_i)$ are already defined. It follows that Algorithm \ref{alg:arbitrary} selects the vertices that correspond to the triangles in $\overline{\mathbb{T}_{1}^{safe}}(g_i)$ and orients all the edges incident to them that correspond to $g_i$ as outgoing edges of those vertices. Since all the other guards that can cover those triangles had their regions defined, it means that Algorithm \ref{alg:arbitrary} was called before to orient their edges so all of those edges are outgoing edges of other vertices. Hence, they are incoming edges of the vertices associated with the triangles in $\overline{\mathbb{T}_{1}^{safe}}(g_i)$. This explains why either all the edges incident to a vertex in $G^{\#}$ become incoming edges, or only the edges associated to one guard become outgoing edges.

\begin{figure}[thpb]
	\begin{center}
		\begin{subfigure}{0.9\linewidth}
			\includegraphics[width=0.75\linewidth,height=0.55\linewidth]{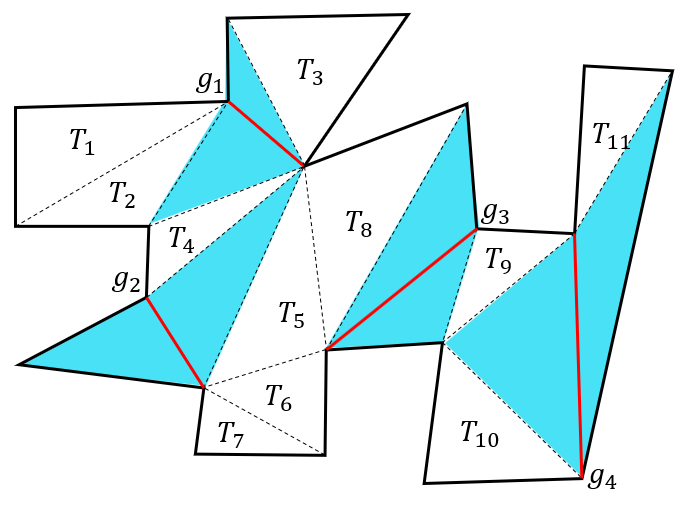}
			\caption{}
			\label{fig:ex_allocation1a}
		\end{subfigure}
		\begin{subfigure}{0.55\linewidth}
			\includegraphics[width=0.84\linewidth,height=0.55\linewidth]{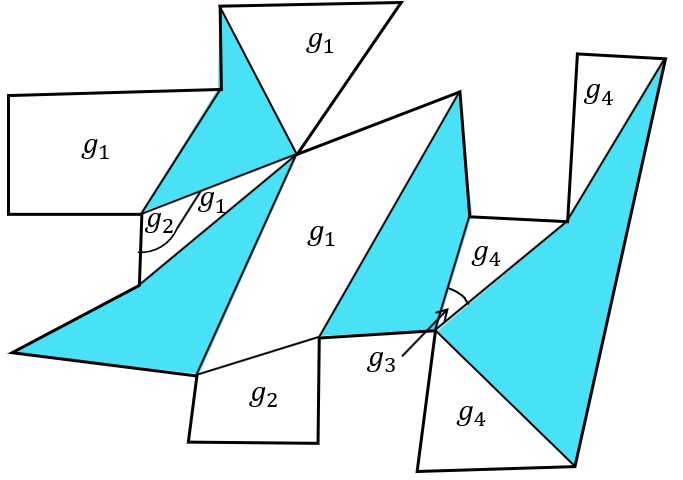}
			\caption{}
			\label{fig:ex_allocation1b}
		\end{subfigure}
		\begin{subfigure}{0.3\linewidth}
			\includegraphics[width=0.8\linewidth,height=1.6\linewidth]{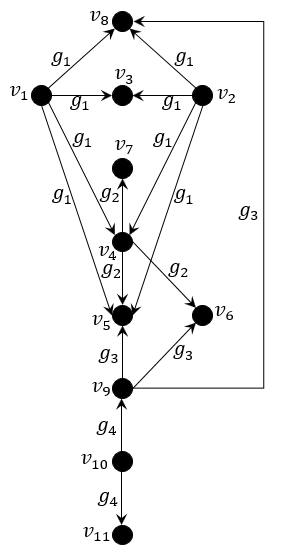}
			\caption{}
			\label{fig:ex_allocation1c}
		\end{subfigure}
	\end{center}
	\caption{(\subref{fig:ex_allocation1a}) Triangulated $P$ with guards deployed. (\subref{fig:ex_allocation1b}) The regions of the resulting partition of the non-safe triangles are assigned to the guards. (\subref{fig:ex_allocation1c}) Resulting graph $G^{\#}$ with the orientation of all edges defined.}
	\vspace{-0.1in}
	\label{fig:ex_allocation1}
\end{figure}

\begin{algorithm}
	\caption{GENALLOC}
	\begin{algorithmic}[1]
		\State\textbf{Input}:\textbf{$P$,$G$,$r$ and $\mathbb{G}$.}
		\State\textbf{Output}: Triangles of the triangulation of $P$ partitioned into regions assigned to the guards.
		
		\State \textbf{$\mathbb{V}(G^{\#}) \leftarrow \{ v_j : \exists T_j \in \overline{\mathbb{T}^{safe}}(G) \}$}
		\State \textbf{$\mathbb{E}(G^{\#}) \leftarrow \{ e_{j,k}(g_i) : \exists g_i \in \mathbb{G} \mbox{ such that } T_j \in \overline{\mathbb{T}_1^{safe}}(g_i) \mbox{ and } T_k \in \overline{\mathbb{T}_2^{safe}}(g_i)  \}$}
		\State \textbf{update $\mathbb{G}_{ready}$}	
		\While{$\mathbb{G}_{ready} \neq \emptyset$}
		\State choose $g_i \in \mathbb{G}_{ready}$
		\State $v_1(i) \leftarrow$ preferential endpoint of $h_i$
		\State $\overline{\mathbb{T}_{1}^{safe}}(g_i)$ is obtained
		\State $stop \leftarrow$ call Algorithm \ref{alg:orient}
		\If{$stop=$ True}
		\State the allocation is not possible
		\State \Return
		\EndIf
		\State update $\mathbb{G}_{ready}$ and $\mathbb{G}_{alloc}$
		\EndWhile					
		\State $\mathbb{G}_{\Omega} \leftarrow \mathbb{G} \backslash (\mathbb{G}_{ready} \cup \mathbb{G}_{alloc})$	
		\If{$\mathbb{G}_{\Omega} \neq \emptyset$}
		\State call Algorithm \ref{alg:arbitrary}
		\State update $\mathbb{G}_{ready}$ and $\mathbb{G}_{alloc}$
		\State go to $3$
		\EndIf
		\State \Return
	\end{algorithmic}
	\label{alg:allocation}
\end{algorithm}

\begin{algorithm}
	\caption{LOCALLOC}
	\begin{algorithmic}[1]
		\State\textbf{Input}: $g_i$,$\overline{\mathbb{T}_{1}^{safe}}(g_i)$,$G^{\#}$ and $\mathbb{G}$.
		\State\textbf{Output}: Regions $\hat{U}_R^1(i)$ and $\hat{U}_R^2(i)$ are defined and assigned to $g_i$ or no feasible allocation found.			
		\State $\mathbb{E}(i) \leftarrow \emptyset$	
		\For{each $T_j \in \overline{\mathbb{T}_{1}^{safe}}(g_i)$}
		\State $R_j^1(i) \leftarrow  \bigcap_{g_k \in \mathbb{G}(T_j) \backslash \{ g_i \} }{\overline{R}_j^2(k)} \cap T_j$
		\State $\mathbb{E}^j(i) \leftarrow $ edges incident to $v_j$ that correspond to $g_i$
		\State orient edges $e_{j,\cdot}(i) \in \mathbb{E}^j(i)$ as outgoing edges of $v_j$
		\State add edges in $\mathbb{E}^j(i)$ to $\mathbb{E}(i)$
		\EndFor
		\State compute $S_R^{1}(i)$ and $\hat{U}_R^1(i)$
		\For{each $e \in \mathbb{E}(i)$}
		\State $v_j \leftarrow$ head of $e$				
		\State compute $R_j^2(i)$ using (\ref{eq:r_2})
		\If{for all $g_k \in \mathbb{G}(T_j)$, $R_j^2(i)$ is defined}
		\State $R_{\emptyset}(j) \leftarrow \bigcap_{g_k \in \mathbb{G}(T_j)}{\overline{R}_j^2(k)} \cap T_j$
		\If {$R_{\emptyset}(j) \neq \emptyset$}
		\State \Return True
		\EndIf
		\EndIf
		\EndFor	
		\State compute $S_R^{2}(i)$ and $\hat{U}_R^2(i)$
		\State \Return False
	\end{algorithmic}
	\label{alg:orient}
\end{algorithm}

If $\mathbb{G}_{ready}=\emptyset$ and $\mathbb{G}_{\Omega}\neq \emptyset$ (Line $18$), Algorithm \ref{alg:allocation} calls Algorithm \ref{alg:arbitrary} to allocate the guards in $\mathbb{G}_{\Omega}$. It happens when the iterative procedure cannot find a unique allocation for the non-safe triangles that can be covered by the guards in the set $\mathbb{G}_{\Omega}$. Let $\mathbb{T}_{\Omega}$ denote the triangles that can be covered by the guards in $\mathbb{G}_{\Omega}$ such that they contain regions that have not yet been assigned. Let $\mathbb{V}_{\Omega}$ be the set of vertices in $\mathbb{V}(G^{\#})$ that correspond to the triangles in $\mathbb{T}_{\Omega}$. Triangles in $\mathbb{T}_{\Omega}$ contain regions that can be assigned to more than one guard, and hence, it is not possible to determine a unique partition. Thus, for each $v_j \in \mathbb{V}_{\Omega}$ there are at least two guards with non-oriented edges incident to $v_j$. 


\begin{figure}[thpb]
	\begin{center}
		\begin{subfigure}{0.64\linewidth}
			\includegraphics[width=0.9\linewidth,height=0.65\linewidth]{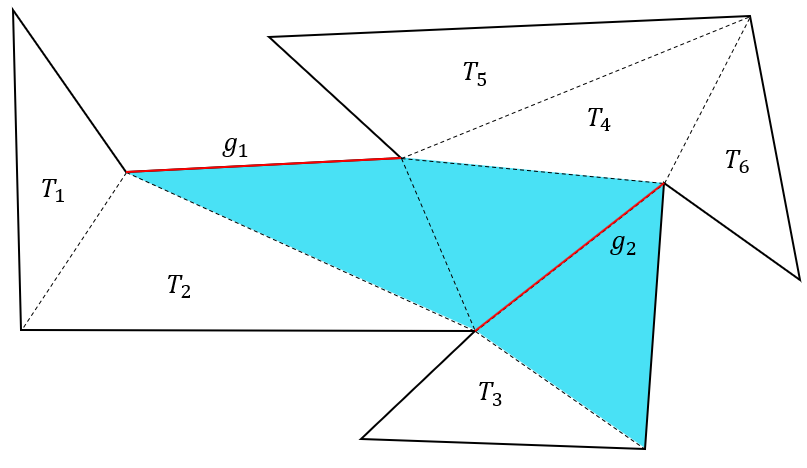}
			\caption{}
			\label{fig:ex_allocation2a}
		\end{subfigure}
		\begin{subfigure}{0.59\linewidth}
			\includegraphics[width=1\linewidth,height=0.8\linewidth]{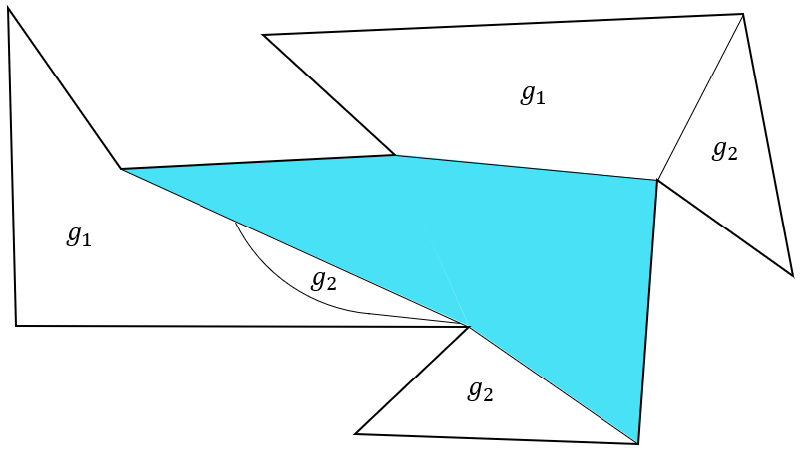}
			\caption{}
			\label{fig:ex_allocation2b}
		\end{subfigure}
		\begin{subfigure}{0.35\linewidth}
			\includegraphics[width=1\linewidth,height=1\linewidth]{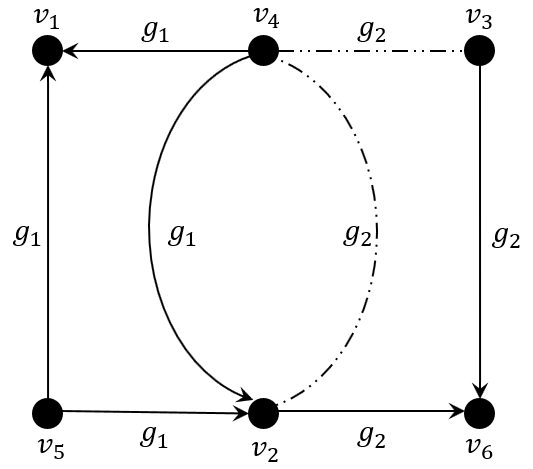}
			\caption{}
			\label{fig:ex_allocation2c}
		\end{subfigure}
	\end{center}
	\caption{(\subref{fig:ex_allocation2a}) Triangulated $P$ with guards deployed. (\subref{fig:ex_allocation2b}) The regions of the resulting partition after arbitrarily assigning $g_1$ to $T_4$. (\subref{fig:ex_allocation2c}) Resulting graph $G^{\#}$ after deleting a pair of edges and defining the orientation of all remaining edges.}
	\vspace{-0.1in}
	\label{fig:ex_allocation2}
\end{figure}

Figure \ref{fig:ex_allocation2} shows such a scenario involving $2$ guards and $6$ non-safe triangles. If Algorithm \ref{alg:orient} was executed, it could find $R_3^1(2)=T_3$ or $R_6^1(2)=T_6$, or also $R_1^1(1)=T_1$ or $R_5^1(1)=T_5$. The presence of the regular triangles $T_2$ and $T_4$ prevents Algorithm \ref{alg:orient} to find $\hat{U}_R^1(1)$ and $\hat{U}_R^1(2)$. There is no region that has initially been assigned to any of the guards inside $T_2$ and $T_4$ which can be used to construct the region of the other guard. Therefore, $\mathbb{G}_{ready}= \emptyset$ and $\mathbb{V}_{\Omega}=\{v_2,v_4\}$. Let $G_{\Omega}$ be the subgraph induced by $\mathbb{V}_{\Omega}$. We know that there are at least two guards with non-oriented edges incident to each $v_j \in \mathbb{V}_{\Omega}$. Since the number of vertices of $G_1$ is finite ($G_{\Omega}$ cannot be a tree with an infinite length path), each $v_j \in \mathbb{V}_{\Omega}$ is inside a cycle involving the diagonal of guards in $\mathbb{G}_{\Omega}$. Otherwise, if there is a $v_j \in \mathbb{V}_{\Omega}$ with only one neighbor in $\mathbb{V}_{\Omega}$, there is only one guard with a non-oriented edge incident to it which implies $v_j \notin \mathbb{V}_c$. In Figure \ref{fig:ex_allocation2} (c), $G_{\Omega}$ is the graph induced by $v_2$ and $v_4$. It is a cycle of length $2$ with edges $e_{2,4}(g_1)$ and $e_{2,4}(g_2)$.

Algorithm \ref{alg:arbitrary} resolves the aforementioned deadlock in $G_{\Omega}$. It arbitrarily selects a $g_i \in \mathbb{G}_{\Omega}$ and an endpoint of $h_i$. Subsequently, it assigns to $g_i$ all the unassigned regions that it can cover from the selected endpoint.  Thus, the edges of the other guards in $\mathbb{G}_{\Omega}$ that can cover those regions are deleted from $\mathbb{E}(G^{\#})$. Consequently, $g_i$ meets the requirement to get added to $\mathbb{G}_{ready}$ which in turn allows Algorithm \ref{alg:allocation} to continue. In the example of Figure \ref{fig:ex_allocation2}, Algorithm \ref{alg:arbitrary} arbitrarily selects $g_1$ and the endpoint that is a vertex of $T_4$ is chosen as $v_1(1)$. Thus, the edges $e_{4,2}(2)$ and $e_{4,3}(2)$ are deleted from $G^{\#}$, and $\mathbb{G}_{ready}=\{g_1\}$, so Algorithm \ref{alg:allocation} can continue. Next $g_1$ is chosen, and Algorithm \ref{alg:orient} finds $R_5^1(1)=T_5$, $R_4^1(1)=T_4$, $S_R^1(1)= \{ T_5,T_4 \}$ and $\hat{U}_R^1(1)=T_5 \cup T_4$. Thus, edges $e_{5,1}(1)$ and $e_{5,2}(1)$ become outgoing edges of $v_5$, and $e_{4,1}(1)$ and $e_{4,2}(1)$ become outgoing edges of $v_4$. It follows that $R_1^2(1)=T_1$ and $R_2^2(1)$ is the region inside $T_2$ assigned to $g_1$ shown in Figure \ref{fig:ex_allocation2} (b). At the end of this iteration, $\mathbb{G}_{alloc}=\{g_1\}$ and $\mathbb{G}_{ready}=\{g_2\}$. At the end of the second iteration all the regions of the environment have been assigned as shown in Figure \ref{fig:ex_allocation2} (b). The resulting directed graph $G^{\#}$ is illustrated in Figure \ref{fig:ex_allocation2} (c) in which the edges deleted by Algorithm \ref{alg:arbitrary} are shown as dotted segments.


Since the number of edges in $G^{\#}$ that can be associated to a given guard is upper bounded by $n(n-1)/2$, the time complexity of Algorithm \ref{alg:orient} is $O(n^2)$. Also, since the total number of edges that can be deleted is upper bounded by $O(n^2)$, the time complexity of Algorithm \ref{alg:arbitrary} is $O(n^2)$. At each iteration of Algorithm \ref{alg:allocation}, one guard enters $\mathbb{G}_{alloc}$ and Algorithm \ref{alg:orient} is called. It follows that the time complexity of Algorithm \ref{alg:allocation} is $O(n^3)$.


\begin{algorithm}
	\caption{ARBITALLOC}
	\begin{algorithmic}[1]
		\Procedure{1 of General Allocation}{$\mathbb{G}_{\Omega}$, $G^{\#}$}
		\State choose $g_i \in \mathbb{G}_{\Omega} $
		\State $v_1(i) \leftarrow$ arbitrary endpoint of $h_i$
		\For{each $T_k \in \overline{\mathbb{T}_{1}^{safe}}(g_i)$}
		\State $\mathbb{G}^{del}(T_k) \leftarrow (\mathbb{G}(T_k) \cap \mathbb{G}_{\Omega}) \backslash \{ g_i \})$
		\For{each $g_j \in \mathbb{G}^{del}(T_k)$}
		\State delete from $G^{\#}$ all edges incident to $v_k$ that correspond to $g_j$
		\EndFor
		\EndFor
		\EndProcedure
		
	\end{algorithmic}
	\label{alg:arbitrary}
\end{algorithm}




\subsection{Completeness and Correctness of GENALLOC}
\label{subsec:comp}


The following lemma proves that Algorithm \ref{alg:allocation} terminates. 

\begin{lemma}
	\label{lemma:5}
	Algorithm \ref{alg:allocation} terminates in a finite number of steps.
\end{lemma}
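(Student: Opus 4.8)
The plan is to exhibit a bounded, monotone progress measure and to check that every subroutine invoked by Algorithm~\ref{alg:allocation} halts, so that the whole procedure can only perform finitely many elementary steps. The natural measure is the number of allocated guards $|\mathbb{G}_{alloc}|$, which is bounded above by $|\mathbb{G}| \le \lfloor n/4 \rfloor$. I would prove that $|\mathbb{G}_{alloc}|$ never decreases and strictly increases at least once between any two visits to Line~$3$, which caps the number of times the ``\textbf{go to} $3$'' branch can fire.

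First I would dispatch the two called routines. Algorithm~\ref{alg:orient} (LOCALLOC) consists of a \textbf{for} loop ranging over $\overline{\mathbb{T}_{1}^{safe}}(g_i)$ followed by a \textbf{for} loop ranging over the edge set $\mathbb{E}(i)$; both index sets are finite (bounded by $|\mathbb{T}(G)|$ and by the $O(n^2)$ edges associated with a guard in $G^{\#}$, respectively), and neither loop is nested with an unbounded construct, so LOCALLOC returns after finitely many steps. Likewise, Algorithm~\ref{alg:arbitrary} (ARBITALLOC) is a pair of nested \textbf{for} loops over the finite sets $\overline{\mathbb{T}_{1}^{safe}}(g_i)$ and $\mathbb{G}^{del}(T_k)$, hence it too halts. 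Consequently every line of Algorithm~\ref{alg:allocation} other than the loop headers executes in finite time.

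Next I would bound the inner \textbf{while} loop (Lines~$6$--$15$). In each pass the algorithm selects some $g_i \in \mathbb{G}_{ready}$ and calls LOCALLOC; if LOCALLOC reports $stop=\text{True}$ the procedure returns immediately, and otherwise $g_i$ is moved into $\mathbb{G}_{alloc}$ during the update on Line~$14$. Because $\mathbb{G}_{ready}$ and $\mathbb{G}_{alloc}$ are kept disjoint, a guard placed in $\mathbb{G}_{alloc}$ is never re-selected, so across the entire run the body of this loop executes at most $|\mathbb{G}|$ times. For the outer loop, observe that the branch on Lines~$16$--$20$ is taken only when $\mathbb{G}_{ready}=\emptyset$ while $\mathbb{G}_{\Omega}\neq\emptyset$; ARBITALLOC then arbitrarily fixes one guard of $\mathbb{G}_{\Omega}$ and deletes the competing edges, which (as described in Section~\ref{subsec:disc}) makes that guard satisfy the membership criterion for $\mathbb{G}_{ready}$. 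Thus each traversal of the ``\textbf{go to} $3$'' path guarantees that at least one further guard becomes allocable, strictly increasing $|\mathbb{G}_{alloc}|$ before the next return to Line~$3$. Since $|\mathbb{G}_{alloc}|$ can increase at most $|\mathbb{G}|$ times, the outer loop fires finitely often, and the total step count is finite.

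The main obstacle is the apparent restart caused by ``\textbf{go to} $3$'': Lines~$3$--$4$ rebuild $\mathbb{V}(G^{\#})$ and $\mathbb{E}(G^{\#})$, and one must argue that this does not undo earlier progress. I would resolve this by maintaining as a loop invariant that the progress-relevant state, namely $\mathbb{G}_{alloc}$ together with the regions already assigned and the edges already deleted by prior ARBITALLOC calls, is carried over rather than reset, so guards in $\mathbb{G}_{alloc}$ are not re-processed and the deletions performed by ARBITALLOC persist. With this invariant in place, $|\mathbb{G}_{alloc}|$ is genuinely monotone increasing across outer iterations, and the bounded-monotone argument above closes the proof.
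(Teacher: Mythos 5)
Your proof is correct and rests on the same core argument as the paper's: each iteration allocates a previously unassigned guard, and since $|\mathbb{G}|\le\lfloor n/4\rfloor$ is finite the algorithm halts. The paper states this in two sentences; your additional care about subroutine termination, the \textbf{go to} branch, and the persistence of state across restarts is a sound elaboration of the same idea rather than a different route.
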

\begin{proof}
	Each iteration of Algorithm \ref{alg:allocation} computes the region to be allocated to an unassigned guard. Since the number of guards is finite, Algorithm \ref{alg:allocation} terminates in finite number of steps.\end{proof}
The next lemma states a condition under which Algorithm \ref{alg:allocation} is complete.

\begin{lemma}
	\label{lemma:6}
	If $G^{\#}$ is a forest, Algorithm \ref{alg:allocation} is complete.
\end{lemma}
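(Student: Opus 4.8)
The plan is to read \emph{completeness} as the guarantee that, when $G^{\#}$ is a forest, Algorithm~\ref{alg:allocation} returns a feasible allocation whenever one exists (equivalently, it reports failure only when no feasible allocation exists). The only place where the algorithm makes a choice that is not forced by the data is the arbitrary-allocation routine ARBITALLOC (Algorithm~\ref{alg:arbitrary}), whose edge deletions could in principle rule out an allocation that actually exists. The whole argument therefore rests on two steps: first, that under the forest hypothesis ARBITALLOC is never invoked, so the run is completely deterministic; and second, that this forced run is greedily maximal and hence cannot miss a feasible allocation.

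First I would show that ARBITALLOC is never called. By construction, the final \textbf{if} block of Algorithm~\ref{alg:allocation} reaches ARBITALLOC only when the main loop exits with $\mathbb{G}_{ready}=\emptyset$ while unallocated guards remain, i.e. $\mathbb{G}_{\Omega}\neq\emptyset$ and hence $\mathbb{V}_{\Omega}\neq\emptyset$. As established in the discussion preceding Algorithm~\ref{alg:arbitrary}, every vertex of the subgraph $G_{\Omega}$ induced by $\mathbb{V}_{\Omega}$ has at least two incident non-oriented edges (corresponding to distinct guards); thus $G_{\Omega}$ has minimum degree at least $2$, and a finite graph with minimum degree at least $2$ necessarily contains a cycle. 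Since $G_{\Omega}$ is a subgraph of $G^{\#}$, this cycle lies in $G^{\#}$, contradicting the hypothesis that $G^{\#}$ is a forest. Consequently, as long as unallocated guards remain we must have $\mathbb{G}_{ready}\neq\emptyset$, so the main \textbf{while} loop allocates one guard per iteration (terminating by Lemma~\ref{lemma:5}), no edge is ever deleted, and on exit $\mathbb{G}_{\Omega}=\emptyset$, i.e. every guard has been allocated.

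Next I would exploit the fact that the surviving run is entirely forced. With ARBITALLOC suppressed, each processed guard $g_i$ receives $\hat{U}_R^{1}(i)$ equal to the part of its first-endpoint non-safe triangles left uncovered by the already-allocated neighboring guards, and $\hat{U}_R^{2}(i)$ equal to the largest region coverable from $v_2(i)$, given by (\ref{eq:r_2}). To conclude completeness I would prove, by induction on the order in which guards are processed, a dominance invariant: inside every non-safe triangle, the region assigned by the algorithm contains the region assigned to that triangle by any fixed feasible allocation $A^{*}$. The base case covers unsafe triangles and guards whose first endpoint sees only unsafe triangles, where the algorithm's assignment is dictated and maximal; the inductive step uses the maximality of (\ref{eq:r_2}) together with acyclicity to argue that committing $g_i$ to its largest admissible region cannot reduce what remains available to guards processed later. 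Granting the invariant, if the algorithm ever reports failure via some $R_{\emptyset}(j)\neq\emptyset$, that leftover is uncovered by every allocation and in particular by $A^{*}$, contradicting feasibility of $A^{*}$; hence whenever a feasible allocation exists the algorithm returns one.

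The main obstacle is the dominance invariant in the last step. The difficulty is that $\hat{U}_R^{1}(i)$ is determined by a guard's neighbors while $\hat{U}_R^{2}(i)$ is chosen greedily-maximal, and I must rule out the possibility that an aggressive choice for one guard starves a later guard relative to $A^{*}$. I expect the clean way to handle this is an exchange argument that transforms $A^{*}$ into the algorithm's allocation one guard at a time in processing order, using the acyclicity of $G^{\#}$ to guarantee that each guard's region interacts only with a disjoint subtree of triangles, so that each local, maximal replacement preserves both feasibility and total coverage. Acyclicity is exactly what makes these local exchanges independent and therefore globally safe.
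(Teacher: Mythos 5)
Your first step is correct and consistent with the paper: the discussion preceding ARBITALLOC observes that every vertex of $\mathbb{V}_{\Omega}$ lies on a cycle of $G^{\#}$ (each such vertex has at least two guards with non-oriented incident edges, so $G_{\Omega}$ has minimum degree $2$), hence under the forest hypothesis $\mathbb{G}_{\Omega}=\emptyset$ at every stage and ARBITALLOC is never invoked. This cleanly reduces the lemma to the statement that the deterministic run cannot miss a feasible allocation, which is in fact the content of the paper's more general Lemma~\ref{lemma:7}.

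The gap is in your second step, which is where all the work lives. The dominance invariant as you state it (``the region assigned by the algorithm contains the region assigned by $A^{*}$'') cannot hold uniformly: the regions of the guards partition each triangle, so containment must alternate in direction between the two endpoints of a guard --- one needs something like $\hat{U}_R^{1}(i)\subseteq\tilde{\hat{U}}_R^{1}(i)$ (the algorithm claims no more at the first endpoint than $A^{*}$ does), which via the maximality of~(\ref{eq:r_2}) forces $R_j^{2}(i)\supseteq\tilde{R}_j^{2}(i)$, which in turn shrinks the first-endpoint regions of the guards processed next. You have not proved this, and your own statement of the obstacle (an aggressive choice starving a later guard) is exactly the point at issue; ``I expect an exchange argument works'' is not an argument, and the ``disjoint subtree'' picture is not literally available since distinct guards genuinely share regular triangles. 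The paper resolves precisely this by running your induction backwards as a blame chain from the failure point: if $R_{\emptyset}(j)\neq\emptyset$, some guard $g_i$ covers part of it in $A^{*}$ from $v_2(i)$, so the algorithm's $\hat{U}_R^{1}(i)$ must contain a region that $A^{*}$ assigns to some other guard $g_a$ from \emph{its} second endpoint, producing a sequence $v_{i_1}\xleftarrow{g_{j_1}}v_{i_2}\xleftarrow{g_{j_2}}\cdots$ in $G^{\#}$. The chain can only terminate at an unsafe triangle, where the assignment is forced and the discrepancy must be empty (contradiction); if it does not terminate it revisits a vertex and exhibits a cycle in $G^{\#}$, contradicting the forest hypothesis. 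To complete your proof you would need to carry out essentially this construction; as written, the central claim is both imprecise and unproven.
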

\begin{proof} The proof is by contradiction. Consider a polygonal environment $P$ for which $G^{\#}$ is a forest, and Algorithm \ref{alg:allocation} fails to find a feasible allocation of guards even though one exists. From hereon, any variable with a symbol $\tilde{}$ on top of it is associated with the feasible allocation. For example, $\tilde{R}_j^2(i)$ denotes a region in $T_j$ assigned to $g_i \in \mathbb{G}(T_j)$ based on the feasible allocation. Since Algorithm \ref{alg:arbitrary} cannot find a feasible allocation in $P$, it terminates at Line $15$. Hence, there is a vertex $v_j \in \mathbb{V}(G^{\#})$ such that all the edges incident to $v_j$ are incoming edges, and the triangle $T_j \in \overline{\mathbb{T}^{safe}}$ has a region $R_{\emptyset}(j) \neq \emptyset$ which by definition cannot be assigned to any guard by Algorithm \ref{alg:arbitrary}. Since a feasible allocation exists, every point inside $R_{\emptyset}(j)$ can be assigned to a guard in $\mathbb{G}(T_j)$. Let $R_{\emptyset}^i(j)\subseteq R_{\emptyset}(j)$ denote the region that can be covered by a guard $g_i\in\mathbb{G}(T_j)$ in the feasible allocation. Therefore, $R_{\emptyset}^i(j)$ exists such that $R_{\emptyset}^i(j) \subseteq \tilde{R}_j^2(i)$ and $R_{\emptyset}^i(j) \not\subseteq {R}_j^2(i)$. For the feasible allocation, $d(\tilde{\hat{U}}_R^1(i),\tilde{R}_j^2(i)) \geq d_I^i$. Since $R_{\emptyset}(j) \cap \tilde{R}_j^2(i) \neq \emptyset$, $d(\hat{U}_R^1(i),\tilde{R}_j^2(i)) < d_I^i$. By definition, $d(\tilde{\hat{U}}_R^1(i),\tilde{R}_j^2(i)) \geq d_I^i$. Therefore, a region $R_{\emptyset}(k)\subset T_k \in \overline{\mathbb{T}_{1}^{safe}}(g_i)$ exists such that $R_{\emptyset}(k)\subset R^1_j(i)$ and $R_{\emptyset}(k)\not\subset \tilde{R}^1_j(i)$. Therefore, $R_{\emptyset}(k)\subset \bigcup_{g_a \in \mathbb{G}(T_k) \backslash \{ g_i \}}\tilde{R}^2_k(a)$. Hence, there exists a guard $g_a \in \mathbb{G}(T_k) \backslash \{ g_i \}$ and a region $R_{\emptyset}^a(k) \subseteq R_{\emptyset}(k)$ such that $R_{\emptyset}^a(k) \subseteq \tilde{R}_k^2(a)$ and $R_{\emptyset}^a(k) \not\subseteq {R}_k^2(a)$. Therefore, we can find a sequence $v_{i_1}\xleftarrow{g_{j_1}}v_{i_2}\xleftarrow{g_{j_2}}\cdots$ of vertices and guards in $G^{\#}$ such that $g_{j_k} \in \mathbb{G}(T_{i_k}) \backslash \{ g_{j_{k-1}} \}$, and corresponding regions $R_{\emptyset}^{j_k}(i_k) \subseteq R_{\emptyset}(i_k)$ such that $R_{\emptyset}^{j_k}(i_k) \subseteq \tilde{R}_{i_k}^2(j_k)$ and $R_{\emptyset}^{j_k}(i_k) \not\subseteq {R}_{i_k}^2(j_k)$. The sequence terminates if $\{g_{j_{k-1}}\} = \mathbb{G}(T_{i_k})$ in which case $T_{i_k}$ is an unsafe triangle. Since $j_k$ does not exist $\tilde{R}_{i_k}^2(j_k)=\emptyset\Rightarrow R_{\emptyset}^{j_k}(i_k)=\emptyset$. Since $R_{\emptyset}(i_k)=R_{\emptyset}^{j_k}(i_k)$ and $R_{\emptyset}(i_k)\neq \emptyset$, we arrive at a contradiction. If the sequence does not terminate, then there exist $i_k$ and $i_j$ such that $i_k=i_j$ since the number of vertices in $G^{\#}$ are finite. This implies the existence of a cycle in $G^{\#}$. Therefore, we arrive at a contradiction since $G^{\#}$ is a forest. 
\end{proof}

In the appendix (Lemma \ref{lemma:7}), we prove a more general result which shows that Algorithm \ref{alg:allocation} is complete if Algorithm \ref{alg:arbitrary} is never called during execution. 
Figure \ref{fig:simple_env} shows an example in which $G^{\#}$ does not contain cycles. In Figure \ref{fig:simple_env} (a), a simple polygonal environment is shown along with the deployment of guards. There are four non-safe triangles $T_1$, $T_2$, $T_3$ and $T_4$. $T_1$ and $T_2$ can be covered by $g_1$. $T_3$ and $T_4$ can be covered by $g_2$. In Figure \ref{fig:simple_env} (b), the corresponding graph $G^{\#}$ is shown. $G^{\#}$ is a forest that consists of two paths. Lemma \ref{lemma:6} states that Algorithm \ref{alg:allocation} will always find the feasible allocation for this environment.

\begin{figure}[thpb]
	\begin{center}
		\begin{subfigure}{0.6\linewidth}
			\includegraphics[width=0.92\linewidth,height=0.8\linewidth]{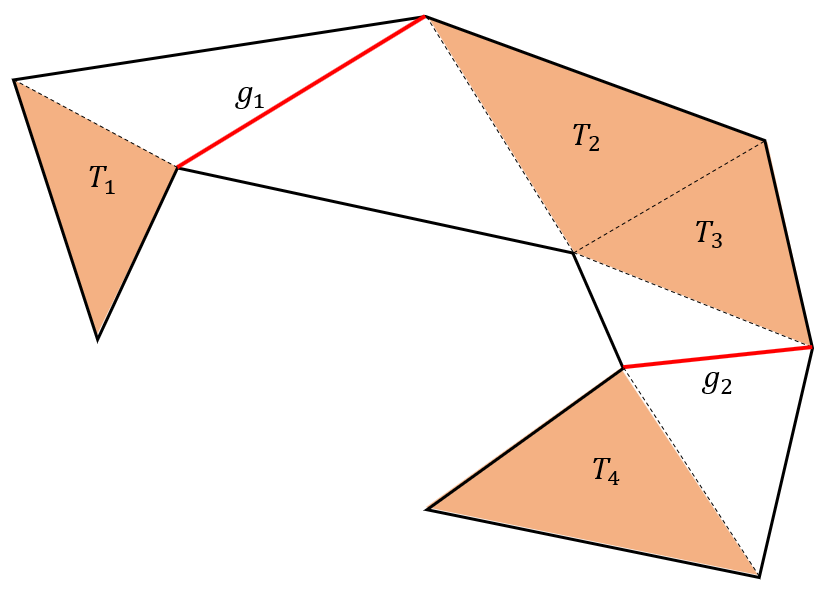}
			\caption{}
			\label{fig:simple_enva}
		\end{subfigure}
		\begin{subfigure}{0.34\linewidth}
			\includegraphics[width=0.9\linewidth,height=0.69\linewidth]{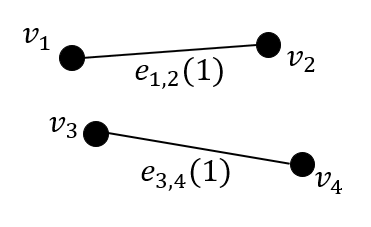}
			\caption{}
			\label{fig:simple_envb}
		\end{subfigure}
	\end{center}
	\caption{(\subref{fig:simple_enva}) A polygon is shown, the red segments represents the diagonals of the guards $g_1$ and $g_2$. (\subref{fig:simple_envb}) Graph $G_1$ that corresponds to the triangles and the deployment of the guards in the polygon.}
	\vspace{-0.1in}
	\label{fig:simple_env}
\end{figure}

\subsection{Classification of Guards}
\label{subsec:class}


In this section, we present a classification of the guards. It is based on the regions $\hat{U}_R^{\alpha}(i)$ $\alpha=\{1,2\}$ constructed from Algorithm \ref{alg:allocation}. 
\begin{enumerate}
	\item Type $0$ guard: These are guards for which either $\hat{U}_R^1(i) = \emptyset$ or $\hat{U}_R^2(i) = \emptyset$. Since the region allocated to a type $0$ guard can be covered from one endpoint of its diagonal, it is a static guard.
	
	\item Type $1$ guard: There are guards for which  all the non-safe triangles allocated to the guard incident to one endpoint are unsafe triangles. Notice that each edge in $\mathbb{E}(G^{\#})$ that corresponds to a type $1$ guard is an outgoing edge of a vertex $v_j$ such that $|\mathbb{G}(T_j)|=1$. Clearly, if $\mathbb{G}(T_j) \backslash \{ g_i \} = \emptyset$ for each $T_j \in \overline{\mathbb{T}_{1}^{safe}}(g_i)$, then $g_i$ is a type $1$ guard.
	
	\item Type $2$ guard: Any guard which is neither Type $0$ nor Type $1$ is a Type 2 guard. 
\end{enumerate}
Consider the example shown in Figure \ref{fig:type_2_guard}. Assume $T_1$, $T_2$ and $T_3$ are the only non-safe triangles incident to $v_1(1)$, $v_1(2)$ and $v_1(3)$ respectively. Algorithm \ref{alg:allocation}  selects the non-safe triangle $T_2$ to be assigned to the only guard that can cover it ($g_2$). Since $|\mathbb{G}(T_2)|=1$, $\hat{U}_R^{1}(2)=T_2$ $\Rightarrow$ $g_2$ is a type $1$ guard. Once $\hat{U}_R^{1}(1)\in T_1$ is computed after a few steps, Algorithm \ref{alg:allocation} selects the non-safe triangle $T_3$ and allocates the unshaded region to $g_3$. Besides $g_3$, $T_3$ can also be covered by $g_1$ and $g_2$. Regions $\hat{U}_R^{1}(1)$ and $\hat{U}_R^{1}(2)$ are known. Consequently, $R_{3}^{2}(1)$ and $R_{3}^{2}(2)$ are also computed. It follows that the unshaded region in $T_3$ is labeled as $R_{3}^{1}(3)$. Moreover, since $\overline{\mathbb{T}_{1}^{safe}}(g_3)=\{T_3\}$, $\hat{U}_R^{1}(3)=R_{3}^{1}(3)$. It follows that $g_3$ is an example of a type $2$ guard.

\begin{figure}[htb]
	\begin{center} 
		\includegraphics[width=0.85\linewidth,height=0.45\linewidth]{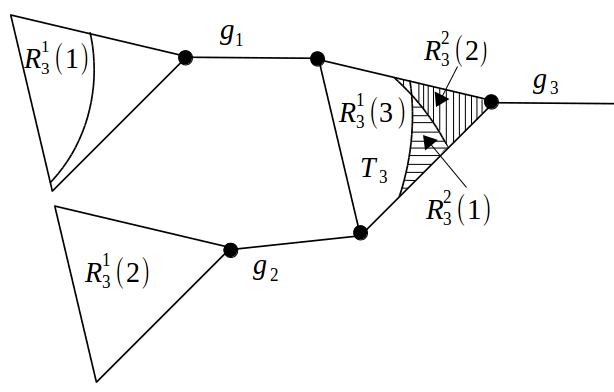}
	\end{center}
	\caption{Example of a type $1$ guard $g_2$ and a type $2$ guard $g_3$.}
	\label{fig:type_2_guard}
\end{figure}

In Algorithm \ref{alg:arbitrary} an unassigned guard $g_i \in \mathbb{G}$ is arbitrarily chosen to cover the unassigned regions inside the triangles incident to one of the endpoints of its diagonal. $g_i$ is arbitrarily chosen since Algorithm \ref{alg:allocation} failed to find a unique partition of such regions. It implies that $\hat{U}_R^{1}(i)$ cannot be constructed unlike in the case of a type $0$, a type $1$ nor a type $2$ guard. However, the arbitrary allocation in Algorithm \ref{alg:arbitrary} assigns those regions to $g_i$. As a result, $g_i$ is converted to a type $1$ or a type $2$ guard.

\subsection{Motion strategy for the Guards}
\label{subsec:react}

In this section, we present a motion strategy for the guards to move on their diagonals. We introduce the concept of {\it critical curves} to propose activation strategies for type $1$ and type $2$ guards. 

From the discussion in the previous section, $\hat{U}_R^1(i)$ is the region assigned to a type $1$ or type $2$ guard $g_i$. Therefore, it is the responsibility of $g_i$ to cover the triangles incident to $v_1(i)$ when the intruder lies in $\hat{U}_R^1(i)$. 

We define an {\it internal critical curve}, denoted by $s_{int}^1(i)$, as the boundary of $\hat{U}_R^1(i)$. Corresponding to an internal critical curve, we define an {\it external critical curve} as follows:

\begin{equation}
\label{eq:ext}
s_{ext}^1(i)=\{p \in P \backslash \hat{U}_R^1(i) : d(p,s_{int}^1(i)) =  d_I^i \}\\
\end{equation}
Comparing the definition of $s_{ext}^1(i)$ to the definition of $R_j^2(i)$ (see Equation (\ref{eq:r_2})), we can conclude that a part of the boundary of $R_j^2$ can belong to $s_{ext}^1(i)$. We define a \textit{critical region} associated with the guard $g_i$ as follows: 
\begin{equation}
\label{eq:critical}
C_1(i)=\{p \in P \backslash \mathring{\hat{U}}_R^1(i) : d(p,s_{int}^1(i)) \leq  d_I^i \},
\end{equation}
where $\mathring{\hat{U}}_R^1(i)$ is the interior of $\hat{U}_R^1(i)$. Note that, by definition, the boundary of $C_1(i)$ contains both curves $s_{int}^1(i)$ and $s_{ext}^1(i)$, and since $d(\hat{U}_R^1(i),\hat{U}_R^2(i)) \geq d_I^i$, it is clear that $(\hat{U}_R^2(i) \cap C_1(i)) \subset s_{ext}^1(i)$. 

Figure \ref{fig:critical} shows the region $\hat{U}_R^1(i)=R_1^1(i) \cup R_2^1(i) \cup R_3^1(i)$ for the guard $g_i$. The neighboring guards are not shown for the sake of simplicity. $s_{int}^1(i)$, the boundary of $\hat{U}_R^1(i)$, is represented as blue segments and arcs that form the boundary of the regions $R_1^1(i)$, $R_2^1(i)$ and $R_3^1(i)$. Since triangles $T_4$ and $T_5$ are safe triangles (they have $h_i$ as an edge), there is no internal critical curve inside them. The green segments and curves denote $s_{ext}^1(i)$, and the unshaded region inside $P$ is $C_1(i)$. The boundary of $C_1(i)$ is formed by $s_{int}^1(i)$, $s_{ext}^1(i)$ and edges of the environment. The dark colored regions represent $R_6^2(i)$ and $R_7^2(i)$ which are part of $\hat{U}_R^2(i)$.

\begin{figure}[htb]
	\begin{center} 
		\includegraphics[width=0.74\linewidth,height=0.5\linewidth]{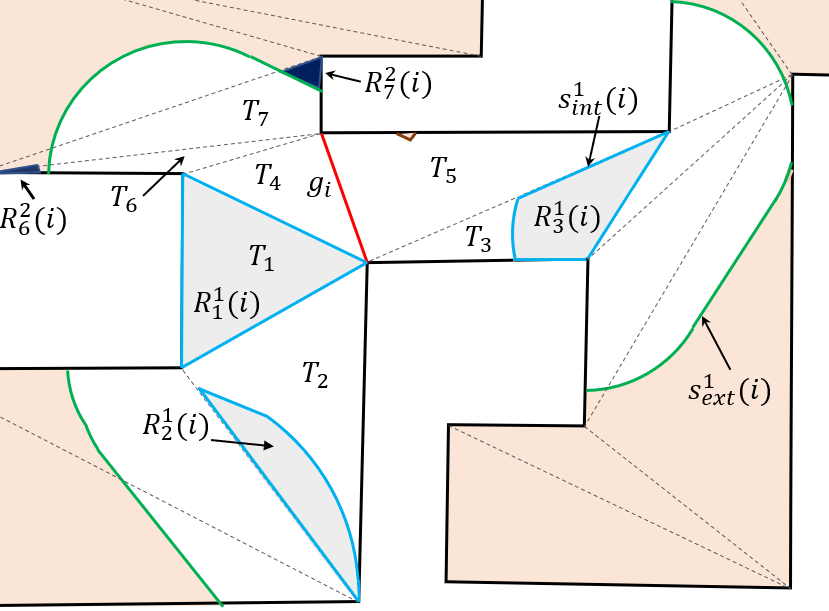}
	\end{center}
	\caption{Example of the definition of the critical region of guard.}
	\label{fig:critical}
\end{figure}


For an intruder located in $C_1(i)$, the following equation maps the position of the intruder ($x_I$) to the position of $g_i$ (denoted as $x_{g_i}$) along its diagonal:	

\begin{equation}
\label{eq:1}
x_{g_i}= x_{v_1(i)}+ \frac{d(s_{int}^{1}(i),x_I)}{d_I^i}(x_{v_2(i)}-x_{v_1(i)}),
\end{equation}
where $x_{v_\alpha(i)}$ is the location of vertex $v_{\alpha}(i)$ ($\alpha =\{1,2\}$). If $x_I \in \hat{U}_R^1(i)$, $g_i$ remains static at $v_1(i)$. Otherwise, if $x_I(t) \notin (\hat{U}_R^1(i) \cup C_j(i))$, $g_i$ remains static at $v_2(i)$.

By definition, $d(\hat{U}_R^1(i),\hat{U}_R^2(i)) \geq d_I^i$ while $d(\hat{U}_R^1(i),s_{ext}^1(i))=d_I^i$. Hence, (\ref{eq:1}) guarantees that $g_i$ will always cover the regions assigned to it when the intruder is located in them. 


Consider the case of a guard $g_i$ such that $g_i$ is incident to $T_k$ at a vertex $v_2(i)$. The motion strategy proposed in (\ref{eq:1}) ensures that $\hat{U}_R^1(i)$ is covered if there is an intruder inside $\hat{U}_R^1(i)$. However, if the intruder is located in $\hat{U}_R^1(i) \cup \mathring{C}_1(i)$, where $\mathring{C}_1(i)$ is the interior of $C_1(i)$, $g_i$ cannot cover $T_k$ because $g_i$ can only be located at $v_2(i)$ when the intruder is outside $\hat{U}_R^1(i) \cup \mathring{C}_1(i)$ according to (\ref{eq:1}). Now, consider the case when $g_i$ is incident to $T_k$ at a vertex labeled $v_1(i)$. Since $v_1(i)$ is a vertex of $T_k$, $T_k \cap \hat{U}_R^1(i)$ will be covered by $g_i$ if there is an intruder inside it according to (\ref{eq:1}). However, if the intruder lies outside $T_k \cap \hat{U}_R^1(i)$, then $T_k$ is not covered by $g_i$ since $g_i$ is not located at $v_1(i)$ according to (\ref{eq:1}). Consequently, for each $g_i\in\mathbb{G}(T_k)$, there exists a region which prevents $g_i$ to cover $T_k$ when the intruder lies inside it. This region, denoted by $\hat{C}_1(i)$, is called the \textit{extended critical region}. It is given by the following expression:

\begin{equation}
\hat{C}_1(i)=
\left\{
\begin{array}{cc}
\hat{U}_R^1(i) \cup \mathring{C}_1(i) & v_2(i) \text{ is a vertex of }  T_k\\
P \backslash \hat{U}_R^1(i) & \text{otherwise}
\end{array}
\right.,
\end{equation}

Based on the concept of extended critical regions, Lemma \ref{lemma:5.10} presents a necessary and sufficient condition for the guards to cover a non-safe triangle when an intruder lies in it.

\begin{lemma} 
	\label{lemma:5.10}
	For the guards in $\mathbb{G}(T_k)$ ($T_k \in \overline{\mathbb{T}^{safe}}(G)$), (\ref{eq:1}) guarantees that the triangle $T_k$ is covered when an intruder is located in it if and only if $\displaystyle\bigcap_{g_i \in \mathbb{G}(T)}( \hat{C}_1(i) \cap T_k) = \emptyset$.		
\end{lemma}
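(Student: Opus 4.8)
The plan is to identify the extended critical region $\hat{C}_1(i)$ with the set of intruder positions at which guard $g_i$ fails to cover $T_k$, and then to reduce the lemma to an elementary statement about the intersection of these failure sets. First I would record a single-guard coverage criterion. Since $T_k \in \overline{\mathbb{T}^{safe}}(G)$ is non-safe, the diagonal $h_i$ of any $g_i \in \mathbb{G}(T_k)$ is not an edge of $T_k$, so $h_i \cap \partial T_k$ consists solely of the unique endpoint of $h_i$ that is a vertex of $T_k$. As the guard slides along $h_i$ it lies on $\partial T_k$ exactly when it sits at that vertex, and by the definition of coverage (together with convexity of $T_k$ and the omni-directional camera) this is precisely the configuration in which $g_i$ covers $T_k$.

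Next I would match this criterion with the two cases in the definition of $\hat{C}_1(i)$ using the motion law (\ref{eq:1}). When $v_1(i)$ is the relevant vertex, (\ref{eq:1}) places $g_i$ at $v_1(i)$ (interpolation factor $d(s_{int}^1(i),x_I)/d_I^i = 0$) exactly for $x_I \in \hat{U}_R^1(i)$, so $g_i$ fails on $P \backslash \hat{U}_R^1(i) = \hat{C}_1(i)$. When $v_2(i)$ is the relevant vertex, $g_i$ sits at $v_2(i)$ exactly when the interpolation factor equals $1$ (that is, $x_I \in s_{ext}^1(i)$) or when $x_I \notin \hat{U}_R^1(i) \cup C_1(i)$; collecting the complementary positions identifies the failure set with $\hat{U}_R^1(i) \cup \mathring{C}_1(i) = \hat{C}_1(i)$. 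In both cases $\hat{C}_1(i)$ is the set of intruder positions for which $g_i$ does not cover $T_k$.

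The lemma then follows from set theory. For a fixed $x_I \in T_k$, the triangle is covered at that instant if and only if some $g_i \in \mathbb{G}(T_k)$ covers it, i.e.\ iff $x_I \notin \hat{C}_1(i)$ for at least one such $g_i$, i.e.\ iff $x_I \notin \bigcap_{g_i \in \mathbb{G}(T_k)} \hat{C}_1(i)$. Hence (\ref{eq:1}) covers $T_k$ for \emph{every} $x_I \in T_k$ iff no point of $T_k$ lies in the common intersection, i.e.\ $\bigcap_{g_i \in \mathbb{G}(T_k)} (\hat{C}_1(i) \cap T_k) = \emptyset$; the ``only if'' direction is just the contrapositive, a point of the intersection being a witness at which all incident guards fail simultaneously. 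The main obstacle is the boundary bookkeeping in the second step: one must verify that a guard parked at $v_2(i)$ when $x_I \in s_{ext}^1(i)$ still covers $T_k$, which is exactly why $\hat{C}_1(i)$ is built from the open set $\mathring{C}_1(i)$ rather than $C_1(i)$, and the half-open behavior on $s_{int}^1(i)$ and $s_{ext}^1(i)$ must be tracked carefully so that the failure set matches $\hat{C}_1(i)$ on the nose. Once that characterization is pinned down, the remaining reduction is immediate.
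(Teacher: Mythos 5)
Your proposal is correct and follows essentially the same route as the paper: identify $\hat{C}_1(i)$ as exactly the set of intruder positions at which $g_i$, moving per (\ref{eq:1}), is not at its vertex of $T_k$ and hence fails to cover $T_k$, and then observe that simultaneous failure of all guards in $\mathbb{G}(T_k)$ at a point of $T_k$ is equivalent to that point lying in $\bigcap_{g_i\in\mathbb{G}(T_k)}(\hat{C}_1(i)\cap T_k)$. The only cosmetic difference is that the paper phrases both directions as contradictions and leaves the failure-set characterization to the discussion preceding the lemma, whereas you verify it explicitly (including the boundary case $x_I\in s_{ext}^1(i)$), which is a sound and slightly more complete rendering of the same argument.
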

\begin{proof}
	($\Rightarrow$) Assume that $\bigcap_{g_i \in \mathbb{G}(T_k)} (\hat{C}_1(i) \cap T_k)= \emptyset$, and $T_k$ is not covered when the intruder lies in $T_k$. 
	It implies that there is a location inside $T_k$ for the intruder that prevents every $g_i\in \mathbb{G}(T_k)$ to cover $T_k$. According to (\ref{eq:1}), such a region must belong to $\bigcap_{g_i \in \mathbb{G}(T_k)} (\hat{C}_1(i) \cap T_k)$ which contradicts our assumption. 
	
	($\Leftarrow$) Next, assume that $\bigcap_{g_i \in \mathbb{G}(T_k)} (\hat{C}_1(i) \cap T_k) \neq \emptyset$ and $T_k$ is covered when the intruder is located in it. Since $\bigcap_{g_i \in \mathbb{G}(T_k)} \hat{C}_1(i) \cap T_k \neq \emptyset$ when $x_I \in \bigcap_{g_i \in \mathbb{G}(T_k)} \hat{C}_1(i) \cap T_k$, there is no guard covering $T_k$ according to (\ref{eq:1}), which is a contradiction. The lemma follows.
\end{proof}

\section{Polygons with Holes}
\label{sec:holes}

In this section, we assume that $P$ has polygonal holes which represent obstacles inside the polygon. Let $\mathbb{Q}=\{Q_1,\ldots,Q_N\}$ represent the set of polygonal holes. Let $\hat{n}=n+n_{Q_1}+\ldots+n_{Q_N}$ denote the total number of vertices of $G$, where $n_i$ is the number of vertices of the hole $Q_i$ and $i \in \{ 1,2,\ldots,N \}$. Figure \ref{fig:simply} shows a polygonal environment with an internal polygonal hole $Q_1$. In Theorem $5.1$ of \cite{O'Rourke:1987}, it is shown that one can find an internal diagonal of the triangulation of $P$ between any two holes (or between a hole and the outer boundary) which merges two holes (or the hole with the boundary) if a wall of thickness $0$ is placed on the diagonal. This reduces the value of $N$ by $1$. See Figure \ref{fig:simply} where $Q_1$ is merged with the outer boundary through the diagonal shared by triangles $T_1$ and $T_2$. Therefore, for any polygon $P$ with $n$ vertices and $N$ internal polygonal holes, we can construct a simply-connected polygon $P'$ with $n+2N$ vertices. We can apply all techniques proposed in the previous sections for deploying guards, and allocating them to triangles of the triangulation of $P'$ for tracking the intruder.


\begin{figure}[thpb]
	\begin{center}
		\includegraphics[width=0.80\linewidth,height=0.54\linewidth]{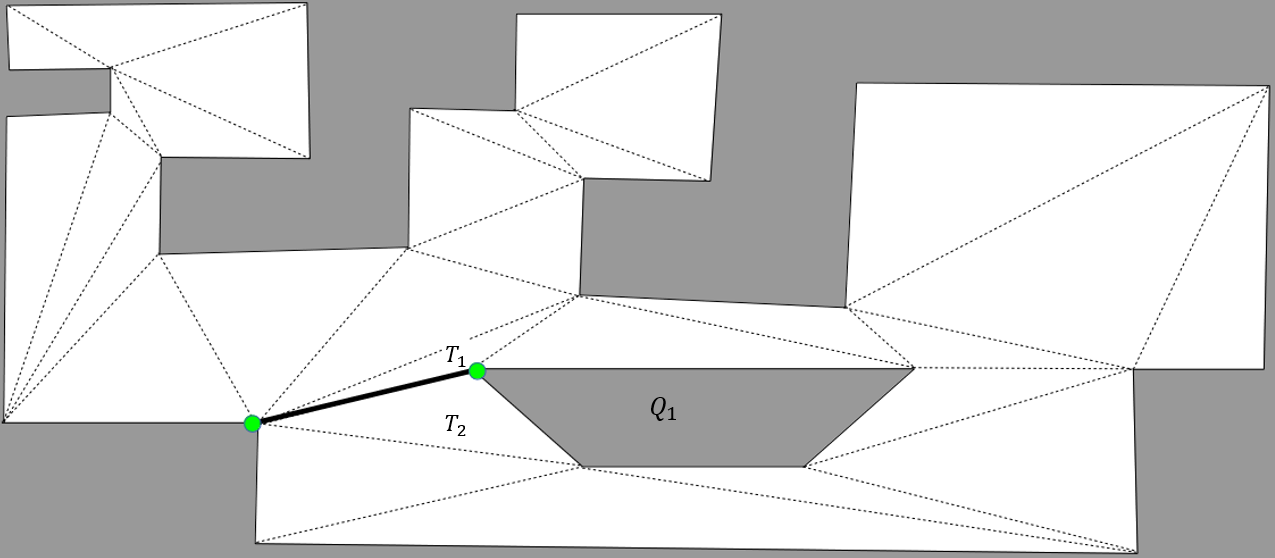}
	\end{center}
	\caption{(a) There is always a diagonal connecting $Q_1$ with the boundary of $P$.}
	\vspace{-0.1in}
	\label{fig:simply}
\end{figure}

\section{Tracking Multiple intruders}
\label{sec:multi}

In this section, we analyze the performance of the proposed algorithm for multiple intruders. We assume that all the intruders have the same maximum speed $\bar{v}_e$. We use the symbol $\mathbb{I}$ $(|\mathbb{I}| > 1)$ to denote the set of intruders, and the vector $x_{\mathbb{I}}(t)\in\mathbb{R}^{|\mathbb{I}|}$ to denote their positions inside the polygon. We assume that the deployment of the guards and the allocation of the different regions of the environment are obtained using the techniques presented in sections \ref{subsec:dep} and \ref{sec:alloc}, respectively.

In Section \ref{subsec:react}, a motion strategy for the guards was proposed for a single intruder. (\ref{eq:1}) is a reactive motion strategy that depends on the location of the intruder. It ensures that each guard $g_i$ can cover $\hat{U}_R^1(i)$ when the intruder is inside it. In the presence of multiple intruders, the priority for each $g_i$ is to cover $\hat{U}_R^1(i)$ as long as there is an intruder inside it. Hence, the motion strategy only needs to consider the intruder closer to $\hat{U}_R^1(i)$, i.e. if there is an intruder inside $\hat{U}_R^1(i)$, $g_i$ stays at $v_1(i)$ regardless of the positions of other intruders. Therefore, the motion strategy of the guard in this case is given by (\ref{eq:3}), wherein $d(s_{int}^{1}(i),x_I)$ is replaced by $d_{min}(s_{int}^{1}(i),x_{\mathbb{I}}(t))$ defined as follows:

\begin{equation}
\label{eq:3}
d_{min}(i,x_{\mathbb{I}}(t))= \min_{I_k\in\mathbb{I} }{d(s_{int}^{1}(i),x_{I_k}(t))}.
\end{equation}

In Section \ref{subsec:react}, we showed that an intruder inside each $\hat{C}_1(i)$ associated to $g_i \in \mathbb{G}(T_k)$ will prevent anon-safe triangle $T_k$ from being covered by any guard incident to it. Therefore, $| \mathbb{G}(T_k)|$ intruders are sufficient to keep $T_k$ uncovered. However, if for example, there are two guards $g_i,g_j \in \mathbb{G}(T_k)$ such that $\hat{C}_1(i) \bigcap \hat{C}_1(j)\neq\emptyset$, a single intruder lying inside the intersection will prevent $g_i$ and $g_j$ from covering $T_k$. Therefore, fewer than $|\mathbb{G}(T_k)|$ intruders can prevent $T_k$ from being covered by any guard incident to it if there are non-empty intersections between the extended critical regions corresponding to distinct guards that can cover $T_k$.


Consider the power set $2^{\mathbb{G}(T_k)}$ of all guards incident to $T_k$. Let $\mathcal{S}\subseteq 2^{\mathbb{G}(T_k)}$ be a collection of all sets $S\in 2^{\mathbb{G}(T_k)}$ for which the extended critical regions of the guards belonging to $S$ have a non-empty intersection. The problem of finding the minimum number of intruders that can be placed at the intersection of extended critical regions to uncover $T_k$ is equivalent to the problem of finding the minimum cover $\mathcal{C}\subseteq \mathcal{S}$ of $\mathbb{G}(T_k)$.

Let $n_I(T_k)$ denote the maximum number of intruders that can be tracked by the guards incident to $T_k$ without uncovering $T_k$ when there is an intruder inside it. The following lemma relates $n_I(T_k)$ to $|\mathcal{C}|$.

\begin{lemma}
	\label{lemma:6.3}
	Let $T_k \in \overline{\mathbb{T}^{safe}}(G)$. 
	\begin{equation}n_I(T_k)=\left\{\begin{array}{cc}|\mathcal{C}|&T_k \cap \mathcal{I}(C) = \emptyset\text{, }\forall C\in \mathcal{C}\\
	|\mathcal{C}|-1& \text{otherwise,}
	\end{array}\right.\end{equation} 
	where $\mathcal{I}(C)=\displaystyle\bigcup_{g_k  \in C}{\hat{C}_1(k)}$
	
\end{lemma}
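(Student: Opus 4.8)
The plan is to recast the question entirely in the set‑cover language introduced just before the lemma, and then bookkeep the one intruder that is forced to sit inside $T_k$. Recall that, by Lemma \ref{lemma:5.10}, $T_k$ is left uncovered precisely when every guard in $\mathbb{G}(T_k)$ is blocked, and that an intruder at a point $p$ blocks exactly those $g_i$ with $p\in\hat{C}_1(i)$. Thus a placement of intruders uncovers $T_k$ if and only if the guard‑sets they realize form a cover of $\mathbb{G}(T_k)$ drawn from $\mathcal{S}$, and the minimum number of intruders that can uncover $T_k$ (with no constraint on where they sit) is exactly $|\mathcal{C}|$, the minimum set‑cover size. Since $n_I(T_k)$ is the largest number of intruders the guards can keep tracked while one of them lies in $T_k$, it equals one less than the minimum number of intruders that suffices to uncover $T_k$ \emph{subject to} the requirement that one intruder lie in $T_k$. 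The whole argument therefore reduces to computing this constrained minimum, which splits according to whether the mandatory in‑$T_k$ intruder can simultaneously serve as one of the blocking intruders of an optimal cover.

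I would treat the ``otherwise'' branch first. If $T_k\cap\mathcal{I}(C)\neq\emptyset$ for some $C\in\mathcal{C}$, place the mandatory intruder at a point of $T_k\cap\mathcal{I}(C)$: it both satisfies the in‑$T_k$ requirement and blocks all guards of $C$, while $|\mathcal{C}|-1$ further intruders realize the remaining cover elements. This exhibits an uncovering configuration using only $|\mathcal{C}|$ intruders, one inside $T_k$, so $n_I(T_k)\leq|\mathcal{C}|-1$. For the matching bound I would use the minimality of $\mathcal{C}$: any configuration of $|\mathcal{C}|-1$ intruders induces at most $|\mathcal{C}|-1$ guard‑sets, which cannot cover $\mathbb{G}(T_k)$, so some guard is never blocked and, by Lemma \ref{lemma:5.10}, can always cover $T_k$; hence $n_I(T_k)\geq|\mathcal{C}|-1$, giving equality.

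In the first branch, where $T_k\cap\mathcal{I}(C)=\emptyset$ for every $C\in\mathcal{C}$, the point is that the intruder pinned inside $T_k$ cannot be made to coincide with any cover element, so it is ``wasted'' as far as uncovering is concerned. An uncovering configuration then needs a full cover of size $|\mathcal{C}|$ \emph{plus} one extra intruder dedicated to $T_k$, i.e. $|\mathcal{C}|+1$ intruders in total; conversely, with only $|\mathcal{C}|$ intruders (one pinned in $T_k$, the remaining $|\mathcal{C}|-1$ unable to complete a cover) some guard of $\mathbb{G}(T_k)$ is always free to cover $T_k$ by Lemma \ref{lemma:5.10}. This yields $n_I(T_k)=|\mathcal{C}|$, completing the case analysis.

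The step I expect to be the main obstacle is making the first‑branch lower bound airtight, because the case hypothesis is phrased for the \emph{particular} minimum cover $\mathcal{C}$, whereas the adversary may choose any size‑$|\mathcal{C}|$ cover. The danger is that the intruder placed at some $p\in T_k$ itself realizes a set $S_p=\{g_i:p\in\hat{C}_1(i)\}\in\mathcal{S}$ whose intersection meets $T_k$, and $S_p$ could complete a cover of size $|\mathcal{C}|$ even though no element of the fixed $\mathcal{C}$ meets $T_k$. The clean resolution is to take $\mathcal{C}$ to be a minimum cover that uses an element meeting $T_k$ whenever one exists, so that the two branches correspond exactly to the non‑existence and existence of such a cover; verifying that this choice is well posed and that no rearrangement of the intersection regions beats it (an argument in the same spirit as the cycle/forest reasoning in Lemma \ref{lemma:6}) is where the care is required. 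The remaining set‑cover bookkeeping is routine once Lemma \ref{lemma:5.10} is invoked.
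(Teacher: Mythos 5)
Your proof is correct and follows essentially the same route as the paper's: the upper bounds come from placing one intruder in each element of the minimum cover (letting the intruder pinned inside $T_k$ double as a blocker exactly when some $\mathcal{I}(C)$ meets $T_k$), and the lower bound $n_I(T_k)\geq|\mathcal{C}|-1$ comes from minimality of the cover, just as in the paper. The subtlety you flag at the end --- that the case split should really be over whether \emph{some} minimum cover has an element meeting $T_k$, rather than the one fixed $\mathcal{C}$ --- is a genuine imprecision that the paper's own proof also glosses over, and your proposed fix (choose $\mathcal{C}$ among minimum covers to prefer one with an element meeting $T_k$ whenever such a cover exists) is the right way to make the first-branch claim airtight.
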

\begin{proof} $|\mathcal{C}|$ intruders, each placed in a distinct $\mathcal{I}(C)$, are sufficient to prevent all guards incident to $T_k$ from covering it when an intruder is located inside $T_k$. If $T_k\cap\mathcal{I}(C)=\emptyset\text{ }\forall C\in \mathcal{C}$, the intruder inside the $T_k$ cannot lie inside any 
	$\mathcal{I}(C), C\in\mathcal{C}$. Therefore, $n_I(T_k)=|\mathcal{C}|$ in this case. Otherwise, the intruder located inside $T_k$ can cover an $\mathcal{I}(C)$. Therefore, $n_I(T_k)<|\mathcal{C}|$ in this case.  
	
	Since $\mathcal{C}$ is the minimum set cover of $\mathbb{G}$, $|\mathcal{C}|-1$ intruders cannot prevent $T_k$ from being covered by at least one guard $g_{i}\in\mathbb{G}(T_k)$. Therefore, $n_I(T_k)\geq|\mathcal{C}|-1$. The theorem follows.
\end{proof}

In Figure \ref{fig:examplem} \subref{fig:examplema}, two type $1$ guards $g_1$ and $g_2$, and a type $2$ guard $g_3$ are shown with their corresponding external and internal critical curves. The corresponding endpoints $v_1(i)$ are shown with a green disc. The regions $\hat{U}_R^1(i)$ are shaded in orange, and the safe triangles are shaded in blue. $T$ is a regular triangle that can be covered by $g_1$, $g_2$ and $g_3$. Therefore, the set $\mathcal{C}$ consists of the external critical regions and has cardinality $3$. Since there is no intersection between any region in $\mathcal{C}$ and $T$ then $n_I(T)=3$. Figure \ref{fig:examplem} \subref{fig:examplemb}, shows the same case but for a smaller value of $r$. $s_{ext}^1(1) \cap s_{ext}^1(2) \neq \emptyset$. This implies that $\mathcal{C}$ consists of the extended critical region of $g_3$ and the intersection of the extended critical regions of $g_1$ and $g_2$. In this case, $\mathcal{C}$ consists of $\hat{C}_1(4)$ and $\hat{C}_1(1) \cap \hat{C}_1(2)$, so it has cardinality $2$. Since none of the regions in $\mathcal{C}$ intersects with $T$, then $n_I(T)=2$.


\begin{figure}[thpb]
	\begin{center}
		\begin{subfigure}{0.47\linewidth}
			\includegraphics[width=1\linewidth,height=1\linewidth]{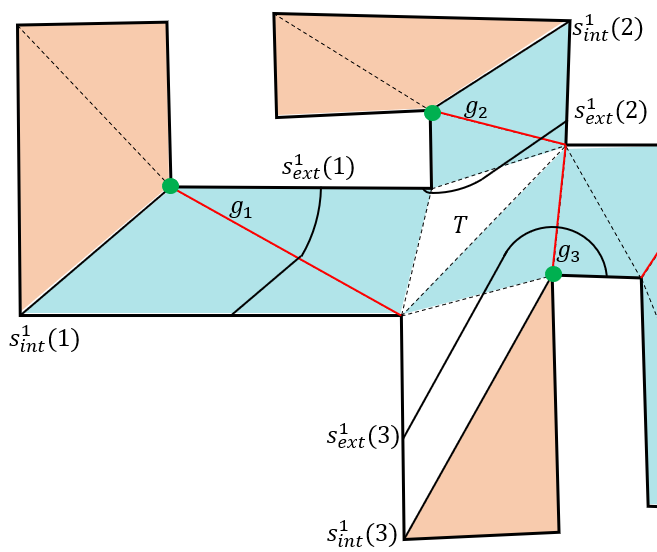}
			\caption{}
			\label{fig:examplema}
		\end{subfigure}
		\begin{subfigure}{0.45\linewidth}
			\includegraphics[width=1\linewidth,height=1\linewidth]{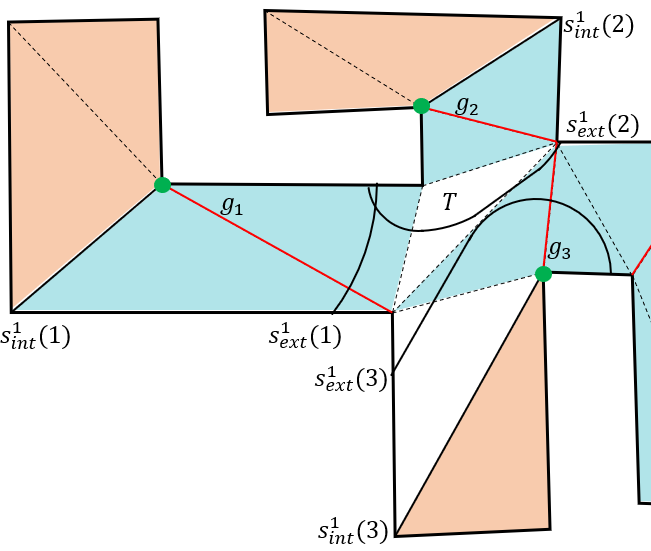}
			\caption{}
			\label{fig:examplemb}
		\end{subfigure}
	\end{center}
	\caption{(\subref{fig:examplema}) An instance where the regions $\hat{C}_1(i)$ do not intersect. (\subref{fig:examplemb}) An instance where two regions $\hat{C}_1(i)$ intersect.}
	\vspace{-0.1in}
	\label{fig:examplem}
\end{figure}

\begin{theorem} 
	\label{corollary:1}
	The minimum number of intruders that can be tracked based on the strategy proposed in (\ref{eq:3}) is $n_I^* = \min\{ n_I(T_k) : T_k \in \overline{\mathbb{T}^{safe}}(G) \}$.
\end{theorem}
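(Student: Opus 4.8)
The plan is to read the claimed quantity $n_I^*$ as the worst-case tracking capacity of the decentralized strategy~(\ref{eq:3}): the largest number of intruders that the guards are \emph{guaranteed} to track no matter how the intruders are placed and move, with the bottleneck being the non-safe triangle that is easiest to uncover. Since the tracking objective is violated exactly when some non-safe triangle that currently contains an intruder fails to be covered, I would first reduce the global problem to a family of per-triangle problems. The crucial observation is that, under~(\ref{eq:3}), whether a guard $g_i\in\mathbb{G}(T_k)$ covers $T_k$ at a given instant depends only on whether an intruder lies in its extended critical region $\hat{C}_1(i)$; hence, by Lemma~\ref{lemma:5.10}, $T_k$ is uncovered precisely when every guard in $\mathbb{G}(T_k)$ is disabled by an intruder in its $\hat{C}_1(i)$. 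Lemma~\ref{lemma:6.3} then pins down the number of intruders needed to disable all of $\mathbb{G}(T_k)$ simultaneously while keeping an intruder inside $T_k$, namely $n_I(T_k)+1$.

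For sufficiency, suppose at most $n_I^*$ intruders are present and fix any non-safe triangle $T_k\in\overline{\mathbb{T}^{safe}}(G)$. Because $n_I^*=\min\{n_I(T_j):T_j\in\overline{\mathbb{T}^{safe}}(G)\}\le n_I(T_k)$, the total number of intruders is at most $n_I(T_k)$, which by Lemma~\ref{lemma:6.3} is strictly fewer than the $n_I(T_k)+1$ intruders required to uncover $T_k$ when one is inside it. Hence $T_k$ remains covered whenever it contains an intruder. As $T_k$ was arbitrary and any failure must occur at some occupied non-safe triangle, no failure occurs and all intruders are tracked. This step relies on the fact that the per-triangle bound of Lemma~\ref{lemma:6.3} is a \emph{lower} bound on the number of intruders needed to break a single triangle, so the independent per-triangle guarantees compose without interference even though one guard may serve several triangles.

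For necessity, let $T_{k^*}$ attain the minimum, $n_I(T_{k^*})=n_I^*$. By Lemma~\ref{lemma:6.3}, $n_I^*+1$ intruders can be placed in the cover regions $\mathcal{I}(C)$, $C\in\mathcal{C}$ (with one intruder inside $T_{k^*}$, or an additional one, according to whether $T_{k^*}$ meets some $\mathcal{I}(C)$), so that every guard in $\mathbb{G}(T_{k^*})$ is disabled while an intruder sits inside $T_{k^*}$; this uncovers $T_{k^*}$ and defeats~(\ref{eq:3}). Thus $n_I^*+1$ intruders cannot be guaranteed to be tracked, and together with sufficiency the guaranteed number is exactly $n_I^*$. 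The delicate point, on which I would spend the most care, is precisely the sufficiency composition: I must verify that the motion rule~(\ref{eq:3}), which couples a guard's position across all incident triangles through a single $\min$ over intruders, still yields the per-triangle disabling characterization underlying Lemma~\ref{lemma:6.3}, so that counting intruders triangle-by-triangle is legitimate rather than double-counting or overlooking cross-triangle effects.
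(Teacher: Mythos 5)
Your proposal is correct and follows essentially the same route as the paper: both directions are obtained by applying Lemma \ref{lemma:6.3} triangle-by-triangle and taking the minimum over all non-safe triangles, with necessity witnessed by the triangle attaining the minimum. The one place you go further is in explicitly flagging the composition issue (that a single guard governed by (\ref{eq:3}) serves several triangles at once, so the per-triangle counts must not interfere); the paper's proof simply cites Lemma \ref{lemma:6.3} and leaves that point implicit, so your added care is a strengthening rather than a divergence.
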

\begin{proof} 
	Assume that $|\mathbb{I}| > \min\{ n_I(T_k) : T_k \in \overline{\mathbb{T}^{safe}}(G) \}$. It implies that there is at least one $T_k \in \overline{\mathbb{T}^{safe}}(G)$ for which $n_I(T_k) < |\mathbb{I}|$ and therefore, $T_k$ cannot be covered at all times according to Lemma \ref{lemma:6.3}$\implies$ $n_I^* \ngtr \min\{ n_I(T_k) : T_k \in \overline{\mathbb{T}^{safe}}(G) \}$. Now assume that $ |\mathbb{I}| \leq \min\{ n_I(T_k) : T_k \in \overline{\mathbb{T}^{safe}}(G) \}$. According to Lemma \ref{lemma:6.3}, the guards have a strategy to cover every non-safe triangle if $ |\mathbb{I}| \leq \min\{ n_I(T_k) : T_k \in \overline{\mathbb{T}^{safe}}(G) \}$. Therefore, $n_I^* = \min\{ n_I(T_k) : T_k \in \overline{\mathbb{T}^{safe}}(G) \}$.
\end{proof}

The set cover problem is NP-complete \cite{Karp:1972}. Several polynomial time approximation schemes (PTAS) for the set cover problem have been proposed in the literature \cite{Chvatal:1979,Gandhi:2006,Goldschmidt:1993,Johnson:1974,Lovasz:1975,Slavak:1997}.  Better approximation ratios can be obtained at the expense of computational complexity slightly higher than a PTAS \cite{Bougeois:2009}. For example, it has been shown that any $(1 - \alpha \ln {n})$ - approximation algorithm for the set cover problem must run in time at least $2^{n^{c \alpha}}$ for some small constants $0 < c < 1$ \cite{Moshkovitz:2012}. \cite{Cygan:2008,Cygan:2009} present some efforts to tighten the running time by reducing the value of $c$ in $2^{n^{c \alpha}}$. We can use either of the aforementioned approaches to obtain $\mathcal{C}$. For a problem instance of large size, one might prefer a PTAS, whereas a moderately exponential algorithm is more preferable when the number of guards covering each $T_k$ is small enough.


\section{Conclusion}
\label{sec:conclusion}

In this work, we addressed the problem of tracking mobile intruders in a polygonal environment using a team of diagonal guards. Leveraging on deployment strategies for mobile coverage in art gallery problems, we proposed control and coordination strategies for the guards to track intruders inside a polygonal environment. At first, we formulated the tracking problem as a multi-robot task assignment problem {\bf on the triangulation graph of a polygon}. We classified the guards based on their position with respect to the triangles of the triangulation. Next, we showed that the problem of finding the minimum speed of the guards to cover the triangles of the triangulation under the constraint that each triangle can only be covered by a single guard is NP-hard. Given the maximum speed of the intruder, we proposed an algorithm to find a feasible allocation of guards to the triangles of the triangulation when multiple guards are allowed to cover the triangle. We proved the correctness of the proposed algorithm, and its completeness for a specific set of inputs. Based on the task allocated to a guard, we proposed control laws for the guards to move along their diagonals. Finally, we extended the algorithm to address deployment and allocation strategies for non-simple polygons and multiple intruders. 



We believe that our paper is a first step towards MRS deployment for {\bf persistent tracking with provable guarantees}. An important direction of future research is to address the tracking problem for guards with sensing and motion constraints, for example, edge guards, which are more constrained in their motion, or line guards, which are less constrained than diagonal guards. Another future research direction is to study the tracking problem for special polygons, for example, orthogonal polygons, monotone polygons etc. For these polygons, it has been shown that fewer guards are required for coverage. Finally, the problem of tracking with mixed team of guards (static and mobile) is an interesting direction of future research.

\bibliographystyle{unsrt}
\bibliography{new_references}

\begin{thebibliography}{10}

\bibitem{Mittal:2016}
S.~Mittal and J.~K. Rai.
\newblock Wadoro: An autonomous mobile robot for surveillance.
\newblock In {\em Proceedings of IEEE International Conference on Power
  Electronics, Intelligent Control and Energy Systems}, pages 1--6, July 2016.

\bibitem{Witwicki:2017}
S.~Witwicki, J.~C. Castillo, J.~Messias, J.~Capitan, F.~S. Melo, P.~U. Lima,
  and M.~Veloso.
\newblock Autonomous surveillance robots: A decision-making framework for
  networked multiagent systems.
\newblock {\em IEEE Robotics Automation Magazine}, 24(3):52--64, Sept 2017.

\bibitem{Theodoridis:2012}
T.~Theodoridis and H.~Hu.
\newblock Toward intelligent security robots: A survey.
\newblock {\em IEEE Transactions on Systems, Man, and Cybernetics, Part C
  (Applications and Reviews)}, 42(6):1219--1230, Nov 2012.

\bibitem{Ciccimaro:1999}
D.~Ciccimaro, H.~Everett, G.~Gilbreath, and T.~Tran.
\newblock Automated security response robot, 1999.

\bibitem{Tuna:2012}
G.~Tuna, C.~Tasdemir, K.~Gulez, T.~V. Mumcu, and V.~C. Gungor.
\newblock Autonomous intruder detection system using wireless networked mobile
  robots.
\newblock In {\em Proceedings of IEEE Symposium on Computers and
  Communications}, pages 1--5, July 2012.

\bibitem{Bazydlo:2017}
Piotr Bazyd{\l}o, Janusz Kacprzyk, and Krzysztof Lasota.
\newblock Global path planning for a specialized autonomous robot for intrusion
  detection in wireless sensor networks (wsns) using a new evolutionary
  algorithm.
\newblock In Wojciech Mitkowski, Janusz Kacprzyk, Krzysztof Oprzkedkiewicz, and
  Pawe{\l} Skruch, editors, {\em Trends in Advanced Intelligent Control,
  Optimization and Automation}, pages 503--513. Springer International
  Publishing, 2017.

\bibitem{Oriolo:1998}
G.~Oriolo, G.~Ulivi, and M.~Vendittelli.
\newblock Real-time map building and navigation for autonomous robots in
  unknown environments.
\newblock {\em IEEE Transactions on Systems, Man, and Cybernetics, Part B
  (Cybernetics)}, 28(3):316--333, Jun 1998.

\bibitem{Montemerlo:2006}
Michael Montemerlo and Sebastian Thrun.
\newblock Large-scale robotic 3-d mapping of urban structures.
\newblock In Marcelo~H. Ang and Oussama Khatib, editors, {\em Experimental
  Robotics IX}, pages 141--150, Berlin, Heidelberg, 2006. Springer Berlin
  Heidelberg.

\bibitem{Majerus:2010}
N.~Wright M.~Majerus, R.~Vanaman.
\newblock Autonomous mine detection system (amds) neutralization payload
  module, 2010.

\bibitem{Nicoud:1995}
J.~D. Nicoud and M.~K. Habib.
\newblock The {P}emex-{B} autonomous demining robot: Perception and navigation
  strategies.
\newblock In {\em Proceedings of IEEE/RSJ International Conference on
  Intelligent Robots and Systems. Human Robot Interaction and Cooperative
  Robots}, volume~1, pages 419--424 vol.1, Aug 1995.

\bibitem{Rezazadegan:2017}
Fahimeh Rezazadegan, Sareh Shirazi, Ben Upcrofit, and Michael Milford.
\newblock Action recognition: From static datasets to moving robots.
\newblock In {\em Proceedings of IEEE International Conference on Robotics and
  Automation}, pages 3185--3191, 2017.

\bibitem{Popoola:2012}
O.~P. Popoola and K.~Wang.
\newblock Video-based abnormal human behavior recognition; a review.
\newblock {\em IEEE Transactions on Systems, Man, and Cybernetics, Part C
  (Applications and Reviews)}, 42(6):865--878, Nov 2012.

\bibitem{Lee:2003}
Cheong~Hee Lee, Soo~Hyun Kim, Sung~Chul Kang, Mun~Sang Kim, and Yoon~Keun Kwak.
\newblock Double-track mobile robot for hazardous environment applications.
\newblock {\em Advanced Robotics}, 17(5):447--459, 2003.

\bibitem{Yamamoto:1992}
S.~Yamamoto.
\newblock Development of inspection robot for nuclear power plant.
\newblock In {\em Proceedings of IEEE International Conference on Robotics and
  Automation}, pages 1559--1566 vol.2, May 1992.

\bibitem{Ghosh:2017}
Pradipta Ghosh, Andrea Gasparri, Jiong Jin, and Bhaskar Krishnamachari.
\newblock Robotic wireless sensor networks.
\newblock {\em arXiv preprint arXiv:1705.05415}, 2017.

\bibitem{Dipaola:2010}
Donato~Di Paola, Annalisa Milella, Grazia Cicirelli, and Arcangelo Distante.
\newblock An autonomous mobile robotic system for surveillance of indoor
  environments.
\newblock {\em International Journal of Advanced Robotic Systems}, 7(1):8,
  2010.

\bibitem{Sakuyama:2012}
T.~Sakuyama, J.~Figueroa, Y.~Miyazaki, and J.~Ota.
\newblock Transportation of a large object by small mobile robots using hand
  carts.
\newblock In {\em Proceedings of IEEE International Conference on Robotics and
  Biomimetics}, pages 2108--2113, Dec 2012.

\bibitem{Dunbabin:2012}
M.~Dunbabin and L.~Marques.
\newblock Robots for environmental monitoring: Significant advancements and
  applications.
\newblock {\em IEEE Robotics Automation Magazine}, 19(1):24--39, March 2012.

\bibitem{Afzaal:2016}
H.~Afzaal and N.~A. Zafar.
\newblock Robot-based forest fire detection and extinguishing model.
\newblock In {\em Proceedings of International Conference on Robotics and
  Artificial Intelligence}, pages 112--117, Nov 2016.

\bibitem{Matsuhira:2010}
N.~Matsuhira, F.~Ozaki, S.~Tokura, T.~Sonoura, T.~Tasaki, H.~Ogawa, M.~Sano,
  A.~Numata, N.~Hashimoto, and K.~Komoriya.
\newblock Development of robotic transportation system - shopping support
  system collaborating with environmental cameras and mobile robots -.
\newblock In {\em Proceedings of International Symposium on Robotics}, pages
  1--6, June 2010.

\bibitem{Ko:2009}
A.~W.~Y. Ko and H.~Y.~K. Lau.
\newblock Intelligent robot-assisted humanitarian search and rescue system.
\newblock {\em International Journal of Advanced Robotic Systems}, 6(2):12,
  2009.

\bibitem{Fisk:1978}
Steve Fisk.
\newblock A short proof of {C}hv{\'a}tal's {W}atchman {T}heorem.
\newblock {\em Journal of Combinatorial Theory, Series B}, 24(3):374, 1978.

\bibitem{Kahn:1983}
J.~Kahn, M.~Klawe, and D.~Kleitman.
\newblock Traditional galleries require fewer watchmen.
\newblock {\em SIAM Journal on Algebraic Discrete Methods}, 4(2):194--206,
  1983.

\bibitem{Durocher:2017}
Stephane Durocher, Omrit Filtser, Robert Fraser, Ali~D. Mehrabi, and Saeed
  Mehrabi.
\newblock Guarding orthogonal art galleries with sliding cameras.
\newblock {\em Computational Geometry}, 65:12 -- 26, 2017.

\bibitem{Durocher:2014}
Stephane Durocher, Omrit Filtser, Robert Fraser, Ali~D. Mehrabi, and Saeed
  Mehrabi.
\newblock A (7/2)-approximation algorithm for guarding orthogonal art galleries
  with sliding cameras.
\newblock In Alberto Pardo and Alfredo Viola, editors, {\em LATIN 2014:
  Theoretical Informatics}, pages 294--305, Berlin, Heidelberg, 2014. Springer
  Berlin Heidelberg.

\bibitem{Durocher:2015}
Stephane Durocher and Saeed Mehrabi.
\newblock A 3-approximation algorithm for guarding orthogonal art galleries
  with sliding cameras.
\newblock In Kratochv{\'i}l Jan, Mirka Miller, and Dalibor Froncek, editors,
  {\em Combinatorial Algorithms}, pages 140--152, Cham, 2015. Springer
  International Publishing.

\bibitem{Deberg:2017}
Mark de~Berg, Stephane Durocher, and Saeed Mehrabi.
\newblock Guarding monotone art galleries with sliding cameras in linear time.
\newblock {\em Journal of Discrete Algorithms}, 44:39 -- 47, 2017.

\bibitem{O'Rourke:1987}
Joseph O'Rourke.
\newblock {\em Art gallery theorems and algorithms}.
\newblock Oxford University Press, New York, NY, 1987.

\bibitem{Hoffmann:1990}
Frank Hoffmann.
\newblock On the rectilinear art gallery problem.
\newblock In Michael~S. Paterson, editor, {\em Automata, Languages and
  Programming: Proceedings of 17th International Colloquium Warwick University,
  England}, pages 717--728. Springer Berlin Heidelberg, Berlin, Heidelberg,
  1990.

\bibitem{Czyzowicz:1994}
J.~Czyzowicz, E.~Rivera-Campo, N.~Santoro, J.~Urrutia, and J.~Zaks.
\newblock Guarding rectangular art galleries.
\newblock {\em Discrete Applied Mathematics}, 50(2):149 -- 157, 1994.

\bibitem{Urrutia:2000}
Jorge Urrutia.
\newblock Art gallery and illumination problems.
\newblock In {\em Handbook of Computational Geometry}, pages 973--1027.
  North-Holland, 2000.

\bibitem{O'Rourke:1983}
Joseph O'Rourke.
\newblock Galleries need fewer mobile guards: A variation on {C}hv{\'a}tal's
  theorem.
\newblock {\em Geometriae Dedicata}, 14(3):273--283, 1983.

\bibitem{Toussaint:1982}
Godfried~T. Toussaint.
\newblock Computational geometric problems in pattern recognition.
\newblock In Josef Kittler, King~Sun Fu, and Louis-Fran{\c{c}}ois Pau, editors,
  {\em Pattern Recognition Theory and Applications}, pages 73--91, Dordrecht,
  1982. Springer Netherlands.

\bibitem{Colley:1995}
Paul Colley, Henk Meijer, and David Rappaport.
\newblock Motivating lazy guards.
\newblock In {\em Proceedings of the Canadian Conference on Computational
  Geometry}, pages 121--126, 1995.

\bibitem{Kay:1970}
David~C. Kay and Merle~D. Guay.
\newblock Convexity and a certain property {P}m.
\newblock {\em Israel Journal of Mathematics}, 8(1):39--52, Mar 1970.

\bibitem{Mahdavi:2014}
Salma~Sadat Mahdavi, Saeed Seddighin, and Mohammad Ghodsi.
\newblock Covering orthogonal polygons with sliding k-transmitters.
\newblock In {\em Proceedings of the Canadian Conference on Computational
  Geometry}. Citeseer, 2014.

\bibitem{Fazli:2010}
Pooyan Fazli, Alireza Davoodi, Philippe Pasquier, and Alan~K Mackworth.
\newblock Fault-tolerant multi-robot area coverage with limited visibility.
\newblock In {\em Proceedings of the Workshop on Search and Pursuit/Evasion in
  the Physical World: Efficiency, Scalability, and Guarantees}, 2010.

\bibitem{Fazli:2010a}
P.~Fazli, A.~Davoodi, P.~Pasquier, and A.~K. Mackworth.
\newblock Complete and robust cooperative robot area coverage with limited
  range.
\newblock In {\em Proceedings of IEEE/RSJ International Conference on
  Intelligent Robots and Systems}, pages 5577--5582, Oct 2010.

\bibitem{Lee:1986}
D.~Lee and A.~Lin.
\newblock Computational complexity of art gallery problems.
\newblock {\em IEEE Transactions on Information Theory}, 32(2):276--282, Mar
  1986.

\bibitem{Ghosh:2010}
Subir~Kumar Ghosh.
\newblock Approximation algorithms for art gallery problems in polygons.
\newblock {\em Discrete Applied Mathematics}, 158(6):718 -- 722, 2010.

\bibitem{Amit:2010}
Yoav Amit, Joseph S.~B. Mitchel, and Eli Packer.
\newblock Locating guards for visibility coverage of polygons.
\newblock {\em International Journal of Computational Geometry and
  Applications}, 20(5):601--630, 2010.

\bibitem{Barnes:1991}
D.~P. Barnes and J.~O. Gray.
\newblock Behaviour synthesis for co-operant mobile robot control.
\newblock In {\em Proceedings of International Conference on Control},
  volume~2, pages 1135--1140, March 1991.

\bibitem{Gerkey:2004a}
Brian~P. Gerkey and Maja~J. Mataria.
\newblock A formal analysis and taxonomy of task allocation in multi-robot
  systems.
\newblock {\em The International Journal of Robotics Research}, 23(9):939--954,
  2004.

\bibitem{Parker:1997}
L.~E. Parker and B.~A. Emmons.
\newblock Cooperative multi-robot observation of multiple moving targets.
\newblock In {\em Proceedings of IEEE International Conference on Robotics and
  Automation}, volume~3, pages 2082--2089 vol.3, Apr 1997.

\bibitem{Parker:2002}
Lynne~E Parker.
\newblock Distributed algorithms for multi-robot observation of multiple moving
  targets.
\newblock {\em Autonomous robots}, 12(3):231--255, 2002.

\bibitem{Ding:2006}
P.

\bibitem{Kolling:2006}
A.~Kolling and S.~Carpin.
\newblock Multirobot cooperation for surveillance of multiple moving targets -
  a new behavioral approach.
\newblock In {\em IEEE International Conference on Robotics and Automation},
  pages 1311--1316, 2006.

\bibitem{Kolling:2007}
A.~Kolling and S.~Carpin.
\newblock Cooperative observation of multiple moving targets: An algorithm and
  its formalization.
\newblock {\em The International Journal of Robotics Research}, 29(9):935--953,
  2007.

\bibitem{Werger:2000}
B.~B. Werger and M.~J. Matari{\'c}.
\newblock Broadcast of local eligibility for multi-target observation.
\newblock In {\em Distributed autonomous robotic systems 4}, pages 347--356.
  Springer, 2000.

\bibitem{Spletzer:2003}
J.~R. Spletzer and C.~J. Taylor.
\newblock Dynamic sensor planning and control for optimally tracking targets.
\newblock {\em The International Journal of Robotics Research}, 22(1):7--20,
  2003.

\bibitem{Xu:2013a}
Z.~Xu, R.~Fitch, J.~Underwood, and S.~Sukkarieh.
\newblock Decentralized coordinated tracking with mixed discrete-continuous
  decisions.
\newblock {\em Journal of Field Robotics}, 30(5):717--740, 2013.

\bibitem{Isler:2005}
Volkan Isler, Sanjeev Khanna, John Spletzer, and Camillo~J. Taylor.
\newblock Target tracking with distributed sensors: The focus of attention
  problem.
\newblock {\em Computer Vision and Image Understanding}, 100(1):225 -- 247,
  2005.

\bibitem{Goossens:2016}
D.~Goossens, S.~Polyakovskiy, F.~Spieksma, and G.~J. Woeginger.
\newblock The focus of attention problem.
\newblock {\em Algorithmica}, 74(2):559--573, 2016.

\bibitem{Dames:2018}
Philip Dames, Pratap Tokekar, and Vijay Kumar.
\newblock Detecting, localizing, and tracking an unknown number of moving
  targets using a team of mobile robots.
\newblock {\em The International Journal of Robotics Research},
  36(13-14):1540--1553, 2017.

\bibitem{Hausman:2015}
Karol Hausman, J\"{o}rg M\"{u}ller, Abishek Hariharan, Nora Ayanian, and
  Gaurav~S Sukhatme.
\newblock Cooperative multi-robot control for target tracking with onboard
  sensing.
\newblock {\em The International Journal of Robotics Research},
  34(13):1660--1677, 2015.

\bibitem{Jung:2002}
Boyoon Jung and Gaurav~S. Sukhatme.
\newblock Tracking anonymous targets using a robotics sensor network.
\newblock In {\em Proceedings of Association for the Advancement of Artificial
  Intelligence Spring Symposium}. AAAI Press, 2002.

\bibitem{Markov:2007}
S.~Markov and S.~Carpin.
\newblock A cooperative distributed approach to target motion control in
  multirobot observation of multiple targets.
\newblock In {\em Proceedings of IEEE/RSJ International Conference on
  Intelligent Robots and Systems}, pages 931--936, Oct 2007.

\bibitem{Tang:2005}
Zhijun Tang and Umit Ozguner.
\newblock Motion planning for multitarget surveillance with mobile sensor
  agents.
\newblock {\em IEEE Transactions on Robotics}, 21(5):898--908, 2005.

\bibitem{Nam:2014}
C.~Nam and D.~A. Shell.
\newblock Assignment algorithms for modeling resource contention and
  interference in multi-robot task-allocation.
\newblock In {\em Proceedings of IEEE International Conference on Robotics and
  Automation}, pages 2158--2163, May 2014.

\bibitem{Gerkey:2003}
B.~P. Gerkey and M.~J. Mataric.
\newblock Multi-robot task allocation: Analyzing the complexity and optimality
  of key architectures.
\newblock In {\em Proceedings of IEEE International Conference on Robotics and
  Automation}, volume~3, pages 3862--3868, Sept 2003.

\bibitem{Parker:1999}
Lynne~E. Parker.
\newblock Cooperative robotics for multi-target observation.
\newblock {\em Intelligent Automation \& Soft Computing}, 5(1):5--19, 1999.

\bibitem{Botelho:1999}
S.~C. Botelho and R.~Alami.
\newblock M+: {A} scheme for multi-robot cooperation through negotiated task
  allocation and achievement.
\newblock In {\em Proceedings of IEEE International Conference on Robotics and
  Automation}, volume~2, pages 1234--1239, 1999.

\bibitem{Gerkey:2001}
Brian~P. Gerkey and Maja~J. Matari{\'{c}}.
\newblock Principled communication for dynamic multi-robot task allocation.
\newblock In Daniela Rus and Sanjiv Singh, editors, {\em Experimental Robotics
  VII}, pages 353--362, Berlin, Heidelberg, 2001. Springer Berlin Heidelberg.

\bibitem{Stentz:1999}
Anthony Stentz and M~Bernardine Dias.
\newblock A free market architecture for coordinating multiple robots.
\newblock Technical Report CMU-RI-TR-99-42, Carnegie Mellon University,
  Pittsburgh, PA, December 1999.

\bibitem{Chaimowicz:2002}
L.~Chaimowicz, M.~F.~M. Campos, and V.~Kumar.
\newblock Dynamic role assignment for cooperative robots.
\newblock In {\em Proceedings of IEEE International Conference on Robotics and
  Automation}, volume~1, pages 293--298, 2002.

\bibitem{Lerman:2006}
Kristina Lerman, Chris Jones, Aram Galstyan, and Maja~J Matarić.
\newblock Analysis of dynamic task allocation in multi-robot systems.
\newblock {\em The International Journal of Robotics Research}, 25(3):225--241,
  2006.

\bibitem{Yan:2011}
Zhi Yan, Nicolas Jouandeau, and Arab~Ali Ch{\'e}rif.
\newblock Multi-robot decentralized exploration using a trade-based approach.
\newblock In {\em Proceedings of International Conference on Informatics in
  Control}, volume~2, Jan 2011.

\bibitem{Dahl:2009}
Torbjarn~S. Dahl, Maja Mataria, and Gaurav~S. Sukhatme.
\newblock Multi-robot task allocation through vacancy chain scheduling.
\newblock {\em Robotics and Autonomous Systems}, 57(6):674 -- 687, 2009.

\bibitem{Hanna:2005}
H.~Hanna.
\newblock Decentralized approach for multi-robot task allocation problem with
  uncertain task execution.
\newblock In {\em Proceedings of IEEE/RSJ International Conference on
  Intelligent Robots and Systems}, pages 535--540, Aug 2005.

\bibitem{Michael:2008}
N.~Michael, M.~M. Zavlanos, V.~Kumar, and G.~J. Pappas.
\newblock Distributed multi-robot task assignment and formation control.
\newblock In {\em Proceedings of IEEE International Conference on Robotics and
  Automation}, pages 128--133, May 2008.

\bibitem{Durocher:2013}
Stephane Durocher and Saeed Mehrabi.
\newblock Guarding orthogonal art galleries using sliding cameras: Algorithmic
  and hardness results.
\newblock In Krishnendu Chatterjee and Jir{\'i} Sgall, editors, {\em
  Mathematical Foundations of Computer Science 2013}, pages 314--324, Berlin,
  Heidelberg, 2013. Springer Verlag.

\bibitem{Schuchardt:1995}
Dietmar Schuchardt and Hans-Dietrich Hecker.
\newblock Two {NP}-hard art-gallery problems for ortho-polygons.
\newblock {\em Mathematical Logic Quarterly}, 41(2):261--267, 1995.

\bibitem{Eidenbenz:1998}
Stephan Eidenbenz.
\newblock Inapproximability results for guarding polygons without holes.
\newblock In Kyung-Yong Chwa and Oscar~H. Ibarra, editors, {\em Algorithms and
  Computation}, pages 427--437, Berlin, Heidelberg, 1998. Springer Berlin
  Verlag.

\bibitem{Almahmud:2016}
Mohammad AlMahmud et~al.
\newblock Algorithms for minimum length sliding camera problem.
\newblock Master's thesis, Bangladesh University of Engineering and Technology,
  2016.

\bibitem{Karp:1972}
Richard~M. Karp.
\newblock {\em Reducibility among combinatorial problems}, pages 85--103.
\newblock Springer US, Boston, MA, 1972.

\bibitem{Chen:2005}
Jianer Chen, Benny Chor, Mike Fellows, Xiuzhen Huang, David Juedes, Iyad~A.
  Kanj, and Ge~Xia.
\newblock Tight lower bounds for certain parameterized {NP}-hard problems.
\newblock {\em Information and Computation}, 201(2):216 -- 231, 2005.

\bibitem{Garey:2002}
Michael~R Garey and David~S Johnson.
\newblock {\em Computers and intractability}, volume~29.
\newblock W. H. Freeman New York, 41 Madison Avenue. New York. NY, 2002.

\bibitem{Pardalos:1994}
Panos~M Pardalos and Jue Xue.
\newblock The maximum clique problem.
\newblock {\em Journal of Global Optimization}, 4(3):301--328, 1994.

\bibitem{Arora:1998}
Sanjeev Arora, Carsten Lund, Rajeev Motwani, Madhu Sudan, and Mario Szegedy.
\newblock Proof verification and the hardness of approximation problems.
\newblock {\em Journal of the ACM (JACM)}, 45(3):501--555, 1998.

\bibitem{Hastad:1999}
Johan H{\aa}stad.
\newblock Clique is hard to approximate withinn1- $\varepsilon$.
\newblock {\em Acta Mathematica}, 182(1):105--142, 1999.

\bibitem{Babel:1991}
Luitpold Babel.
\newblock Finding maximum cliques in arbitrary and in special graphs.
\newblock {\em SIAM Journal on Computing}, 46(4):321--341, 1991.

\bibitem{Balas:1996}
Egon Balas and Jue Xue.
\newblock Weighted and unweighted maximum clique algorithms with upper bounds
  from fractional coloring.
\newblock {\em Algorithmica}, 15(5):397--412, 1996.

\bibitem{Wood:1997}
David~R Wood.
\newblock An algorithm for finding a maximum clique in a graph.
\newblock {\em Operations Research Letters}, 21(5):211--217, 1997.

\bibitem{Feige:2004}
Uriel Feige.
\newblock Approximating maximum clique by removing subgraphs.
\newblock {\em SIAM Journal on Discrete Mathematics}, 18(2):219--225, 2004.

\bibitem{Liu:2007}
Xu-Zheng Liu, Jun-Hai Yong, Guo-Qin Zheng, and Jia-Guang Sun.
\newblock An offset algorithm for polyline curves.
\newblock {\em Computers in Industry}, 58(3):240 -- 254, 2007.

\bibitem{Chvatal:1979}
V.~Chv{\'a}tal.
\newblock A greedy heuristic for the set-covering problem.
\newblock {\em Mathematics of Operations Research}, 4(3):233--235, August 1979.

\bibitem{Gandhi:2006}
Rajiv Gandhi, Eran Halperin, Samir Khuller, Guy Kortsarz, and Aravind
  Srinivasan.
\newblock An improved approximation algorithm for vertex cover with hard
  capacities.
\newblock {\em Journal of Computer and System Sciences}, 72(1):16 -- 33, 2006.

\bibitem{Goldschmidt:1993}
Olivier Goldschmidt, Dorit~S. Hochbaum, and Gang Yu.
\newblock A modified greedy heuristic for the set covering problem with
  improved worst case bound.
\newblock {\em Information Processing Letters}, 48(6):305 -- 310, 1993.

\bibitem{Johnson:1974}
David~S. Johnson.
\newblock Approximation algorithms for combinatorial problems.
\newblock {\em Journal of Computer and System Sciences}, 9(3):256 -- 278, 1974.

\bibitem{Lovasz:1975}
L.~Lov{\'a}sz.
\newblock On the ratio of optimal integral and fractional covers.
\newblock {\em Discrete Mathematics}, 13(4):383 -- 390, 1975.

\bibitem{Slavak:1997}
Petr Slav{\'a}k.
\newblock A tight analysis of the greedy algorithm for set cover.
\newblock {\em Journal of Algorithms}, 25(2):237 -- 254, 1997.

\bibitem{Bougeois:2009}
N.~Bourgeois, B.~Escoffier, and V.Th. Paschos.
\newblock Efficient approximation of min set cover by moderately exponential
  algorithms.
\newblock {\em Theoretical Computer Science}, 410(21):2184 -- 2195, 2009.

\bibitem{Moshkovitz:2012}
Dana Moshkovitz.
\newblock The projection games conjecture and the {NP}-hardness of ln
  n-approximating set-cover.
\newblock In Anupam Gupta, Klaus Jansen, Jos{\'e} Rolim, and Rocco Servedio,
  editors, {\em Approximation, Randomization, and Combinatorial Optimization.
  Algorithms and Techniques}, pages 276--287, Berlin, Heidelberg, 2012.
  Springer Berlin Verlag.

\bibitem{Cygan:2008}
Marek Cygan, Lukasz Kowalik, Marcin Pilipczuk, and Mateusz Wykurz.
\newblock Exponential-time approximation of hard problems.
\newblock {\em Computing Research Repository}, abs/0810.4934, 2008.

\bibitem{Cygan:2009}
Marek Cygan, Lukasz Kowalik, and Mateusz Wykurz.
\newblock Exponential-time approximation of weighted set cover.
\newblock {\em Information Processing Letters}, 109(16):957 -- 961, 2009.

\bibitem{Wein:2007}
Ron Wein.
\newblock Exact and approximate construction of offset polygons.
\newblock {\em Computer-Aided Design}, 39(6):518 -- 527, 2007.

\bibitem{Mehlhorn:1999}
Kurt Mehlhorn and Stefan N{\"a}her.
\newblock {\em {LED}A: A Platform for Combinatorial and Geometric Computing}.
\newblock Cambridge University Press, 1999.

\bibitem{Choi:1999}
B.K. Choi and S.C. Park.
\newblock A pair-wise offset algorithm for 2d point-sequence curve.
\newblock {\em Computer-Aided Design}, 31(12):735 -- 745, 1999.

\bibitem{Jian:2001}
Jian-She Li and Xue-Jun Ye.
\newblock Offset of planar curves based on polylines.
\newblock {\em Journal of Institute of Command and Technology}, 1:1--9, 2001.

\end{thebibliography}

\appendix

\section{Construction of $\hat{U}^{2}_R(i)$ and $R^{2}_j(i)$}
\label{sec:appa}


Obtaining all the regions $R_j^{2}(k)$ yields the set $S_R^{2}(i)$ and the region $\hat{U}_R^2(i)$. 

We claim that in general, the boundary of any region $\hat{U}_R^{1}(i)$, denoted by $\delta (\hat{U}_R^{1}(i))$, consists of arcs of circle. Let $\mathbb{S}(\delta(\hat{U}_R^1(i)))$ be the set of arcs of circle of $\delta(\hat{U}_R^1(i))$. For each $s \in \mathbb{S}(\delta(\hat{U}_R^1(i)))$, we also define $c(s)$ as the center of the circle generating $s$ (center at infinite in the case of a line segment), $rad(s)$ as its radius. We define the \textit{expanded boundary} of $\hat{U}_R^1(i)$ as $\gamma (\hat{U}_R^1(i))=\{ p \in P \backslash \hat{U}_R^1(i): d(p,\delta (\hat{U}_R^1(i)))=d_I^i \}$. Lemma \ref{lemma:boundary} shows that $\gamma(\hat{U}_R^1(i))=s_{int}^1(i)$ consists only of arcs of circle grouped in a set $\mathbb{S} (\gamma(\hat{U}_R^1(i)))$, and Lemma \ref{lemma:boundary2} shows that the boundary of $\hat{U}_R^{1}(i)$ and $\hat{U}_R^{2}(i)$ consist of arcs of circle.

It is important to make this remark since most of the computational geometry libraries include segments and circle arcs as basic classes, which are required to build regions $\hat{U}_R^{1}(i)$ and $\hat{U}_R^{2}(i)$. Some computational geometry libraries such as CGAL \cite{Wein:2007} and \cite{Mehlhorn:1999}, include the implementation of approximation techniques to compute offset curves of polygons. For the case of offsets of polylines there are some approximation algorithms \cite{Liu:2007,Choi:1999,Jian:2001} which may be implemented using the aforementioned libraries with their line segment and circle classes. For this paper we used the LEDA $6.5$ library in the simulations.

\begin{lemma}
	\label{lemma:boundary}
	Given $\mathbb{S}(\delta(\hat{U}_R^1(i)))$, its corresponding $\gamma(\hat{U}_R^1(i))$ consists of arcs of circle. 
\end{lemma}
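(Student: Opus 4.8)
The plan is to analyze the geodesic distance function $p \mapsto d(p,\delta(\hat{U}_R^1(i)))$ on $P \setminus \hat{U}_R^1(i)$ and to show that its level set at the value $d_I^i$ is a finite union of circular arcs. The starting point is the standard fact that, for the shortest-path metric of Section \ref{sec:intro}, the shortest path from a point $p$ to the source set $\delta(\hat{U}_R^1(i))$ is a polygonal chain whose interior vertices are reflex vertices of $P$ (vertices of the outer boundary or of the holes). First I would partition $P \setminus \hat{U}_R^1(i)$ into cells according to the combinatorial type of this shortest path: the ordered sequence of reflex vertices it turns around, together with the boundary element $s \in \mathbb{S}(\delta(\hat{U}_R^1(i)))$ (or the boundary endpoint) on which it terminates. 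This is precisely the shortest-path map rooted at the source, and it decomposes the free region into finitely many cells with piecewise-circular walls.

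The core of the argument is local: within a single cell the distance function takes one of a few explicit forms, and in every case the corresponding level curve is an arc of circle (a line segment being treated as an arc of infinite radius, per the paper's convention). I would carry out the following case analysis. If the shortest path from $p$ first reaches a reflex vertex $u$, that is, $u$ is the reflex vertex closest to $p$ along the path, then $d(p,\delta(\hat{U}_R^1(i))) = |p-u| + C_u$, where $C_u$ is the fixed geodesic distance from $u$ to the source along the remainder of the path; the equation $|p-u| = d_I^i - C_u$ defines a circle centered at $u$, so inside the cell the level set is a circular arc. If instead the shortest path is a straight segment reaching the relative interior of a boundary arc $s$, then $d(p,\delta(\hat{U}_R^1(i)))$ is measured radially to $s$ and the level set is a concentric arc about $c(s)$ of radius $rad(s) \pm d_I^i$; if $s$ is a straight segment the level set is a line parallel to $s$ at offset $d_I^i$; and if the closest point is an endpoint $w$ of some $s$, the level set is a circular arc centered at $w$ of radius $d_I^i$.

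Finally I would glue the per-cell pieces together. Since the shortest-path map has finitely many cells, $\gamma(\hat{U}_R^1(i))$ is a union of finitely many circular arcs, one contributed by each cell that the level value $d_I^i$ meets; continuity of the geodesic distance function guarantees that consecutive arcs share endpoints on the common cell walls, so the arcs concatenate into the expanded boundary, yielding the claimed set $\mathbb{S}(\gamma(\hat{U}_R^1(i)))$ of circular arcs. The main obstacle I anticipate is the bookkeeping forced by the geodesic (rather than Euclidean) metric: I must verify that on each cell exactly one anchor feature --- a reflex vertex, a boundary arc, a boundary segment, or a boundary endpoint --- realizes the minimum, so that the reduction ``distance $=$ Euclidean distance to that feature plus a constant'' is valid and the level set is unambiguously a single arc. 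Establishing that this cell decomposition is finite and that the anchor is locally constant on each cell is the crux; once that is in hand, the per-cell computations are routine.
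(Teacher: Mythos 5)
Your proposal is correct and follows essentially the same route as the paper: the paper also splits $\gamma(\hat{U}_R^1(i))$ according to whether the shortest path to $\delta(\hat{U}_R^1(i))$ is a direct segment (in which case it invokes the fact that the offset of a polyline is a polyline) or first wraps around a reflex vertex $v_p$ (in which case the points form an arc centered at $v_p$ of radius $d_I^i - d(v_p,\delta(\hat{U}_R^1(i)))$, exactly your ``Euclidean distance to the anchor plus a constant'' reduction). Your shortest-path-map framing is a more systematic packaging of the same case analysis, and makes explicit the finiteness and gluing steps that the paper leaves implicit.
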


\begin{proof}
	Trivially, if there are no obstacles between $\delta(\hat{U}_R^1(i))$ and $\gamma(\hat{U}_R^1(i))$, $\gamma (\hat{U}_R^1(i))=\{ p \in P \backslash \hat{U}_R^1(i): d(p,\delta (\hat{U}_R^1(i)))=d_I^i \}$ is the offset of $\delta(\hat{U}_R^1(i))$ (with $d_I^i$ as the offset distance). Therefore, $\gamma (\hat{U}_R^1(i))$ must be a polyline curve \cite{Liu:2007}.
	
	The presence of obstacles between $\delta(\hat{U}_R^1(i))$ and $\gamma(\hat{U}_R^1(i))$ implies that the shortest path from some points in $\gamma(\hat{U}_R^1(i))$ to $\delta(\hat{U}_R^1(i))$ is a chain of connected line segments instead of a line segment, as in the case where there are no obstacles between $\gamma(\hat{U}_R^1(i))$ and $\delta(\hat{U}_R^1(i))$. In Figure \ref{fig:bound} a region $\hat{U}_R^1(i)$ is shown as an orange triangle, its corresponding $\delta(\hat{U}_R^1(i))$ is a black dotted segment, $\gamma(\hat{U}_R^1(i))$ is represented as a dotted curve divided into four arcs of circle. $s_1 \in \mathbb{S}(\gamma(\hat{U}_R^1(i)))$ illustrates the case of points in $\gamma(\hat{U}_R^1(i))$ such that the shortest path between them and $\delta(\hat{U}_R^1(i))$ is a line segment. Now consider $s_2 \in \mathbb{S}(\gamma(\hat{U}_R^1(i)))$ the presence of an obstacle implies that for all the points in $s_2$, such as the one illustrated as a black circle, the shortest path between $\gamma(\hat{U}_R^1(i))$ and $\delta(\hat{U}_R^1(i))$ consists of two connected line segments, one with endpoints in $s_2$ and vertex $v_1 \in \mathbb{V}(G)$ and other with $v_1$ as an endpoint and the other at $\delta(\hat{U}_R^1(i))$. Also, for the points in $s_3 \in \mathbb{S}(\gamma(\hat{U}_R^1(i)))$, the shortest path between $\gamma(\hat{U}_R^1(i))$ and $\delta(\hat{U}_R^1(i))$ consists of three connected line segments, one from $s_3$ to $v_2$, another from $v_2$ to $v_1$ and the last one from $v_1$ to $\delta(\hat{U}_R^1(i))$. Consider the points $p \in \gamma(\hat{U}_R^1(i))$ for which the shortest path between them and $\delta(\hat{U}_R^1(i))$ is not a line segment due to the presence of reflex vertices of the environment. As we can see in Figure \ref{fig:bound}, for any of such $p$ points, the shortest path from $p$ to $\delta(\hat{U}_R^1(i))$ must visit first a reflex vertex $v_{p} \in \mathbb{V}(G)$ ($v_1$ or $v_2$ for instance) of the environment. Clearly, the union of such points $p$ for which the first segment of the shortest path between them and $\delta(\hat{U}_R^1(i))$ has $v_{p}$ as an endpoint, is a subset of the union of all points in the plane that are equidistant to $v_{p}$. Thus, they form an arc of circle, which is centered at $v_{p}$ with radius $d_I^i- d(v_{p}, \delta(\hat{U}_R^1(i)))$, where $d(v_{p}, \delta(\hat{U}_R^1(i)))$ is the length of the shortest path between $v_{p}$ and $\delta(\hat{U}_R^1(i))$. Therefore, every point in $\gamma(\hat{U}_R^1(i))$ belongs to an arc of circle, so $\gamma(\hat{U}_R^1(i))$ is the union of a set $\mathbb{S}(\gamma(\hat{U}_R^1(i)))$ of arcs of circle.
\end{proof}

\begin{figure}[htb]
	\begin{center} 
		\includegraphics[width=0.62\linewidth,height=0.43\linewidth]{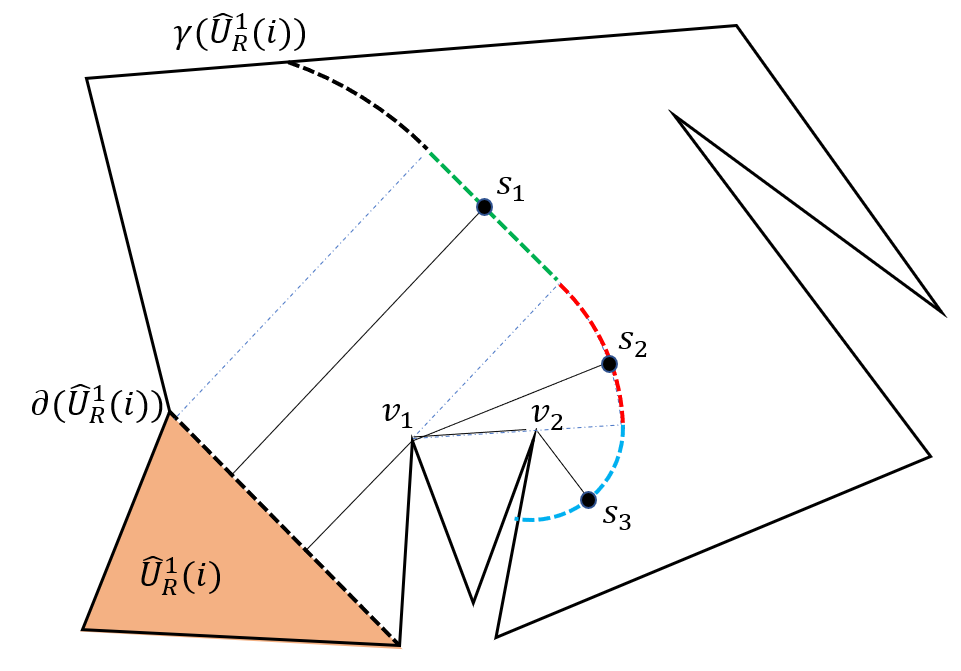}
	\end{center}
	\caption{Boundary $\gamma(\hat{U}_R^1(i))$ decomposed in four different arcs of circle.}
	\label{fig:bound}
\end{figure}

\begin{lemma}
	\label{lemma:boundary2}
	Every $\hat{U}_R^1(i)$ and $\hat{U}_R^2(i)$ is bounded by arcs of circle.
\end{lemma}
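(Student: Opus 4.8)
The plan is to prove the statement by induction on the order in which the regions are computed by Algorithm~\ref{alg:allocation} (through its calls to Algorithm~\ref{alg:orient}), carrying the invariant that every region produced so far---each $R_j^1(k)$, $R_j^2(k)$, $\hat{U}_R^1(k)$ and $\hat{U}_R^2(k)$---is bounded by finitely many arcs of circle, with line segments regarded as degenerate arcs whose center lies at infinity, exactly as in the statement of Lemma~\ref{lemma:boundary}. The backbone of the argument is a closure fact: the class of planar regions whose boundary is a finite union of circular arcs and line segments is closed under complement and under finite unions and intersections. This holds because any two such boundary pieces (arc--arc, arc--segment, or segment--segment) meet in only finitely many points, so a Boolean combination merely subdivides the existing pieces at finitely many new endpoints and introduces no new curve type. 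I would record this as a short geometric observation and then use it freely.

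For the base case, the first guard $g_i$ handled by the algorithm has a preferential endpoint all of whose incident non-safe triangles are unsafe, so $\hat{U}_R^1(i)=\bigcup_{T_j\in\overline{\mathbb{T}_1^{safe}}(g_i)}T_j$ is a union of triangles and is therefore bounded by line segments, i.e.\ degenerate arcs. For the inductive step I would first treat $\hat{U}_R^1(i)$. By Line~5 of Algorithm~\ref{alg:orient}, for each $T_j\in\overline{\mathbb{T}_1^{safe}}(g_i)$ we have $R_j^1(i)=\bigcap_{g_k\in\mathbb{G}(T_j)\setminus\{g_i\}}\overline{R}_j^2(k)\cap T_j$. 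If $T_j$ is unsafe this reduces to $R_j^1(i)=T_j$, bounded by segments; if $T_j$ is regular, the regions $R_j^2(k)$ appearing here were computed earlier (this is precisely the $\mathbb{G}_{ready}$ / preferential-endpoint condition that permits $g_i$ to be processed), so by the induction hypothesis they are arc-bounded, and the closure fact shows that the intersection of their complements with the triangle $T_j$ is arc-bounded. Taking the finite union over $j$ gives that $\hat{U}_R^1(i)$, and hence its boundary $\delta(\hat{U}_R^1(i))=s_{int}^1(i)$, is composed of arcs of circle.

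Next I would treat $\hat{U}_R^2(i)$. Each $R_j^2(i)$ is given by~(\ref{eq:r_2}) as $\bigcap_{p\in\hat{U}_R^1(i)}\beta^{c}_{d_I^i}(p)\cap T_j^{free}$, which is exactly the complement of the geodesic $d_I^i$-offset of $\hat{U}_R^1(i)$ intersected with $T_j^{free}$. The relevant portion of the boundary of this complement is the locus of points at geodesic distance exactly $d_I^i$ from $\hat{U}_R^1(i)$, i.e.\ the expanded boundary $\gamma(\hat{U}_R^1(i))$; since $\delta(\hat{U}_R^1(i))$ was just shown to consist of arcs of circle, Lemma~\ref{lemma:boundary} applies and yields that $\gamma(\hat{U}_R^1(i))$ consists of arcs of circle. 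The remaining boundary of $R_j^2(i)$ comes from $T_j^{free}$, whose frontier is made up of edges of $T_j$ (segments) together with boundaries of regions already assigned inside $T_j$ (arc-bounded by the induction hypothesis). Invoking the closure fact once more, $R_j^2(i)$ is arc-bounded, and so is the finite union $\hat{U}_R^2(i)=\bigcup_j R_j^2(i)$. This completes the inductive step, and the lemma follows for every guard.

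The main obstacle is making the offset step rigorous in the presence of obstacles, i.e.\ arguing that $R_j^2(i)$ really is cut out by $\gamma(\hat{U}_R^1(i))$ rather than by some other, possibly non-circular, distance locus. This is where Lemma~\ref{lemma:boundary} does the heavy lifting: its proof shows that each point of the $d_I^i$-level set reaches $\hat{U}_R^1(i)$ along a shortest path whose last leg emanates either directly (no obstacle in between) or from a fixed reflex vertex, so the level set decomposes into arcs centered at reflex vertices of $P$ and at the centers of the arcs of $\delta(\hat{U}_R^1(i))$. A secondary point requiring care is the bookkeeping of the algorithm's processing order, so that the induction hypothesis is genuinely available---specifically, that whenever a regular triangle contributes to an $R_j^1(i)$ or to $T_j^{free}$, the competing regions $R_j^2(k)$ have already been constructed; this is guaranteed by the definition of $\mathbb{G}_{ready}$ and the preferential-endpoint selection in Algorithm~\ref{alg:allocation}. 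With these two points in hand, everything else reduces to routine finite Boolean combinations of arc-bounded sets.
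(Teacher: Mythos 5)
Your proposal is correct and follows essentially the same route as the paper's proof: induction on the order in which the algorithm computes the regions, with unions of unsafe triangles (line segments) as the base case, the closure of arc-bounded sets under Boolean operations for the $R_j^1(i)$ step, and Lemma~\ref{lemma:boundary} to handle the offset locus defining $R_j^2(i)$. Your explicit statement of the closure fact and the bookkeeping of $T_j^{free}$ are slightly more careful than the paper's write-up, but the argument is the same.
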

\begin{proof}	
	Assume that $\hat{U}_R^1(i)$ is bounded by arcs of circle. According to (\ref{eq:r_2}), $R_j^2(i)$ where $T_j \in \overline{\mathbb{T}_{2}^{safe}}(g_i)$, is the intersection of $T_j$ and the complement of the region enclosed by $\delta (\hat{U}_R^{1}(i))$ and $\gamma (\hat{U}_R^{1}(i))$. Hence the boundary of $R_j^2(i)$ consists of arcs of circle in $\mathbb{S}(\gamma(\hat{U}_R^1(i)))$ and the edges of $T_j$ (arcs of circle with center at infinity). Since the boundary of $R_j^2(i)$ consists of arcs of circle, it follows that $\delta (\hat{U}_R^2(i))$ also consists of arcs of circle from the definition of $\hat{U}_R^2(i)$. Consider the base case where $\hat{U}_R^{1}(i)$ is the union of unsafe triangles. $\delta (\hat{U}_R^{1}(i))$ is then a set of line segments, and the lemma holds. Consider the case of an unassigned guard $g_i$ such that all the guards $g_k \in \mathbb{G}(T_j)\backslash \{ g_i \}$ have their regions $\hat{U}_R^1(k)$ and $\hat{U}_R^2(k)$ defined for each $T_j \in \overline{\mathbb{T}_{\alpha}^{safe}}(g_i)$, where $\alpha = 1 \vee 2 $. We assume that the result holds for the regions of those guards $g_k \in \mathbb{G}(T_j)\backslash \{ g_i \}$. Since the regions of those guards are defined, then $\alpha=1$ and each $R_j^{1}(i)$ can be defined. Recall that $R_j^{1}(i) = \bigcap_{g_k \in \mathbb{G}(T_j) \backslash \{ g_i \} }{\overline{R}_j^2(k)} \cap T_j$, where $T_j \in \overline{\mathbb{T}_{1}^{safe}}(g_i)$. Each $R_j^{1}(i)$ is an intersection of regions $\hat{U}_R^{2}(k)$ which are bounded by arcs of circle. Hence, the intersection region is also bounded by arcs of circle, which trivially implies that $\delta (\hat{U}_R^{1}(i))$ also consists of arcs of circle. And according to the first part of this proof, it follows that $\delta (\hat{U}_R^2(i))$ consists of arcs of circle.
\end{proof}

\section{Generalization of Lemma \ref{lemma:6}}
\label{sec:appb}
\begin{lemma}
	\label{lemma:7}
	If Algorithm \ref{alg:arbitrary} is never called for a specific input, Algorithm \ref{alg:allocation} is complete for such an input.
\end{lemma}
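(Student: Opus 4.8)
The plan is to mirror the contradiction argument of Lemma~\ref{lemma:6}, replacing the use of the forest hypothesis with an argument showing that the orientation of $G^{\#}$ produced by Algorithm~\ref{alg:allocation} is acyclic whenever Algorithm~\ref{alg:arbitrary} is never invoked. First I would assume, for contradiction, that a feasible allocation exists but Algorithm~\ref{alg:allocation} fails on the given input. Exactly as in Lemma~\ref{lemma:6}, the failure produces a vertex $v_j$ all of whose incident edges are incoming, together with a nonempty region $R_{\emptyset}(j)$; comparing the algorithm's allocation to the feasible one, I would build the same backward sequence $v_{i_1}\xleftarrow{g_{j_1}}v_{i_2}\xleftarrow{g_{j_2}}\cdots$, in which each $T_{i_{k+1}}$ is a $v_1$-side triangle of $g_{j_k}$ and a $v_2$-side triangle of $g_{j_{k+1}}$. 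The terminating branch of this sequence lands on an unsafe triangle and yields a contradiction verbatim as before, so the whole argument reduces to ruling out the cyclic branch.

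The new ingredient is a monotonicity property of the allocation order. When Algorithm~\ref{alg:arbitrary} is never called, every guard is removed from $\mathbb{G}_{ready}$ and allocated through the normal update rule; let $\tau(g)$ denote the step at which $g$ is allocated. The entry condition for $\mathbb{G}_{ready}$ requires that, for each triangle $T$ on the preferential ($v_1$-side) endpoint of $g$, every other guard incident to $T$ has already been allocated, i.e. $\tau(g')<\tau(g)$ for all $g'\in\mathbb{G}(T)\setminus\{g\}$. I would use this to show that along each step of the backward sequence the allocation times strictly decrease: since $T_{i_{k+1}}$ is the $v_1$-side of $g_{j_k}$ and is covered by $g_{j_{k+1}}$ from its $v_2$-side, the entry condition forces $\tau(g_{j_{k+1}})<\tau(g_{j_k})$.

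With strict monotonicity in hand the cyclic branch is immediate: if some vertex repeats, $v_{i_a}=v_{i_b}$ with $a<b$, the guards $g_{j_a},\ldots,g_{j_{b-1}}$ close a directed cycle in $G^{\#}$, and the closing edge again identifies a triangle that is the $v_1$-side of one of these guards and the $v_2$-side of another, forcing $\tau(g_{j_a})<\tau(g_{j_{b-1}})$; but monotonicity around the cycle gives $\tau(g_{j_a})>\cdots>\tau(g_{j_{b-1}})$, a contradiction. Hence the backward sequence cannot cycle, it must terminate at an unsafe triangle, and the failure assumption collapses, establishing completeness. The main obstacle I anticipate is making the monotonicity step fully rigorous: I must justify that ``Algorithm~\ref{alg:arbitrary} never called'' genuinely forces every allocated guard to have satisfied the $\mathbb{G}_{ready}$ entry condition (including at initialization), and that the triangle shared by consecutive guards in the sequence is indeed the preferential endpoint of the later-allocated one, so that the entry condition applies in the required direction. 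This is precisely the structural observation already noted in the discussion of Figure~\ref{fig:ex_allocation1} --- that each vertex of $G^{\#}$ receives outgoing edges from at most one guard --- which I would invoke to pin down the orientation and close the argument.
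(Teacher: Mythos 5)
Your proposal is correct, but it replaces the paper's key step with a genuinely different argument. Both proofs reduce to the backward sequence $v_{i_1}\xleftarrow{g_{j_1}}v_{i_2}\xleftarrow{g_{j_2}}\cdots$ inherited from Lemma~\ref{lemma:6}, dispose of the terminating branch identically, and must then rule out the cyclic branch. The paper does this with an in/out-degree parity argument: using the structural fact that each vertex of $G^{\#}$ receives outgoing edges from at most one guard, it shows the first cycle vertex reached by the sequence has both of its cycle edges incoming while every other cycle vertex has exactly one incoming and one outgoing, so the orientation cannot close up unless some edge of the cycle was deleted --- which only Algorithm~\ref{alg:arbitrary} can do. You instead introduce the allocation timestamp $\tau$ and show it strictly decreases along the sequence, because the $\mathbb{G}_{ready}$ entry condition forces every other guard incident to a triangle on the preferential side of $g_{j_k}$ to have been allocated earlier. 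Both arguments ultimately rest on the same structural observation about the entry condition, but yours buys a cleaner finish: an infinite strictly decreasing sequence of timestamps over finitely many guards is already impossible, so you do not even need to exhibit a repeated vertex or reason about a cycle at all --- the sequence must terminate, and the terminating branch yields the contradiction. The two caveats you flag yourself are the right ones to close: (i) when Algorithm~\ref{alg:arbitrary} is never called, every allocated guard (including those allocated at initialization, for which the condition holds vacuously since $\mathbb{G}(T)\setminus\{g_i\}=\emptyset$ on the preferential side) entered via the $\mathbb{G}_{ready}$ rule; and (ii) the shared triangle $T_{i_{k+1}}$ lies in $\overline{\mathbb{T}_{1}^{safe}}(g_{j_k})$ by the construction in Lemma~\ref{lemma:6}, so the entry condition applies in the direction you need, with $g_{j_{k+1}}\neq g_{j_k}$ guaranteed by the sequence definition.
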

\begin{proof} The proof is by contradiction. We assume that Algorithm \ref{alg:allocation} does not find a feasible allocation but there exists one. Additionally, we assume that Algorithm \ref{alg:arbitrary} is never called during execution. In the proof of Lemma \ref{lemma:6}, we show the existence of a sequence $v_{i_1}\xleftarrow{g_{j_1}}v_{i_2}\xleftarrow{g_{j_2}}\cdots$ of vertices and guards in $G_1$ such that $g_{j_k} \in \mathbb{G}(T_{i_k}) \backslash \{ g_{j_{k-1}} \}$ and  $R_{\emptyset}^{j_k}(i_k) \subseteq R_{\emptyset}(i_k)$ such that $R_{\emptyset}^{j_k}(i_k) \subseteq \tilde{R}_{i_k}^2(j_k)$ and $R_{\emptyset}^{j_k}(i_k) \not\subseteq {R}_{i_k}^2(j_k)$. It is also stated that the sequence terminates if $\{g_{j_{k-1}}\} = \mathbb{G}(T_{i_k})$ in which case $T_{i_k}$ is an unsafe triangle. The problem of finding the existence of an allocation that works when Algorithm \ref{alg:allocation} fails is reduced to the problem of showing that the aforementioned sequence of vertices and guards in $G^{\#}$ does not terminate. Since the number of vertices in $G^{\#}$ is finite, the sequence is stuck in a cycle of vertices of $G^{\#}$. According to the definition of $G^{\#}$ there should be a cycle $\overline{C}$ in $G^{\#}$ involving the vertices and the guards of the cycle in the sequence. Now we prove that such a cycle $\overline{C}$ cannot exist unless Algorithm \ref{alg:arbitrary} was called. First, we show that for the first vertex of the cycle $\overline{C}$ that appears in the sequence, $v_{i_k}$, the pair of edges in $\overline{C}$ incident to it are both incoming edges. If $v_{i_k}$ is the vertex that corresponds to the triangle where Algorithm \ref{alg:arbitrary} determined that it could not find an allocation, the claim is trivially proved since all edges incident to $v_j$ are incoming edges. Otherwise, if $v_{i_k}$ is no such a vertex, then it corresponds to a vertex where all the incident edges are incoming edges excepting the edges that correspond to guard $g_{j_{k-1}}$. Notice that the sequence $v_{i_1}\xleftarrow{g_{j_1}}v_{i_2}\xleftarrow{g_{j_2}}\cdots$ follows a direction opposite to the orientation of the edges. Therefore, when the sequence is in $v_{i_{k}}$, the next guard cannot be $g_{j_{k-1}}$. It implies that both edges of $\overline{C}$ incident to $v_{i_{k}}$ do not correspond to $g_{j_{k-1}}$, so they are by definition incoming edges. The sequence then continues with a different guard $g_{j_{k}}$ followed by a vertex $v_{i_{k+1}}$. By definition, the edge in $\overline{C}$ incident to $v_{i_{k+1}}$ that corresponds to $g_{j_{k}}$ is an outgoing edge, so the next edge in $\overline{C}$ corresponds to a different guard $g_{j_{k+1}}$ and its corresponding edge in $\overline{C}$ is an incoming edge of $v_{i_{k+1}}$. Clearly, every vertex in $\overline{C}$ has an outgoing and an incoming edge. Since $\overline{C}$ is a cycle, vertex $v_{i_{k}}$ is eventually reached. However, it does not have an incoming and an outgoing edge in $\overline{C}$ (both are incoming edges), which is not possible. Therefore, such a cycle does not exist in $G^{\#}$. This contradicts the initial definition of $G^{\#}$. Thus, at least one edge of $\overline{C}$ was removed during the execution of Algorithm \ref{alg:allocation}. This implies that Line $8$ of Algorithm \ref{alg:arbitrary} was reached, which is impossible since Algorithm \ref{alg:arbitrary} was never called. The result follows.	
\end{proof}

\newpage

\section{List of Variables}
\label{sec:appc}
\begin{table}[H]
	\begin{tabular}{L{1.8cm}|L{6cm}}
		\midrule
		\textbf{Variable} & \textbf{Definition}\\
		\midrule
		$P$ & Polygon\\
		$n$ & Number of vertices of $P$\\
		$\bar{v}_e$ & Maximum speed of intruder\\
		$x_I$ & Location of intruder\\
		$t$ & Time\\
		$\mathbb{G}$ & Set of guards\\ 
		$\bar{v}_g$ & Maximum speed of guards\\
		$r$ & $\bar{v}_g/\bar{v}_e$\\
		$G$ &  Triangulation graph\\
		$\mathbb{V}(G)$ & Vertex set of graph $G$\\
		$\mathbb{E}(G)$ & Edge set of graph $G$\\
		$\mathbb{T}(G)$ & Faces of graph $G$\\
		$\mathbb{H}$ & Set of diagonals\\
		$g_i$ & Guard $i$\\
		$l_i$ & Length of $h_i$\\
		$v_{\alpha}(i)$ & Endpoint $\alpha$ of $h_i$\\
		$\mathbb{G}(T)$ & Set of guards incident to $T$\\
		$d(\cdot,\cdot)$ & Distance function\\
		$A(g_i)$ & Non-safe triangles allocated to $g_i$\\
		$A(\mathbb{G})$ & Allocation of all non-safe triangles\\
		$\mathcal{A}$ & Set of all $A(\mathbb{G})$\\
		$\overline{\mathbb{T}_{\alpha}^{safe}}(g_i)$ & Set of non-safe triangles incident to $v_{\alpha}(i)$\\
		$\overline{\mathbb{T}^{safe}}(G)$ & Set of non-safe triangles\\
		$G^{\#}$ & Guard adjacency graph\\
		$e_{j,k}(g_i)$ & Edge of $G^{\#}$\\
		$w_{j,k}(g_i)$ & Weight of $e_{j,k}(g_i)$\\
		$S_{rep}$ & Set of representatives of $\bigcup_{j\in\{1,\ldots, |\mathbb{V}(G^{\#})|\}} S_g(T_j)$\\	
		$R_j^{\alpha}(i)$ & Region inside $T_j$ assigned to $g_i$ incident to $v_{\alpha}(i)$\\
$S_R^{\alpha}$ & Set of regions $R_j^{\alpha}(i)$\\	
$\mathbb{R}^{alloc}(g_i)$ & Region allocated to $g_i$\\
$\hat{U}_R^{\alpha}(i)$ & Union of regions $R_j^{\alpha}(i)$\\
$c(\mathbb{R}^{alloc}(g_i))$ & Cost of $\mathbb{R}^{alloc}(g_i)$ \\
$d_I^i$ & $l_i/r$\\
$T_k^{free}$ & Region inside $T_k$ that has not been assigned\\
$R_{\emptyset}(j)$ & Region inside $T_j$ that cannot be assigned\\
$\mathbb{G}_{ready}$ & Set of guards ready to be allocated\\
$\mathbb{G}_{alloc}$ & Set of allocated guards\\
$\mathbb{G}_{\Omega}$ & Set of guards that cannot be allocated\\
$\mathbb{V}_{\Omega}$ & Set of vertices corresponding to non-safe triangles that can be covered by guards in $\mathbb{G}_{\Omega}$\\
$s_{int}^1(i)$ & Internal critical curve\\
$s_{ext}^1(i)$ & External critical curve\\
$C_1(i)$ & Critical region\\
$\hat{C}_1(i)$ & Extended critical region\\
$\mathbb{I}$ & Set of intruders
\end{tabular}
\caption{List of frequently used variables and their meaning.}
\end{table}

\end{document}